\DeclarePairedDelimiter\ceil{\lceil}{\rceil}
\newcommand{\etal}{\textit{et al.~}}
\newtheorem{theorem}{Theorem}[section]
\newtheorem{lemma}{Lemma}[section]
\begin{document}

\title{Des-q: a quantum algorithm to provably speedup retraining of decision trees}

\author{Niraj Kumar$^{\dagger*}$}
\affiliation{Global Technology Applied Research, JPMorgan Chase, New York, NY 10017 USA}
\email{$\dagger$niraj.x7.kumar@jpmchase.com.}
\thanks{$*$: NK and RY contributed equally in this work.}
\author{Romina Yalovetzky$^{*}$}
\affiliation{Global Technology Applied Research, JPMorgan Chase, New York, NY 10017 USA}
\author{Changhao Li}
\affiliation{Global Technology Applied Research, JPMorgan Chase, New York, NY 10017 USA}
\author{Pierre Minssen}
\affiliation{Global Technology Applied Research, JPMorgan Chase, New York, NY 10017 USA}
\author{Marco Pistoia}
\affiliation{Global Technology Applied Research, JPMorgan Chase, New York, NY 10017 USA}
\maketitle

\begin{abstract}
Decision trees are widely adopted machine learning models due to their simplicity and explainability. However, as training data size grows, standard methods become increasingly slow, scaling polynomially with the number of training examples. In this work, we introduce Des-q, a novel quantum algorithm to construct and retrain decision trees for regression and binary classification tasks. Assuming the data stream produces small, periodic increments of new training examples, Des-q significantly reduces the tree retraining time. Des-q achieves a logarithmic complexity in the combined total number of old and new examples, even accounting for the time needed to load the new samples into quantum-accessible memory. Our approach to grow the tree from any given node involves performing piecewise linear splits to generate multiple hyperplanes, thus partitioning the input feature space into distinct regions. To determine the suitable anchor points for these splits, we develop an efficient quantum-supervised clustering method, building upon the q-means algorithm introduced by Kerenidis \etal We benchmark the simulated version of Des-q against the state-of-the-art classical methods on multiple data sets and observe that our algorithm exhibits similar performance to the state-of-the-art decision trees while significantly speeding up the periodic tree retraining.
\end{abstract}

\section{Introduction}
We are currently in the era of \emph{big-data} where organizations collect vast amounts of daily user data \cite{sagiroglu2013big}, be it with retail chains logging transactions, telecommunication companies connecting millions of calls and messages, or large banks processing millions of ATM and credit card transactions daily \cite{zikopoulos2011understanding, smith2012big}. To tackle the data surge, entities aim to build efficient, accurate machine learning models and swiftly retrain them to avoid performance degradation. As the Internet expands, data volume grows, challenging current methods.

Despite the success of deep neural network models \cite{samek2021explaining} for these tasks, decision tree-based models for regression and classification remain competitive and widely adopted \cite{zhou2017deep, chen2019robust, xu2005decision}. A decision tree is a logical model comprising a hierarchical set of rules for predicting target labels based on training examples. They are favored for their model explainability, low parameter requirements, noise robustness, and efficiency with large-scale datasets, offering a cost-effective alternative to other supervised learning models \cite{sharma2016survey}. Despite its practical success, the construction of an optimal decision tree from a given training dataset is considered to be a hard problem \cite{hancock1996lower, hyafil1976hyafil}. Consequently, heuristics such as greedy methods have long been used to construct trees. These top-down or bottom-up recursive inducers seek to find the best local feature and threshold value to partition the data. Popular methods following this approach include CART \cite{breiman2017classification,Lewis2000AnIT,buntine1992learning}, ID3 \cite{quinlan1986induction}, and C4.5 \cite{quinlan2014c4}. Building a decision tree model using these methods for a training dataset with $N$ examples and $d$ features incurs a time complexity that scales polynomially in both $N$ and $d$.

In practical applications, decision trees require periodic updates with batches of new labeled data to prevent model deterioration over time. These updates fall into two categories: \emph{incremental decision tree} methods, which update the tree using only new data \cite{utgoff1989incremental, domingos2000mining, manapragada2018extremely}, and \emph{tree retraining} methods, which rebuild the tree using both old and new data. Incremental methods offer quick updates but struggle with concept drift, where model performance degrades as data distributions change \cite{tsymbal2004problem}. Therefore, retraining the entire tree is often more effective, though existing methods make it time-consuming and resource-intensive, posing challenges in big-data environments.

Parallel to classical development, efforts have aimed to build quantum decision trees using techniques like Grover's search algorithm for potential speedups \cite{lu2014quantum, Khadiev2019arxiv, Beigi2020Quantum, Heese2022Quantum}. Despite these attempts, efficient tree construction and retraining remain challenging, achieving at best a quadratic speedup in the number of features $d$ over classical methods. However, these methods do not offer speedups in the number of training examples $N$. Since in most practical cases the number of training examples greatly exceeds the number of features, a significant speedup is only meaningful if achieved over $N$.
This limitation becomes more pronounced with periodic data accumulation, where timely model retraining is essential to prevent performance degradation. Therefore, a critical question arises: \emph{once the decision tree model is built and put online, is it possible to achieve a poly-logarithmic in $N$ run-time dependence in the decision-tree retraining?}\footnote{This implies an exponential speedup in tree retraining over classical greedy methods, although further improvements could occur in classical methods from quantum-inspired classical techniques \cite{tang2019quantum, doriguello2023you}.} Achieving this goal would ensure the model remains consistently online, with rapid retraining capabilities and sustained performance quality. 

In this work, we answer this question affirmatively by proposing a novel quantum algorithm called Des-q for decision tree construction and retraining for regression and binary classification. The initial tree construction takes time polynomial in $N$ and $d$ solely due to the time to load the initial data into quantum-accessible-data structure\footnote{By quantum-accessible data structure, we mean a classical data structure that allows accessing quantum states of the input data in poly-logarithmic time relative to the input size. Constructing this data structure requires polynomial time relative to the input size.}. In this work, we utilize the KP-tree data structure \cite{kerenidis2016quantum}, which, once constructed, allows for querying data as quantum superposition states in polylogarithmic time relative to $N$ and $d$. Moreover, this permits the rest of the algorithmic steps of Des-q to also run in polylogarithmic time. Although tree construction with Des-q is initially slow due to KP-tree creation for data with $N$ examples, subsequent retrains with small batches of new data become extremely fast. This process simply involves updating the KP-tree with the new small batch of data while the remaining algorithmic steps are executed in poly-logarithmic time relative to $N.$

The rest of the paper is structured as follows. We review the related works in Sec~\ref{sec:related_work}, while Sec~\ref{sec:motivation} provides a key motivation for Des-q including a high-level sketch of its methodology. Our key results are summarized in Sec~\ref{sec:sum_res}. Sec~\ref{sec:qds} provides the detailed algorithmic steps along with key theorems for tree construction with Des-q, while Sec~\ref{sec:periodic_update} provides the steps for tree retraining. We provide the benchmark for simulated Des-q in Sec~\ref{sec:numerics} and finally conclude in Sec~\ref{sec:conclusion}. We also highlight some ingredients we utilize from existing quantum algorithms in Appendix~\ref{sec:ingre}.

\subsection{Terminology used in the paper} \label{sec:term}

\begin{enumerate}
    \item \emph{Root node}: no incoming edge, zero or more outgoing edges; \emph{internal node}: one incoming edge, two (or more) outgoing edges; \emph{leaf node}: each leaf node is assigned a class label; \emph{parent and child node}: if a node is split, we refer to that given node as the parent node, and the resulting nodes are called child nodes, respectively
    \item Throughout this work, we refer to
    \begin{itemize}
        \item $N$: The number of training examples in the dataset
        \item $d$: The number of attributes/features in the dataset 
        \item $D$: Maximum Tree height/depth
        \item $k$: Number of clusters generated at each clustering step
        \item $K$: Maximum allowed number of iterations in each clustering step
        \item $N_{\text{new}}$: Number of new training examples for periodic model retrain
    \end{itemize}
    \item The training dataset is referred as $\textsf{Data} = \{X, Y\}$, where $X = [x_1,\cdots,x_N]^T \in \mathbb{R}^{N \times d}$ and $Y = [y_1,\cdots, y_N]^{T}$. $X$ contains the $N$ training examples $x_i \in \mathbb{R}^d$, i.e., consisting of $d$ features with only numerical values, and $Y$ is the target label with each $y_i \in \mathbb{R}$ for the task of regression, while $y_i \in \mathcal{M} = \{0, 1\}$ for the task of binary classification. 
\end{enumerate}

\section{Related work} \label{sec:related_work} 

Beginning with classical decision trees, the state-of-the-art decision tree methods to handle both classification and regression tasks include CART \cite{breiman2017classification,Lewis2000AnIT}, ID3 \cite{quinlan1986induction} and C4.5 \cite{quinlan2014c4}. It is worth noting that, decision trees built using these methods for a dataset with $N$ instances and $d$ features typically involve a computational complexity of $O(\emph{poly} (Nd))$. Beyond heuristic methods, approaches to train optimal decision trees using mixed-integer programs (MIP) have been proposed \cite{zhu2020scalable, d2024margin}. Alternative techniques for constructing decision trees involve utilizing clustering mechanisms have also been proposed \cite{liu2020novel, berzal2004building}. 

On the quantum approaches to construct tree, the work by Lu {\em et~al.}\cite{Lu2013QIP} encodes the training data into quantum states and leverages Grover's search algorithm to build the tree. In the work by Khadiev \etal\cite{Khadiev2019arxiv}, the classical decision tree induction algorithm $C5.0$ is extended to a quantum version using Grover search-based algorithm. The quantum algorithm has a run-time complexity that is nearly quadratic better than the classical algorithm in terms of feature number $d$. Heese \etal \cite{Heese2022Quantum} introduce a quantum representation of binary classification trees called \textit{Q-tree}. The tree is constructed through probabilistic traversal via quantum circuit measurements.

We also remark on the two recent papers on improving the existing $q$-means clustering algorithm \cite{kerenidis2019q} whose modified version is a component of our Des-q algorithm. The first paper by Jaiswal \cite{jaiswal2023quantum} where the author provides a quantum algorithm for $k$-means problem with provable convergence guarantees, and Doriguello \etal \cite{doriguello2023you} who improve upon the existing $q$-means algorithm while still being heuristic. 

\section{Motivation of our method} \label{sec:motivation}

As highlighted by Ho \etal \cite{ho1998random}, numerous approaches exist for building decision trees from training examples. Most methods use linear splits at each internal node to create branches, partitioning examples with one or more hyperplanes into distinct regions of the feature space. These linear splits fall into three categories: \emph{axis-parallel linear split, oblique linear split, and piecewise linear split}.

Axis-parallel linear splitting divides data based on a specific feature $l$ and a threshold $\theta_l$, determined by minimizing an impurity metric like Gini impurity, entropy (for classification), or label variance (for regression). Subsequently, each example $x_i \in \mathbb{R}^d$ , $i \in [N]$ is assigned to a branch based on whether its feature value $x_{il}$ exceeds $\theta_l$. If $x_{il} \leq \theta_l$, the example goes to one branch; otherwise, it goes to the other. Popular decision tree methods like ID3, C4.5, and CART use this splitting mechanism.

The oblique linear split method divides data by finding a hyperplane in the feature space, not necessarily parallel to any axis. Each example is assigned based on which side of the hyperplane it lies on, determined by the condition
$$ \sum_{l=1}^d a_l x_{il} \leq \theta,$$
where $a_l$ is the coefficient of the $l$-th feature and $\theta$ is the threshold. This method often produces lower-depth trees with better generalization but finding the optimal hyperplane is challenging. Proposed solutions include using simulated annealing or hill-climbing to approximate near-optimal hyperplanes \cite{buntine1992learning, bucy1993decision}.  

The piecewise linear split method extends the oblique linear split by using multiple hyperplanes to create $k(\geq 2)$ branches or regions. Each hyperplane is defined by anchor points, and examples are assigned based on their proximity to these points. Although selecting these hyperplanes is more complex than the oblique method, it can significantly reduce tree depth and improve generalization, especially when multiple features are strongly correlated.

In particular, Liu \etal \cite{liu2020novel} proposed using the piecewise linear split technique to create a Multi-Split Decision Tree (MSDT) for multi-class classification. In MSDT, each branch acts as a cluster with an anchor point as its centroid. They first calculate feature weights using the RELIEF-F algorithm \cite{kononenko1994estimating} to determine feature relevance. Then, they use weighted $k$-means \cite{lloyd1982least} to cluster examples, forming $k$ clusters. This weighted approach ensures nodes generated by the supervised $k$-means minimize node class impurity by giving more weight to features strongly correlated with the target label, aligning with the decision tree's objective to minimize class impurity or variance. 

Our approach to constructing decision trees builds on a similar concept, involving piecewise linear splits using a supervised clustering quantum algorithm.

\begin{figure*}[t]%
    \centering
    \includegraphics[width=16cm]{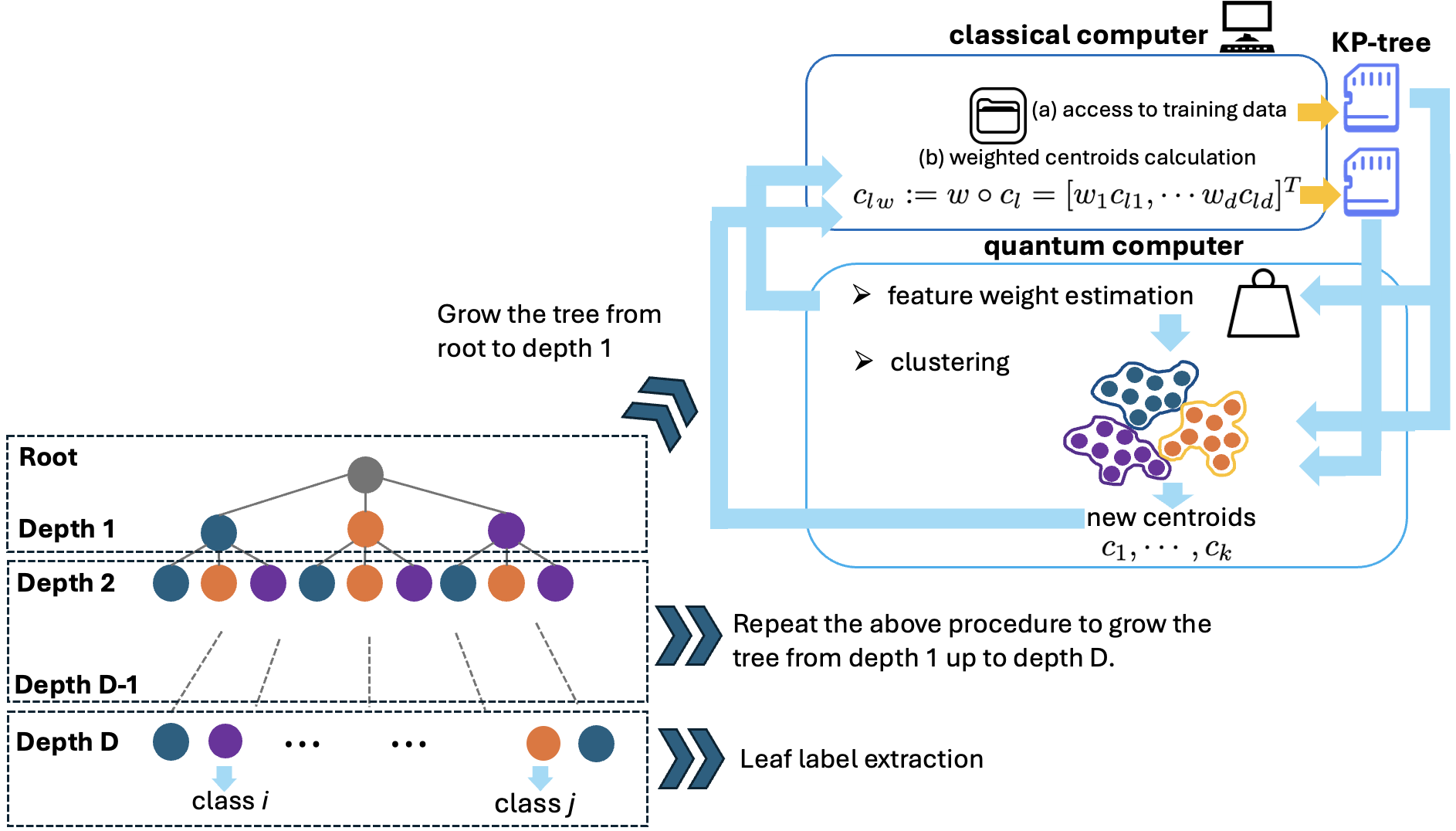} 
    \caption{Diagram of Des-Q. We highlight the procedure to construct a decision tree from root node to depth $D$. The yellow arrows indicate communication between classical components whereas the light blue arrows indicate between classical and quantum components. To grow the tree from root to depth 1, we load the data into the KP-tree data structure from which the samples are queried in superposition to the quantum computer, and the feature weights are estimated. Subsequently, one performs weighted (supervised) clustering to generate $k$ clusters corresponding to depth 1. The above procedure is repeated to grow the tree up to depth $D$. Finally, we perform the leaf label extraction where we assign classes to the leaf nodes. } 
    \label{fig:diagram}%
\end{figure*}

\section{Summary of results} \label{sec:sum_res}

We introduce the Des-q algorithm for tree construction in Algorithm~\ref{alg:desq-cons} along with its illustration in Figure~\ref{fig:diagram}. Given the dataset $\textsf{Data} = \{X, Y\}$ introduced in Sec~\ref{sec:term}, we use a novel supervised quantum clustering algorithm that leverages the unsupervised quantum clustering algorithn named $q$-means introduced by Kerenidis \etal \cite{kerenidis2019q}. With this supervised quantum clustering algorithm, we generate $k$ nodes, or clusters, from the root node. For this, we pick the $k$ initial centroids $\{c_1^0, \cdots c_k^0\} \in \mathbb{R}^d$. Here the superscript denotes the centroids at the $0$-th iteration. At each $t$-th iteration, for each training sample $\{x_i\}_{i \in [N]}$, its cluster label is assigned. The criterion utilized leverages the idea proposed by Liu \etal \cite{liu2020novel} mentioned before where the cluster assignment is based on the weighted Euclidean distance that incorporates the feature weights $w_j \in [0, 1]$ of each feature vector $j \in [d]$ that measure the statistical significance of each feature vector of $X$ in predicting the target label vector $Y$. In Des-q we utilize the normalized absolute value of Pearson correlation coefficient \cite{pearson1895vii} for regression and the point-biserial correlation \cite{glass1970statistical}, which is a special case of Pearson correlation for binary label datasets, for classification. These feature weights are estimated with the proposed quantum methods. By doing this, the cluster labels are assigned as


\begin{equation}
    \emph{cluster-label}_i^t = \text{arg} \underset{1\leq l \leq k}{\text{min}} \|x_i - c_l^t\|_w, 
\label{ec_distance_algo}
\end{equation}
where $\|x_i - c_l^t\|_w = \sqrt{\sum_{j=1}^d w_j \cdot (x_{ij} - c_{lj}^t)^2}$ is the weighted distance\footnote{Note that this is different to unsupervised $k$-means where $\|x_i - c_l\| = \sqrt{\sum_{j=1}^d (x_{ij} - c_{lj})^2}$ is used for cluster label assignment.}. 

By utilizing this weighted distance, we are making the clustering algorithm supervised as it incorporates some information from the labels. We will show in Section 7 how incorporating these weights improved  the accuracy performance. After all the training examples are partitioned in the $k$ disjoint clusters $\mathcal{C}_l^t$, $l \in [k]$, each cluster centroid $c_l^t$ is updated as the mean of the examples in the cluster. The above two steps are repeated until some stopping criteria. At this point, the first depth of the decision tree has been constructed, where the $k$ nodes correspond to $k$ clusters and these nodes are represented by their centroids. The tree keeps growing using the mentioned supervised quantum clustering technique until it reaches a maximum depth $D$. Next, we need to extract the label information from each of the leaf nodes $l \in [k^D]$, which are characterized by the clusters $\{\mathcal{C}_{l}\}$. Our proposal is to use Des-q to train (and retrain) the decision tree and then perform inference on the classical computer. To perform classical inference on an unseen sample, the algorithm traverses the tree by selecting nodes with the smallest weighted distance. Once it reaches a leaf node, it uses the label information extracted from that leaf node to assign a label.




We provide a broad overview of our Des-q algorithm for tree construction and retraining and its steps in the subsections below. Subsequently, in Sec~\ref{sec:qds}, we delve into each of these components, offering detailed insights into their algorithmic steps and discussion on their algorithmic complexities.

\begin{figure*}[ht!]
\begin{minipage}{\linewidth}
\begin{algorithm}[H]
 \caption{Tree construction with Des-q} \label{alg:desq-cons}
\begin{algorithmic}
\Require $\textsf{Data} = \{X, Y\}$, where $X = [x_1,\cdots,x_N]^T \in \mathbb{R}^{N \times d}$ and $Y = [y_1,\cdots, y_N]^{T}$. Each $y_i \in \mathbb{R}$ for regression task, and $y_i \in \mathcal{M} = \{0, 1\}$ for binary classification. $X$ can alternatively be written as $X = [x^{(1)},\cdots, x^{(d)}]$ where $x^{(j)}$ are feature vectors of length $N$; $D$: maximum tree depth;  Error parameters $\epsilon_1$ for feature weight estimation, and $\epsilon_2$ for errors in clustering; Precision parameter $\delta$ for quantum clustering; $k$: number of clusters; $K$: maximum number of clustering iterations. 
\Ensure The structure of the decision tree, the centroids $\{c_{\text{node}} \in \mathbb{R}^d\}$ corresponding to each internal node in the tree, and the leaf label prediction.
\State \textbf{Step 1: Load $X$ and $Y$ using quantum-accessible-data structure \cite{kerenidis2016quantum}.}
\begin{flushleft}
$\text{$X$ row access: } \tilde{U}_x: \ket{i}\ket{0} \rightarrow  \ket{i} \otimes  \frac{1}{\|x_i\|} \sum_{j=1}^d x_{ij} \ket{j} = \ket{i}\ket{x_i}$ \\
$\text{$Y$ column access: } \tilde{U}_y: \ket{i}\ket{0} \rightarrow \ket{i} \otimes \frac{1}{\|y\|} \sum_{i=1}^N y_{i} \ket{i}$ = $\ket{i}\otimes\ket{Y}$\\
$\text{$X$ column access: } \tilde{U}_{x^{(j)}}: \ket{0}\ket{j} \rightarrow   \frac{1}{\|x^{(j)}\|} \sum_{i=1}^N x_{ij} \ket{i} \otimes \ket{j} = |x^{(j)}\rangle\otimes \ket{j}$ 
\end{flushleft}
\State \textbf{Step 2: SWAP test for feature weight estimation.}
Use SWAP test followed by amplitude amplification between $|x^{(j)}\rangle$ and $|Y\rangle$ states to estimate the Pearson correlation coefficient between each feature $x^{(j)}$ and label vector $Y$
\begin{center}
    $w_j =  \frac{\sum_{i=1}^N(x_{ij} - \mu_j)(y_i - \mu_y)}{\sqrt{\sum_{i=1}^N(x_{ij} - \mu_j)^2}\sqrt{\sum_{i=1}^N(y_{i} - \mu_y)^2}},$
\end{center}
where $\mu_j = \frac{1}{N} \sum_{i=1}^N x_{ij}$, $\mu_y = \frac{1}{N} \sum_{i=1}^N y_{i}$ and $w := \{w_1,\cdots, w_d\}$ is referred as feature weight vector.
\State \textbf{Step 3: Quantum supervised clustering to generate depth-1 tree.} \\
\textbf{while $t \leq K$:}\\

3.1: Given $k$ centroids $c_1, \cdots, c_k$ at iteration $t$. Perform Hadamard multiplication between $w$ and each $c_j$ and store the resulting weighted centroid in quantum-accessible-data structure.  \\

3.2: \textbf{Perform weighted centroid distance estimation.} Query $X$ and weighted centroid as quantum states to perform the mapping
\begin{center}
    $\frac{1}{\sqrt{N}} \sum_{i=1}^N \ket{i} \otimes_{j \in [k]}\ket{j}\ket{0} \rightarrow \frac{1}{\sqrt{N}} \sum_{i=1}^N \ket{i} \otimes_{j \in [k]}\ket{j}|I_w(x_i, c_j)\rangle,$
\end{center}
where $I_w(x_i, c_j)$ is the estimated weighted inner product between $x_i$ and the centroid $c_j$. \\

3.3: \textbf{Assign the $X$ examples to clusters}. Find the maximum inner product value $I_w(x_i, c_j)_{j \in [k]}$ for each example $x_i$ and create the superposition of all points and their new assigned \emph{centroid labels} (based on maximum inner product) as
\begin{center}
    $\frac{1}{\sqrt{N}} \sum_{i=1}^N \ket{i} \otimes_{j \in [k]}\ket{j}|I_w(x_i, c_j)\rangle \rightarrow \frac{1}{\sqrt{N}} \sum_{i=1}^N \ket{i} \ket{\emph{centroid-label}_i}$, \\
    
\end{center}
where $\emph{centroid-label}_i \in [k]$ is the centroid label assigned to $\ket{x_i}$.\\

3.4: \textbf{Update centroids}. Measure the second register of the previous state to obtain the characteristic vector state as
\begin{center}
    $\ket{\xi_j} = \frac{1}{\sqrt{|\mathcal{C}_j|}}\sum_{i \in \mathcal{C}_j} \ket{i} \hspace{2mm} \forall j \in [k].$
\end{center}
The updated centroid vector $c_j$ is simply the average of the values assigned in the cluster $j$. In order to do this, estimate the inner product of the feature vector $|x^{(l)}\rangle$ and the characteristic vector $|\xi_j\rangle$ to estimate the $l^{th}$ component of the $j^{th}$ updated centroid vector. This is repeated for all $j \in [k], l \in [d]$.\\
\textbf{end while}

\State \textbf{Step 4: Grow the tree to depth D}. The clusters generated in the previous step correspond to the depth 1 of the tree. To further grow the tree, for each current node, generate a superposition of examples in the node cluster. Perform supervised clustering on the cluster examples to grow the tree until it reaches a maximum set depth $D$. 
\State \textbf{Step 5: Leaf label assignment.} For each leaf node/cluster $j \in [k^D]$, perform the fidelity estimation between the state $|Y\rangle$ and $|\xi_j\rangle$ to obtain the mean label value $\emph{label}_j = \frac{1}{\sqrt{|\mathcal{C_j}|}}\sum_{i \in \mathcal{C}_j}y_i$. For regression, the mean label value is the leaf label. For the binary classification, if the mean is less than 0.5, the leaf is assigned a label 0, and 1 otherwise. \\

\end{algorithmic}
\end{algorithm}
\end{minipage}
\end{figure*}

\subsection{Des-q for tree construction}
We now explain each step of Des-q and their time complexities.\\
\noindent \textbf{Step 1} [\emph{Loading classical data}]: We consider the dataset $\textsf{Data}$ as introduced in Section~\ref{sec:term}. This dataset is stored within a specialized tree-like data structure known as the \emph{KP-tree} \cite{kerenidis2016quantum}. Leveraging quantum superposition access to this data structure enables us to prepare \emph{amplitude-encoded states} for the training examples, feature vectors, and label vector\footnote{We query the KP-trees via quantum superposition of the indices and get the quantum superposition state corresponding to the index and the amplitude encoded state corresponding to the example stored in that indexed KP-tree.}. It is worth noting that constructing the KP-tree initially requires a time of $T_{kp} = \tilde{\mathcal{O}}(Nd)$. Once constructed, querying the quantum states associated with the example vectors (in superposition) can be done in $\mathcal{O}(\emph{poly}\log (Nd))$ time, while the query time for the superposition state of target labels is $\mathcal{O}(\emph{poly}\log (N))$.

\noindent \textbf{Step 2} [\emph{Feature weight estimation}]: We quantumly estimate the strength of each feature vector in predicting the label value using the Pearson correlation coefficient and refer to this as feature weight. The Pearson correlation can be applied to assess the strength of features not only when the labels are numerical, as in regression, but also in the context of binary classification by applying the special case of Pearson correlation known as point-biseral. For each feature, we perform this estimation efficiently with amplitude-encoded quantum states using the Frobenius Test circuit \cite{kerenidis2020classification} in combination with the amplitude amplification step \cite{brassard2002quantum}, where the circuit runtime scales as $\mathcal{O}(\log N/\epsilon)$, where $\epsilon$ is the estimation error.

\noindent \textbf{Step 3} [\emph{Supervised Clustering}]: We adapt the efficient (in $N$) unsupervised quantum clustering technique of Kerenidis \etal \cite{kerenidis2019q} called \emph{q-means} for the purpose of constructing decision trees. This adaptation involves performing clustering on the input dataset using weighted distance estimation. The central idea is to pick $k$ initial centroids, perform the Hadamard multiplication between the centroids and the weight vector to create weighted centroids, and finally store them in the KP-tree to access them as amplitude-encoded states\footnote{Once the data is loaded in KP-tree, the clustering process itself is efficient in $N$, meaning its runtime depends poly$-$logarithmically in the number of examples ($N$).}. 

Subsequently, we perform coherent quantum operations to estimate the inner product of all the data points with each of the $k$ weighted centroids. We call this procedure weighted inner product estimation. 
This is followed by quantumly finding the centroid having maximum inner product value with the example vector in order to assign the \emph{centroid-label} to that example. This corresponds to the closest centroid for that example. This creates a superposition state of the index of the example and their associated centroid-label $\in \{1,\cdots,k\}$ which indicates which cluster have they been assigned to. Next, we proceed to update the $k$ centroid states.  \\ 
\\
\\
\\

Measurement operations on the second register of the superposition state result in obtaining the $k$ \emph{characteristic states}, where the $j^{th}$ characteristic state is a superposition of the indices corresponding to the examples assigned to the $j^{th}$ cluster. The updated centroid is the average of the values of the elements in the cluster. This averaging is done by performing quantum multiplication of the transposed data matrix with the characteristic vectors. In contrast to the approach in \cite{kerenidis2019q}, where direct matrix multiplication is followed by quantum state tomography to obtain the updated centroid quantum state, our method involves estimating centroid vectors directly through inner product estimation between the weighted data column quantum states and the characteristic vector states. The clustering procedure is then repeated either a maximum of $K$ times or until convergence is achieved. This process results in the creation of the depth 1 of the decision tree with $k$ child nodes. The computational time for this operation is $\mathcal{O}(\emph{poly} \log (Nd))$, with other terms omitted for simplicity.

\noindent \textbf{Step 4 and 5} [\emph{Tree Growth and Majority Label}]: The growth of the tree continues through iterative clustering of the elements within the nodes at the previous depth, persisting until the desired tree depth $D$ is achieved. The overall runtime for constructing the decision tree up to this stage is $\mathcal{O}(\emph{poly} \log (Nd))$.

Subsequently, labels are assigned to each leaf node. For regression tasks, the label corresponds to the mean of the label values of the examples within the cluster. In binary classification, the label is set to 0 if the mean value is less than 0.5 and 1 otherwise. This is performed by quantumly estimating the mean value where the mean value can be obtained from the inner product of the characteristic vector state corresponding to the leaf node and the label vector state $|Y\rangle$. We show that the estimation of leaf label can be done efficiently with complexity $\mathcal{O}(\emph{poly} \log (Nd))$. 

The overall time complexity for constructing the initial decision tree is the sum of two components: the initial time required for loading the examples into the KP-tree data structure and the time spent on the actual tree construction. It becomes evident that while the tree-building process itself takes $\mathcal{O}(\emph{poly} \log(Nd))$ time, it is the initial data loading into the KP-tree structure that causes the initial tree construction to scale $\tilde{\mathcal{O}}(Nd)$. However, as elaborated in the subsequent section, tree retraining, a central task in supervised learning models, proves to be efficient in terms of $N$ since we retrain with new batches containing significantly fewer training samples.

\subsection{Des-q for tree retraining}

Once the tree has been initially constructed and deployed, it needs to be periodically retrained to account for new batch of training data of size $N_{\text{new}}$. Frequently, the quantity of new training examples is substantially smaller than the initial training set size, i.e., $N_{\text{new}} \ll N$. This periodic step is of crucial importance for the model, serving to prevent performance deterioration and maintain consistent model performance.

The retraining process begins by querying quantum superposition states from the KP-tree, encompassing the entire set of $N_{\text{tot}} = N+N_{\text{new}}$ training examples. Since the original $N$ examples were already preloaded into the KP-tree during the tree construction part, one is only required to load the new examples into the KP-tree, a step that consumes time $\tilde{\mathcal{O}}(N_{\text{new}}d)$.  Subsequently, as highlighted in Algorithm~\ref{alg:desq-retrain}, we replicate all the stages of the Algorithm \ref{alg:desq-cons} designed for constructing the tree, but where we instead query superposition states over the set of $N_{\text{tot}}$ examples. Upon loading the new training data into the data structure, the Des-q algorithm for retraining has a runtime of $\mathcal{O}(\emph{poly} \log(N_{\text{tot}}d)) \approx \mathcal{O}(\emph{poly} \log(Nd))$. Thus we see that the total time to build the tree is the time to load the new examples into the KP-tree followed by the time to run the algorithm $$\tilde{\mathcal{O}}(N_{\text{new}}d) + \mathcal{O}(\emph{poly} \log(Nd)).$$

Under the small batch updates criterion, we see that our algorithm for tree retraining scales poly-logarithmically in $N$, thus ensuring that the tree retraining process is swift and efficient.

\begin{algorithm}[H]
 \caption{Tree retraining with Des-q} \label{alg:desq-retrain}
\begin{algorithmic}
\Require $\textsf{Data} = \{X_{\text{tot}}, Y_{\text{tot}}\}$, where $X_{\text{tot}} = X:X_{\text{new}}= [x_1,\cdots,x_N,\cdots, x_{N_{\text{tot}}}]^T \in \mathbb{R}^{N_{\text{tot}} \times d}$ and $Y_{\text{tot}} = Y:Y_{\text{new}} =[y_1,\cdots, y_N,\cdots y_{N_{\text{tot}}}]^{T}$. The rest of the requirements are the same as Algorithm~\ref{alg:desq-cons}.   
\Ensure the structure of the decision tree, the centroids $\{c_{\text{node}} \in \mathbb{R}^d\}$ corresponding to each internal node in the tree, and the leaf label prediction.
\State \textbf{Step 1:} load $X_{\text{new}}$ and $Y_{\text{new}}$ using quantum-accessible-data structure \cite{kerenidis2016quantum}.
\State \textbf{Step 2:} Repeat steps 2:5 of Algorithm~\ref{alg:desq-cons} but with querying superposition states corresponding to $X_{\text{tot}}$ and $Y_{\text{tot}}$.
\end{algorithmic}
\end{algorithm}

\section{Tree construction with Des-q} \label{sec:qds}

Now that we have presented the high-level description of Des-q, we present the technical details of each component of the decision tree construction for the tasks of regression and binary classification. We also utilize other quantum algorithmic ingredients in Des-q, for which we refer the reader to Appendix~\ref{sec:ingre}. The five key components of Des-q for tree construction are
\begin{enumerate}
    \item \textit{Data loading:} Load the classical data into quantum amplitude states. 
    \item \textit{Feature weight estimation:} Estimate the Pearson correlation between each feature vector and the target label. 
    \item \textit{Supervised clustering:} Perform supervised clustering at root node to generate nodes at first depth.
    \item \textit{Tree growth:} For each node, generate the superposition of the data examples and perform supervised clustering on the superposition state to further grow the tree until reaching maximum set depth $D$. 
    \item \textit{Leaf label assignment:} When the leaf node is reached, compute the mean of the label values of the examples in the cluster. For regression, the leaf label is the mean, while for binary classification, the leaf label is 0 if the mean is less than 0.5, and 1 otherwise. 
\end{enumerate}

The subsections below correspond to each of the above-mentioned components. 

\subsection{Data loading} \label{sec:init_data_loading}

The first step to build a quantum algorithm on classical dataset $\textsf{Data} = \{X, Y\}$ (Sec~\ref{sec:term}) is to be able to load it into a quantum accessible memory such that one can efficiently query the data as quantum encoded states. Among the most space-efficient quantum encoding proposals is the amplitude encoding scheme which provides a natural link between quantum computing and linear algebra to be able to construct useful quantum algorithms \cite{montanaro2016quantum}. 

\subsubsection{Quantum amplitude encoding}

The quantum amplitude encoding requires $\ceil{\log(d)}$ qubits to encode a vector $x = (x_1,\cdots,x_d) \in \mathbb{R}^d$ using the following form
\begin{equation}
    \ket{x} = \frac{1}{\|x\|} \sum_{j=1}^d x_j \ket{j}.
    \label{eq:amp_enc}
\end{equation}

Further one can also encode $N$ vectors simultaneously, which is equivalent to encoding a matrix $X \in \mathbb{R}^{N \times d}$ as
\begin{equation}
    \ket{X} = \frac{1}{\|X\|_F} \sum_{i=1}^N \|x_i\| \ket{x_i} \ket{i},
    \label{eq:amp_enc_matrix}
\end{equation}

where $\ket{x_i} = \frac{1}{\|x_i\|} \sum_{j=1}^d x_{ij} \ket{j}$ and $\|X\|_F = \sqrt{\sum_{i=1}^N \|x_i\|^2}$ is the matrix Frobenius norm.

There are multiple proposals for preparing the states given in Eq~\ref{eq:amp_enc} and Eq~\ref{eq:amp_enc_matrix} that use an efficient data loading structure. Here we showcase a time-efficient (poly-logarithmic in the size of the input) method of such state preparation using the quantum-accessible-data structure called the KP-tree \cite{kerenidis2016quantum}. In Appendix~\ref{sec:QRAM_AMP} we also show an alternate method of preparing such states when one is given access to the oracular quantum-random-access memory structure, whose time complexity is, in general, proportional to the square root of the size of the input unless in special cases. 

\subsubsection{Amplitude encoding with KP-tree} \label{sec:KP-tree}

As highlighted by the original authors \cite{kerenidis2016quantum} and further in \cite{chakraborty2018power}, the KP-tree data structure can be seen as a quantum-accessible-data structure because it is a classical tree-like data structure, which is stored in a quantum read-only-memory and accessed via superposition. It facilitates the creation of amplitude encoding states as stated in the lemmas below. 

\begin{lemma} \label{Thm:KP-tree}
    Let $X \in \mathbb{R}^{N \times d}$ be a given dataset. Then there exists a classical data structure to store the rows of $X$ with the memory and time requirement to create the data structure being $T_{kp} = \mathcal{O}(Nd \log^2(Nd))$ such that, there is a quantum algorithm with access to the data structure which can perform the following unitaries (and also in superposition)
    \begin{align}
        \ket{i}\ket{0} &\rightarrow \ket{i} \frac{1}{\|x_i\|}\sum_{j=1}^d x_{ij}\ket{j} \\
        \ket{0} &\rightarrow \frac{1}{\|X\|_F} \sum_{i=1}^N \|x_i\| \ket{i}
    \end{align}
    in time $T = \mathcal{O}(\text{poly}\log (Nd))$.
\end{lemma}

\begin{proof}
We provide proof of this lemma in Appendix~\ref{App:proof_data_load}.
\end{proof}
Similarly, using the results of Lemma~\ref{Thm:KP-tree}, we also have the following two Lemmas to load the elements of the columns of the matrix $X$ which correspond to the feature vectors, and also to load the elements of the label vector $Y$.
\begin{lemma}[Superposition over example columns] \label{lemma:supcol}
        Let $X \in \mathbb{R}^{N \times d}$ be a given dataset. Then there exists a classical data structure to store the columns of $X$ given by feature vectors $x^{(j)}:= (x_{1j}, \cdots, x_{Nj}), j \in [d]$ with the memory and time requirement to create the data structure being $\mathcal{O}(Nd \log^2(Nd))$ such that, there is a quantum algorithm with access to the data structure which can perform the following unitary (and also in superposition) 
    \begin{align}
       \ket{0} \ket{j} \rightarrow  \left(\frac{1}{\|x^{(j)}\|}\sum_{i=1}^N x_{ij}\ket{i} \right)\ket{j}  = \ket{x^{(j)}} \ket{j}
    \end{align}
    in time $T = \mathcal{O}(\text{poly}\log (Nd))$.
\end{lemma}

\begin{lemma}[Superposition over label data]  \label{lemma:suplabel}
        Let $Y = [y_1,\cdots, y_N]^T \in \mathbb{R}^{N \times 1}$ be a given label vector. Then there exists a classical data structure to store the elements of $Y$ with the memory and time requirement to create the data structure being $\mathcal{O}(N \log^2(N))$ such that, there is a quantum algorithm with access to the data structure which can perform the following unitaries (and also in superposition) in time $T = \mathcal{O}(\text{poly}\log (N))$ which can perform the following operation
    \begin{align}
       \ket{0}  \rightarrow  \frac{1}{\|Y\|}\sum_{i=1}^N y_i\ket{i},
    \end{align}
    where $\|Y\| = \sqrt{\sum_{i=1}^N y_i^2}$.
\end{lemma}

\subsection{Feature weight estimation} \label{sec:bivariatecorr}

Statistical analysis provides various quantitative techniques to assess the relationship between two variables \cite{akoglu2018user}. When dealing with two numerical variables, a common method to quantify their bivariate linear correlation, indicating the predictability of one variable based on the other, is through the Pearson correlation coefficient \cite{pearson1895vii}.
In our context, we are interested in examining the relationship between each feature vector and the target label vector using the Pearson correlation coefficient. This approach applies to regression tasks where both the feature vector and label vector consist of numerical variables. 

For binary classification tasks, point-biserial correlation measures the connection between numerical feature vectors and a categorical label vector with two classes (typically denoted as $\{0,1\}$) \cite{linacre2008expected}. It is worth noting that point-biserial correlation essentially represents a specialized application of the Pearson correlation for scenarios involving binary label vectors. Therefore, we choose to employ the Pearson correlation to quantify feature weights for both regression and binary classification tasks.

\subsubsection{Pearson correlation coefficient}

The Pearson correlation, also known as the Pearson product-moment correlation coefficient, provides a normalized measure of the covariance between two data sets. In essence, it quantifies the relationship between two variables by comparing the ratio of their covariance to their individual standard deviations. While the Pearson correlation effectively captures linear relationships between variables, it may not accurately represent non-linear associations. Pearson correlation assumes normal distribution for both variables and can be applied to assess relationships between either nominal or continuous variables.

Our objective is to compute the Pearson correlation between each feature vector $x^{(j)}$ and the label vector $Y$ to ascertain the importance of each feature in the prediction. Consider we want to compute the correlation between the feature $j$ with entries $ x^{(j)} = \{x_{1j},\cdots x_{Nj}\}$ and the output label $Y$ with entries $y_1,\cdots, y_N$. The Pearson correlation coefficient between the two $N$ dimension vectors is defined as
\begin{equation}
\small
    w_{j} =  \frac{\sum_{i=1}^N(x_{ij} - \mu_j)(y_i - \mu_y)}{\sqrt{\sum_{i=1}^N(x_{ij} - \mu_j)^2}\sqrt{\sum_{i=1}^N(y_{i} - \mu_y)^2}} \forall j \in [d],
    \label{Eq:pearson}
\end{equation}
where $\mu_j = \frac{1}{N} \sum_{i=1}^N x_{ij}$ and $\mu_y = \frac{1}{N} \sum_{i=1}^N y_{i}$.

With the aforementioned amplitude-encoded states, we propose two efficient (in $N$) methods of quantumly computing the Pearson correlation coefficient compared to the classical method \cite{cohen2009pearsonharrison1978hedonic} that takes time $\mathcal{O}(N)$, whereas the proposed quantum algorithm takes time poly-logarithmic in $N$ when the data is encoded in amplitude-encoded states. Whereas the first method has a complexity that scales on the error as $\mathcal{O}(1/\epsilon^2)$, the second method improves the error dependence quadratically to $\mathcal{O}(1/\epsilon)$. Here the error dependence is defined by $|\Bar{w_{j}} - w_j| \leq \epsilon$, where $\Bar{w_{j}}$ is the estimated correlation.
We note that while we consider amplitude encoding states in the following methods, the feature and label vectors can also be encoded in QRAM-like states (e.g., $\sum_i^N\ket{i}\ket{y_i}$) and we can use quantum counting based algorithms~\cite{PhysRevLett.130.150602} to efficiently calculate the correlation coefficient (see Appendix~\ref{appendix:QBC_for_correlation} for details).

\subsubsection{Method 1: Inverse error-squared dependence on run time} \label{sec:method1}

\begin{theorem} \label{Thm:inverse_sqared_error}
    Given access to the amplitude-encoded states for feature vectors $|x^{(j)}\rangle, j \in [d]$ and the label vector $\ket{Y}$ along with their norms $\|x^{(j)}\|$, $\|Y\|$ which are prepared in time $T= \mathcal{O}(\text{poly}\log (Nd))$, there exists a quantum algorithm to estimate the Pearson correlation coefficients $\Bar{w_j}, \forall j \in [d]$ in time $\mathcal{O}(\frac{Td\eta}{\epsilon^2})$, where $|\Bar{w_{j}} - w_j| \leq \epsilon$, and \\
    $ \eta = \frac{7 \cdot \text{max}\left(\|x^{(j)}\| \|Y\|, \|x^{(j)}\|^2, \|Y\|^2\right)}{N \cdot \text{min}\left(\sigma_{x^{(j)}}\sigma_Y, \sigma_{x^{(j)}}^2, \sigma_Y^2\right)}$,\\
    where $\sigma_{x^{(j)}}$ and $\sigma_{Y}$ denote the standard deviation for $x^{(j)}$ and $Y$.
\end{theorem}

\begin{proof}
We provide the proof of Theorem~\ref{Thm:inverse_sqared_error} in Appendix~\ref{app:inverse_sqared_error}.
\end{proof}

\subsubsection{Method 2: Inverse error dependence on run time} \label{sec:PEinverseerror}

\begin{theorem}\label{thm:pe}
    Given access to the amplitude-encoded states for feature vectors $|x^{(j)}\rangle, j \in [d]$ and the label vector $\ket{Y}$ along with their norms $\|x^{(j)}\|$, $\|Y\|$ which are prepared in time $T = \mathcal{O}(\text{poly}\log (Nd))$, there exists a quantum algorithm to estimate each Pearson correlation coefficient $\Bar{w_j}, j \in [d]$ in time $T_{w} = \mathcal{O}(\frac{Td\eta}{\epsilon})$, where $|\Bar{w_{j}} - w_j| \leq \epsilon$, and \\
    $ \eta = \frac{7 \cdot \text{max}\left(\|x^{(j)}\| \|Y\|, \|x^{(j)}\|^2, \|Y\|^2\right)}{N \cdot \text{min}\left(\sigma_{x^{(j)}}\sigma_Y, \sigma_{x^{(j)}}^2, \sigma_Y^2\right)}$, \\
    where $\sigma_{x^{(j)}}$ and $\sigma_{Y}$ denote the standard deviation for $x^{(j)}$ and $Y$. 
\end{theorem}

\begin{proof}
We provide the proof of Theorem~\ref{thm:pe} in Appendix~\ref{app:pe}. The key idea is to augment the proof technique in Theorem~\ref{Thm:inverse_sqared_error} with the amplitude estimation step Appendix~\ref{sec:AE} to provide a quadratic improvement in the error dependence. 
\end{proof}

\subsubsection{Normalizing the feature weights}

Once we obtain the feature weights $w = [w_1,\cdots,w_d]^T$, we take the absolute values of the weights since the absolute value of the Pearson correlation coefficient is the true indicator of the strength of the relationship between the two variables. Subsequently, in order to obtain the relative impact of each feature, we normalize the feature weights resulting in the normalized feature weights as
\begin{equation}
    w_{\text{norm}} = \frac{1}{\|w\|}[|w_1|,\cdots,|w_d|]^T,
\end{equation}
where $\|w\| = \sqrt{\sum_{j=1}^d w_l^2}$. This makes the norm of $\|w_{\text{norm}}\| = 1$. 
For simplicity, we refer to $ w_{\text{norm}}$ as $w$ henceforth. 

\subsection{Clustering: supervised $q$-means to expand the decision tree} \label{sec:clustering}

After having computed the feature weights $w = [w_1,\cdots,w_d]^T$, we proceed with clustering on the parent node to generate $k$ children nodes. This process is repeated sequentially until the maximum tree depth $D$ is reached. To perform this clustering, we leverage the unsupervised $q$-means algorithm developed by Kerenidis \etal \cite{kerenidis2019q} to include the feature weights in the distance estimation in such a way that the training examples are assigned to the centroid labels according to the minimum weighted distances defined in Eq~\ref{ec_distance_algo}.
$q$-means is analogous to the classical robust $k$-means clustering, also known as $\delta$-$k$-means, where the idea is to start with $k$ initial centroids which are either chosen randomly or using heuristics like $k$-means++ \cite{arthur2007k}. The algorithm then alternates between two steps: (a) each data point is assigned to the closest centroid thus forming total $k$ disjoint clusters, and (b) the centroid vectors are updated to the average of data points assigned to the corresponding cluster. The two steps are repeated for a total of $K$ times to find the approximate \emph{optimal} centroid points.
We make the same assumption of working with datasets that allow for
good clustering as made by the Kerenidis \etal \cite{kerenidis2019q}. A dataset being\textit{ well-clusterable} means that the $k$ clusters arise from picking $k$ well-separated centroids, and then each point in the cluster is sampled from a Gaussian distribution with small variance centered on the centroid of the cluster. This implies that the size of each cluster $\mathcal{C}_j$ can be bounded by $|\mathcal{C}_j| = \Omega(\frac{N}{k})$. For Des-q the assumption is that the data is well-clusterable at root (i.e., considering the dataset entirely) and at each internal node of the tree, which contains a subset of this dataset. For such well-clusterable datasets we can obtain a polylogarithmic time complexity in terms of number of samples $N$.
Moreover, in order to bound the norms of $X$ and $Y$, we make the assumption that they have bounded constant entries allowing to take $\|x^{(l)}\| = \mathcal{O} (\|x^{(l)}\|_{\infty}\sqrt{N}) = \mathcal{O}(\sqrt{N})$ and $\|Y\| = \mathcal{O} (\|Y\|_{\infty}\sqrt{N}) = \mathcal{O}(\sqrt{N})$, which we will use to bound some complexities in Theorem 5.6 and 5.7.

Our main result for performing supervised clustering for any given parent node is the following
\begin{theorem} 
    Given quantum access to the dataset $X$ in time $T = \mathcal{O}(\text{poly} \log(Nd))$, the supervised $q$-means algorithms takes $K$ iterations steps to output with a high probability the centroids $\overline{c_1}, \cdots, \overline{c_k}$ that are arbitrarily close to the output centroids of the $\delta$-$k$-means algorithm $c_1, \cdots, c_k$ in time complexity
    \begin{equation}
    \small
    \mathcal{O}\left(\emph{poly} \log(Nd) \frac{K k^\frac{5}{2} d \log (k)\log(1/\Delta)\eta_1}{\epsilon_1\epsilon_2}\right),
    \end{equation}
    where $\eta_1 = \text{max}_i(\|x_i\|^2)$ and $\epsilon_1$ is the error in the estimation of the weighted inner product between each sample and centroid such that $|\overline{I_w(x_i, c_j)} - I_w(x_i, c_j)| \leq \epsilon_1$ with a probability at least $1 - 2 \Delta$ (refer to Theorem \ref{thm:cd}) and $\epsilon_2$ is the error in the centroid estimation as $\|\overline{c_j} - c_j\|_\infty \leq \epsilon_2$ $\forall j \in [k]$ (refer to Theorem \ref{thm:cent_update}). 
\end{theorem}
Here, we present an overview of the key steps in our proposed quantum-supervised algorithm that utilizes the $q$-means algorithm \cite{kerenidis2019q}. Initially, quantum subroutines are employed to assign training examples to their nearest centroids using weighted distance estimation. Subsequently, a quantum centroid update subroutine calculates the new centroid values. The primary subroutines, which we will elaborate on in the following subsections, are as follows.
    \begin{enumerate}
        \item  \emph{Weighted centroid distance estimation.}
        \item \emph{Cluster assignment.}
        \item \emph{Centroid states creation.}
        \item \emph{Centroid update.}
    \end{enumerate}

\subsubsection{Weighted centroid distance estimation} \label{sec:cd}
The algorithm starts by selecting $k$ $d$-dimensional vectors as initial centroids $c_1^0,\cdots, c_k^0$ (for ease of notation, we refer to them as $c_1,\cdots, c_k$) which can be chosen either randomly or by an efficient procedure such as $k$-means++ \cite{arthur2007k}. Next, one performs the Hadamard multiplication (also called element-wise multiplication) between each centroid vector and the weight vector $w$ resulting in the \emph{weighted}-centroids as
\begin{equation}
c_{j, w} = w \circ c_j = [w_1c_{j1}, \cdots w_dc_{jd}]^T \hspace{2mm} j \in [k].
\end{equation}
This operation takes a total time $\mathcal{O}(kd)$. Next, the $k$ weighted centroids are loaded in KP-tree (Sec~\ref{sec:KP-tree}) in time $T_{kp}^c = \mathcal{O}(kd \log^2(kd))$  and retrieved as amplitude-encoded states in time $\mathcal{O}(\emph{poly} \log(kd))$ by doing
\begin{equation}
    \ket{j}\ket{0} \rightarrow \ket{j}|c_{j,w}\rangle = \ket{j}\frac{1}{\|c_{j,w}\|}\sum_{l=1}^d w_lc_{jl}\ket{l},
\end{equation}
where $\|c_{j,w}\| = \sqrt{\sum_{l=1}^{d}w_l^2c_{jl}^2}$ is the norm of the weighted centroid.

Further, using Lemma~\ref{Thm:KP-tree}, we can also query amplitude-encoded training example states $\ket{x_i}$ in time $T = \mathcal{O}(\emph{poly} \log(Nd))$ as
\begin{equation}
\begin{split}
    \ket{i}\ket{0} &\rightarrow \ket{i}\ket{x_i}, \\
\end{split}
\end{equation}
where $\ket{x_i} = \frac{1}{\|x_i\|}\sum_{j=1}^d x_{ij}\ket{j}$.

The idea of assigning all the $N$ training examples to their closest centroid begins by estimating the weighted distance between the examples (stored in quantum superposition) and each of the $k$ centroid vectors (also stored in quantum amplitude encodings) and then selecting the minimum distance between the example and the $k$ available centroids. Here we assign the examples based on them having the maximum \emph{weighted} inner product, or overlap, with respect to the $k$ centroid vectors. The weighted inner product between an example $x_i$ and the centroid $c_j$ is defined as
\begin{equation}\label{eq:weighted-inner-product}
    I_w(x_i, c_j) = \sum_{l=1}^d w_l x_{il}c_{jl} = x_i \cdot c_{j, w} = I(x_i, c_{j,w}).
\end{equation}

Let us first look at the procedure of estimating the inner product between all the examples and the centroids. 

\begin{theorem} \label{thm:cd}
        Given quantum access to the dataset $X$ in time $T = \mathcal{O}(\text{poly} \log(Nd))$ and quantum access to the weighted centroids $c_{1,w},\cdots,c_{k,w}$ in time $\mathcal{O}(\text{poly} \log(kd))$, then for any $\Delta > 0$ and $\epsilon_1 > 0$, there exists a quantum algorithm such that
        \begin{equation}
            \begin{split}
            &\frac{1}{\sqrt{N}} \sum_{i=1}^N \ket{i} \otimes_{j \in [k]}(\ket{j}\ket{0}) \\ &\rightarrow \frac{1}{\sqrt{N}} \sum_{i=1}^N \ket{i} \otimes_{j \in [k]}(\ket{j}\ket{\overline{I_w(x_i, c_j)}}),
            \label{eq:ip-sup}
            \end{split}
        \end{equation}
        where $I_w(.)$ is the weighted inner product between the two vectors and $|\overline{I_w(x_i, c_j)} - I_w(x_i, c_j)| \leq \epsilon_1$ with probability at least $1 - 2\Delta$, in time $T_{cd} = \mathcal{O}(T k \frac{\eta_1}{\epsilon_1} \log(1/\Delta))$, where $\eta_1 = \text{max}_{i}(\|x_i\|^2)$.
\end{theorem}

\begin{proof}
We prove the above Theorem~\ref{thm:cd} in Appendix~\ref{app:cd}.
\end{proof}

\subsubsection{Cluster assignment} \label{sec:ca}

Once we have the superposition state of Theorem~\ref{thm:cd}, we can use the following theorem to assign each training example $x_i$ to the closest centroid label referred by $\emph{centroid-label}_{i} \in [k]$\footnote{Note that the centroid label is different from the original dataset label. Since we are doing $k$-way clustering, thus each cluster centroid $c_j$ is assigned a label $j$, for $j \in [k]$}. In our case, this would correspond to the largest weighted inner product between each example and the $k$ centroid points. 

\begin{theorem}[Finding maximum weighted inner product] \label{thm:ca}
    Given $k$ different $\log p$ size registers $\otimes_{j \in [k]} |\overline{I_w(x_i, c_j)}\rangle$ whose creation time is $T_{cd} = \mathcal{O}(T k \frac{\eta_1}{\epsilon_1} \log(1/\Delta))$, where $T$, $\epsilon_1$ and $\Delta$ are defined in Theorem \ref{thm:cd}, there is a quantum map which performs the following operations
    \begin{equation}
    \begin{split}
        &\otimes_{j \in [k]} |\overline{I_w(x_i, c_j)}\rangle\ket{1}\\
        &\rightarrow  \otimes_{j \in [k]}|\overline{I_w(x_i, c_j)}\rangle |\text{arg-max}(\overline{I_w(x_i, c_j)})\rangle
    \end{split}
    \end{equation}
    incurring a total time $T_l = \mathcal{O}(k\log p) + T_{cd} = \mathcal{O}(k\log p + T k \frac{\eta_1}{\epsilon_1} \log(1/\Delta)) = \mathcal{O}(T k \frac{\eta_1}{\epsilon_1} \log(1/\Delta)) = T_{cd}$. Note that we removed the first term as $\log p$ is a constant as it is the size of the registers to encode each inner product $|\overline{I_w(x_i, c_j)}\rangle$ (Theorem  \ref{thm:cd}).
\end{theorem}
\begin{proof}
    This process can be done by having an additional register initialized in $\ket{1}$. Then for $2 \leq j \leq k$, a total of $k$ repeated comparisons are performed of the two inner product registers, i.e., if the value of the inner product register corresponding to centroid $c_j$ is greater than the value corresponding to $c_{j+1}$, then the last register takes the centroid label value $j+1$ and so on. Thus the total time complexity of this procedure is $\mathcal{O}(k \log p)$. 
    This allows us to produce the state
    \begin{equation}
        \frac{1}{\sqrt{N}}\sum_{i=1}^N\ket{i}\ket{\emph{centroid-label}_{i}},
        \label{eq:label state}
    \end{equation}
    where $\emph{centroid-label}_{i}$ is the centroid label corresponding to the maximum weighted inner product with respect to the $k$ centroids for the example $x_i$. Thus the total time complexity (including the previous step of weighted inner product estimation) is $T_l$.
\end{proof}

\subsubsection{Updated centroid vectors creation} \label{sec:centroid_val_create}

Once the clusters are assigned, the next step is to perform the averaging of the values in the cluster (i.e., an average of weighted examples with the same centroid label) to obtain the new cluster centroids. One way to do this is to note that the state in Eq~\ref{eq:label state} can be re-written as
\begin{equation}
\small
    \sum_{j=1}^k\sqrt{\frac{|\mathcal{C}_j|}{N}}\left(\frac{1}{\sqrt{|\mathcal{C}_j|}}\sum_{i \in \mathcal{C}_j}\ket{i}\right)\ket{j} = \sum_{j=1}^k\sqrt{\frac{|\mathcal{C}_j|}{N}}|\xi_j\rangle\ket{j},
    \label{eq:characteristic}
\end{equation}

where $|\xi_j\rangle$ corresponds to the uniform superposition over the indices in the cluster $\mathcal{C}_j$ also referred to as the characteristic vector state corresponding to the centroid label value $j$. As mentioned, we assume well-clusterable datasets, implying that the clusters are all non-vanishing and they have the size $|\mathcal{C}_j| = \Omega(N/k)$, we see that with $\mathcal{O}(k \log k)$ measurements of the final register (using coupon collector arguments \cite{boneh1997coupon}), we obtain all the $k$ index states $|\xi_j\rangle, j \in [k]$ with $\Omega(1)$ probability. Thus we can obtain each characteristic vector state $|\xi_j\rangle$, $j \in [k]$ in time $T_\xi = \mathcal{O}(T_l\log k)$ such that the total time to obtain all the characteristic vectors is $kT_\xi$. 

Now, we use the following theorem to obtain all the $k$ updated classical centroid vectors denoted by $c^1_1, \cdots, c^1_k$ where $c^1_j \in \mathbb{R}^d$ $\forall j \in [k]$ and the superscript denotes the updated centroid vectors at the cluster iteration 1. For ease of notation, we omit the superscript and denote them as  $c_1, \cdots, c_k$.
\begin{theorem}
    Given access to the characteristic vector states $|\xi_j\rangle$ $\forall j \in [k]$ where each state is prepared in time $T_\xi = \mathcal{O}(T_l\log k)$ and the amplitude-encoded states for feature vectors $|x^{(l)}\rangle$ $\forall l \in [d]$ which can be prepared in time $T = \mathcal{O}(\text{poly} \log(Nd))$, there exists a quantum algorithm to obtain the updated centroid vectors $\overline{c_1}, \cdots, \overline{c_k}$ such that $\|\overline{c_j} - c_j\|_\infty \leq \epsilon_2$ $\forall j \in [k]$ in time 
    $$T_{\text{sup-cluster}} =\mathcal{O}\left(\frac{T_\xi k^\frac{3}{2}d}{\epsilon_2}\right),$$
    where $c_j = X^T\xi_j$ is the true mean of the weighted examples in the cluster and thus the true updated centroid vector of $c_j$. 
    \label{thm:cent_update}
\end{theorem}

\begin{proof}
We prove the Theorem~\ref{thm:cent_update} in Appendix~\ref{app:cent_update}.
\end{proof}

We can now easily compute the time taken to generate the first iteration of centroid vectors\footnote{For the time being we will exclude the time taken to load the initial data $X, X^T, Y$ in the KP-tree and the time to compute the Pearson correlation weight vector. This is because these are one-time processes and they will be included in the total decision tree construction time in the end.}. The total time taken is the combination of time to load the initial centroid vectors in KP-tree and to perform supervised clustering (a combination of data update and clustering). This can be written as

\begin{widetext}
   \begin{equation}
    \begin{split}
            T_{\text{iter-1}} &= T_{kp}^c + T_{\text{sup-cluster}}  \\
    &=  \mathcal{O}(\log^2(kd)\cdot kd) + \mathcal{O}\left(T_\xi\frac{ k^\frac{3}{2}}d{\epsilon_2}\right) \\
    &=   \mathcal{O}(\log^2(kd)\cdot kd) + \mathcal{O}\left(T_l\frac{ k^\frac{3}{2}} d \log (k) {\epsilon_2}\right)\\
    &=   \mathcal{O}(\log^2(kd)\cdot kd) +  \mathcal{O}\left(T_{cd}\frac{ k^\frac{3}{2}} d \log (k){\epsilon_2}\right) \\
    &=   \mathcal{O}(\log^2(kd)\cdot kd) + \mathcal{O}\left(\emph{poly} \log(Nd) \log(1/\Delta)\eta_1 \frac{   k^\frac{5}{2}} d \log (k){\epsilon_1\epsilon_2}\right) \\
    &= \mathcal{O}\left(\frac{ \emph{poly} \log(Nd) \log(1/\Delta)\eta_1 k^\frac{5}{2}} d \log (k){\epsilon_1\epsilon_2}\right),
    \end{split}
\end{equation} 
\end{widetext}
where $\eta_1$ and $\epsilon_1$ are defined in Theorem \ref{thm:cd}) and $\epsilon_2$ in Theorem \ref{thm:cent_update}). Further, $T_{kp}^c$ is the time to load the weighted centroid vector in the KP-tree as mentioned in Sec~\ref{sec:cd}.

\subsubsection{Repeating clustering iteration step}

Once we have obtained the $k$ classical centroid vectors for one iteration, we repeat the above 3-step process $K$ times to obtain the final centroid vectors at the $K$-th iteration. This creates the first depth of the decision tree with a total of $k$ children nodes corresponding to the $k$ clusters generation. The total time complexity of this procedure is (excluding the time to load the initial data $X, X^T, Y$ into the KP-tree and the time to compute the Pearson correlation weight vector)
\begin{equation}
    \begin{split}
            T_{\text{depth}:[0\rightarrow 1]} &=  KT_{\text{iter-1}}.
    \end{split}
\end{equation}

\subsection{Tree growth} \label{sec:Tree_growth}

The previous steps in Sec~\ref{sec:bivariatecorr}-\ref{sec:clustering} show how to grow the tree from the root node to the $k$ children nodes at depth 1. As a result we obtain the $k$ centroid vectors at depth 1 $\{c_{1}^{1}, \cdots, c_{k}^{1}\}$ where the superscript denotes the centroids at tree depth 1. Subsequently, in order to expand the tree further, we need to perform clustering again on each of the $k$ children nodes. This is repeated till we reach the final tree depth $D$. The challenge here is that we started with all the samples in superposition at root ($\frac{1}{\sqrt{N}}\sum_{i=1}^N \ket{i}\ket{x_i}$). After having performed clustering, now we require the superposition of the samples at each children node to continue expanding the tree. This state is the index superposition state. For the $j$-th cluster at depth $l$ ($\mathcal{C}_{j,l}$), it is defined as $|\xi_j\rangle = \frac{1}{\sqrt{|\mathcal{C}_{j, l}|}} \sum_{i \in \mathcal{C}_{j,l}}\ket{i}$. Creating this state is non-trivial, as we will show in Section \ref{sec:majority label}.

Let us see how to further expand the tree by expanding at the nodes from depth $l$ to $l+1$. This can be done in the following two sequential steps,
\begin{enumerate}
    \item Let us denote the node $j$ in depth $l$ with the cluster $\mathcal{C}_{j,l} \forall j \in [k^l]$ and $l \in [1, D-1]$. The first step is to create the superposition over the examples in the cluster $\mathcal{C}_{j,l}$.
    \item Perform the clustering to create $k$ children nodes for each parent cluster node $\mathcal{C}_{j,l}$ at depth-$l$.
\end{enumerate}


For 1, we create the superposition over the examples in node $j \in [k^l]$. Let us denote the nodes by their centroids vectors $c_j^{l}, j \in [k^l]$ where the superscript denotes the depth of the tree and the subscript denotes the node. Next, we perform the Hadamard multiplication (also called element-wise multiplication) between each centroid vector and the weight vector $w$ resulting in the \emph{weighted}-centroids
\begin{equation}
c_{j, w} = w \circ c_j = [w_1c_{j1}, \cdots w_dc_{jd}]^T, \hspace{2mm} j \in [k^l].    
\end{equation}

This operation takes a total time $\mathcal{O}(k^ld)$. The $k^l$ weighted centroids are then loaded in KP-tree (Sec~\ref{sec:KP-tree}) in time $T_{kp}^c = \mathcal{O}(k^ld \log^2(kd))$  and retrieved as amplitude-encoded states in time $\mathcal{O}(\emph{poly} \log(k^ld))$.

For 2, we perform clustering for each node at depth-$l$. For this, similar to Sec~\ref{sec:cd}, the idea is to create the superposition state of the indices of $N$ examples in the dataset and their weighted inner product distance with the $k^l$ available centroid vectors. For simplicity, again let us refer to $\{c_{j}^{l}\}$ as $\{c_{j}\}$. We utilize Theorem \ref{thm:cd} but in this case for the $k^l$ nodes. Therefore, the complexity time is $T_{cd}^{k^l} = \mathcal{O}(T k^l \cdot \frac{\eta_1}{\epsilon_1} \log(1/\Delta))$, where we utilize the super-index $k^l$ to refer to the dependency on the $k^l$ nodes.

Next, we can coherently find the maximum weighted inner product for each example state and the $k^l$ centroid vectors. This allows one to cluster the original data into the disjoint clusters at depth $l$. This is done using Theorem \ref{thm:ca} for $k^l$ nodes. In this case, the complexity time is $T_l^{k^l} = \mathcal{O}(T_{cd}^{k^l} + k^l\log p) = T_{cd}^{k^l}$ due to the same argument as before.

The resulting quantum state is
    \begin{equation}
    \small
    \sum_{j=1}^{k^l}\sqrt{\frac{|\mathcal{C}_{j,l}|}{N}}\left(\frac{1}{\sqrt{|\mathcal{C}_{j,l}|}}\sum_{i \in \mathcal{C}_{j,l}}\ket{i}\right)\ket{j} = \sum_{j=1}^{k^l}\sqrt{\frac{|\mathcal{C}_{j,l}|}{N}}|\xi_j\rangle\ket{j}.
    \label{xi:superposition}
\end{equation}

Upon measuring the last register of the above superposition state $\mathcal{O}(k^l \log k^l)$ times, we obtain the cluster index states $|\xi_j\rangle, \forall j \in [k^l]$ with an $\Omega(1)$ probability. Thus each state $|\xi_j\rangle$ can be prepared in time
\begin{equation}
   T_\xi^{k^l}= \mathcal{O}(T_l^{k^l} \cdot \log k^l) = \mathcal{O}(T_l^{k^l} \cdot l\log k).
   \label{eq:txi_depthl}
\end{equation}

Upon querying the $KP$-tree in Theorem~\ref{Thm:KP-tree} separately with each index state $|\xi_j\rangle$, we obtain the state
\begin{equation}
    |\xi_j\rangle\ket{0} \rightarrow \frac{1}{\sqrt{|\mathcal{C}_{j,l}|}}\sum_{i \in \mathcal{C}_{j,l}}\ket{i}\ket{x_i} \hspace{2mm} \forall j \in [k^l],
\end{equation}
where $\ket{x_i} = \frac{1}{\|x_i\|}\sum_{j=1}^d x_{ij}\ket{j}$ in time $T_{\text{cs}}^{k^l} = \mathcal{O}(T_\xi^{k^l}\emph{poly}\log(Nd))$. This then allows us to create the superposition of the example states in the node $j$ at depth $l$.



Considering steps 1 and 2, the total time to expand all nodes at depth-$l$ to get the $k^{l+1}$ children nodes is the time $T_{\text{cs}}^{k^l}$ to generate the superposition states for each cluster multiplied by the time to do the clustering of each node, which is the same as clustering the root node (i.e., clustering from the root to depth 1 nodes $T_{\text{depth}:[0\rightarrow 1]}$). Therefore, the total complexity time is
\begin{widetext}
\begin{equation}
    \begin{split}
         T_{\text{depth}:[l\rightarrow l+1]} &= \mathcal{O}(k^{l}d + k^{l} \cdot T_{\text{cs}}^{k^l} \cdot  T_{\text{depth}:[0\rightarrow 1]})\\
         &= \mathcal{O}(k^{l}d + k^{l} \cdot T_{\text{cd}}^{k^l} \cdot l \cdot \log k \cdot  \emph{poly}\log(Nd) \cdot KT_{\text{iter-1}}) \\
         &= \mathcal{O}\left(\emph{poly} \log(Nd) \frac{K l k^{2(l+\frac{5}{4})} d \log^2 (k)  \log^2(1/\Delta)\eta^2_1}{\epsilon^2_1 \epsilon_2}\right).
    \end{split}
\end{equation}    
\end{widetext}

The tree keeps growing using the previous strategy until it reaches the maximum allowed depth $D$ or until a stopping criterion is achieved. The total time taken for the algorithm to reach from the root node to the last node at depth $D$ (excluding the time to load the initial data $X, X^T, Y$ into the KP-tree and the time to compute the Pearson correlation weight vector at the root node) is
\begin{widetext}
    \begin{equation}
    \begin{split}
       T_{\text{depth}:[0\rightarrow D]} &=  T_{\text{depth}:[0\rightarrow 1]} +  \sum_{l=1}^{D-1}  T_{\text{depth}:[l\rightarrow l+1]} \\
        &=  \mathcal{O}\left(\emph{poly} \log(Nd) \frac{K D k^{2D + \frac{1}{2}}d \log^2 (k)\log^2(1/\Delta)\eta_1^2}{\epsilon_1^2\epsilon_2}\right).
    \end{split}
    \label{eq:depth0T}
\end{equation}
\end{widetext}

\subsection{Leaf label assignment} \label{sec:majority label}

Once the tree reaches the final depth $D$, it consists of $k^D$ leaf nodes. Now our objective is to compute the label values for each of the leaf nodes, which are clusters $\mathcal{C}_{j, D}, j \in [k^D]$ containing some training samples. For the task of regression, the label value is simply the mean of the label values of these samples i.e.,

\begin{equation}
    \emph{label}_j = \frac{1}{|\mathcal{C}_{j, D}|} \sum_{i \in \mathcal{C}_{j,D}} y_i.
    \label{eq:mean-label}
\end{equation}

For the task of binary regression with label set $\mathcal{M} \in \{0,1\}$, the leaf label value can similarly be computed by computing the mean of the label values of the examples in the cluster. If the mean is less than 0.5, it is assigned the value $0$, and $1$ otherwise.  

The mean value for any leaf node can be calculated by first creating a superposition of the indices in the cluster corresponding to the leaf node, i.e., by creating the index superposition state $|\xi_j\rangle = \frac{1}{\sqrt{|\mathcal{C}_{j, D}|}} \sum_{i \in \mathcal{C}_{j,D}}\ket{i}$. Subsequently, the mean value is simply obtained from the inner product between the index superposition state and label superposition state $|Y\rangle = \frac{1}{\|Y\|}\sum_{i=1}^N y_i \ket{i}$ as
\begin{equation}
\begin{split}
    I(|\xi_j\rangle, |Y\rangle) &= \langle \xi_j | Y \rangle \\
    &= \frac{1}{\sqrt{|\mathcal{C}_{j, D}|}\|Y\|} \sum_{i \in \mathcal{C}_{j,D}} y_i\\
    &= \frac{\sqrt{|\mathcal{C}_{j, D}|}}{\|Y\|} \emph{label}_j.
    \label{eq:label-inner}
\end{split}
\end{equation}

We highlight the steps in greater detail below.

\subsubsection{Creating index superposition states}

The first step of obtaining the leaf node label is to create the superposition over the indices of the cluster $\mathcal{C}_{j,D}$ corresponding to the leaf node $j$ i.e., our objective is to first create the state $ \ket{\xi_j}$ by performing the mapping in Eq.~\ref{xi:superposition} in time $T_{\xi}$ as shown in Eq~\ref{eq:txi_depthl} for $l = D$ as


\begin{equation}
\begin{split}
 T_{\xi} &= \mathcal{O}(T_{l}^{k^D}\cdot D\log k) \\
 &= \mathcal{O}\left(\emph{poly} \log(Nd) \frac{ D k^{D} \log (k)\log(1/\Delta)\eta_1}{\epsilon_1}\right).
\end{split}
\end{equation} 

\subsubsection{Obtaining mean label value}

Our next step is to query the label superposition state $|Y\rangle$ using Lemma~\ref{lemma:suplabel} by doing
\begin{equation}
      \ket{0}  \rightarrow  \frac{1}{\|Y\|}\sum_{i=1}^N y_i\ket{i} = |Y\rangle.
\end{equation}
This takes time $\mathcal{O}(\emph{poly} \log (N))$. Next, using the following theorem, we can obtain the mean label value for each leaf node $j \in [k^D]$.

\begin{theorem}
    Given access to the characteristic vector states $|\xi_j\rangle$ $\forall j \in [k^D]$ where each state is prepared in time $T_\xi = \mathcal{O}(T_{l}^{k^D}D\log k)$ and the amplitude-encoded states label superposition state $|Y\rangle$ which is be prepared in time $\mathcal{O}(\text{poly} \log(N))$, there exists a quantum algorithm to obtain the mean label values $\{\overline{\text{label}_1}, \cdots, \overline{\text{label}_{k^D}}\}$ such that $|\overline{\text{label}_j} - \text{label}_j| \leq \epsilon_3$, $\forall j \in [k^D]$ in time 
    $$T_{\text{leaf-label}} =\mathcal{O}\left(\frac{T_\xi k^{\frac{3D}{2}}} {\epsilon_3}\right),$$ 
    where $\text{label}_j$ is the true label mean of the examples in the cluster as given in Eq~\ref{eq:mean-label}.
    \label{thm:cent_update2}
\end{theorem}
\begin{proof}
We prove the  Theorem~\ref{thm:cent_update2} in Appendix~\ref{app:cent_update2}.
\end{proof}

\subsection{Time complexity for decision tree construction \label{subsection:main_complexity_algo}}

After showcasing the method to grow the tree and compute the leaf labels, we are now in a position to calculate the total time it takes to build the decision tree. We note that our algorithm has a dependency poly-logarithmic in $N$. However, given the classical dataset, the total time taken to build the decision tree is the time to load the data $X$ into the KP-tree and $Y$ in the classical data structure, plus the time to compute the Pearson correlation coefficient weights plus the rest of the algorithmic steps. 

The time taken to load the classical data in the quantum-accessible-data structure (either KP tree for $X$ or list type data structure for $Y$) is
\begin{equation}
\begin{split}
    T_{\text{load}} &= T_{kp} + T_Y \\
    &= \mathcal{O}(Nd\log^2(Nd) + N\log^2(N))\\
    &= \mathcal{O}(Nd\log^2(Nd)),
\end{split}
\end{equation}
where $T_{kp}, T_Y$ are defined in Lemmas~\ref{Thm:KP-tree},\ref{lemma:suplabel} respectively. 

Next, the time taken to compute the Pearson correlation coefficients is
\begin{equation}
    T_{\text{corr}} = T_{w} = \mathcal{O}\left(\emph{poly} \log(Nd) \frac{d\eta}{\epsilon}\right),
    \label{eq:corr}
\end{equation}
where $T_{w}$ is defined in Eq~\ref{eq:PEtime}.

Therefore, performing supervised clustering consecutively until tree depth $D$ and the leaf label assignment takes time
\begin{widetext}
   \begin{equation}
\small
\begin{split}
        T_{\text{algo}} &=  T_{\text{depth}:[0\rightarrow D]} + T_{\text{leaf-label}} \\
        &= \mathcal{O}\left(\emph{poly} \log(Nd) \frac{K d D  k^{2D+\frac{1}{2}} \log^2(k) \log^2(1/\Delta)\eta_1^2}{\epsilon_1^2\epsilon_2}\right), \\
\end{split}
\label{eq:algo}
\end{equation} 
\end{widetext}
where $ T_{\text{depth}:[0\rightarrow D]}$ and $T_{\text{leaf-label}}$ are defined in Eq~\ref{eq:depth0T} and \ref{eq:leaf_maj} respectively. 

Now, we can combine the three steps to compute the total time taken for the decision tree as
\begin{equation}
      T_{\text{des-tree}} =   T_{\text{load}} + T_{\text{corr}} +  T_{\text{algo}}. 
\end{equation}

We see that $T_{\text{corr}}$ and $T_{\text{algo}}$ have the runtime depending only poly-logarithmically in the number of examples $N$. However, since the initial data loading process in the suitable quantum-accessible-data structures takes time $ T_{\text{load}}$ which depends polynomially in $N$, this makes the initial decision tree construction algorithm to depend linearly in $N$. 

We will see in Section \ref{sec:periodic_update}that the decision tree retraining is extremely fast in the regime of large $N$ due to $T_{\text{corr}} +  T_{\text{algo}}$ depending only poly-logarithmically in $N$. 

\subsection{Test inference with Des-q} \label{sec:tree_inference}

Once the tree is built, we need a mechanism to classify new unseen examples i.e., test unlabelled examples. Here we propose to do this classically based on the following theorem. 

\begin{theorem}
    Given a test example, $x \in \mathbb{R}^d$, there exists a classical algorithm to predict the target label of the example from our proposed decision tree in time complexity
    \begin{equation}
        T_{\text{inf}} = \mathcal{O}(kDd).
    \end{equation}
\end{theorem}

\begin{proof}    Our decision tree relies on the inner product-based clustering method (as described in Sec~\ref{sec:clustering}). For this, we break down the process of assigning a label to a test example as
    \begin{enumerate}
        \item \emph{Initial Depth}: We start by calculating the inner product between the test example $x \in \mathbb{R}^d$ and the $k$ centroids at the first depth of the tree. We select the centroid with the highest inner product value, indicating it is the closest centroid to $x$. The time complexity for this step is $\mathcal{O}(kd)$, as computing the inner product between two $d$-dimensional vectors classically takes $\mathcal{O}(d)$ time.
        \item \emph{Depth Expansion}: We proceed to perform inner product estimations between $x$ and the centroids corresponding to the $k$ children nodes of the parent node (which corresponds to the centroid with the highest inner product from the previous depth layer). Again, we select the centroid with the highest inner product. This process repeats until we reach the maximum tree depth $D$, which is typically a leaf node.
        \item \emph{Leaf Node}: Finally, the target label of the example $x$ is determined by the target label associated with the leaf node that the example is assigned to.
    \end{enumerate}
In summary, the total time complexity for assigning a label to a test example in this tree structure is $\mathcal{O}(kDd)$.
\end{proof}

\section{Retraining with Des-q} \label{sec:periodic_update}

After the tree has been constructed and put online to classify new examples, one would want to retrain the model with new labelled $d$ dimensional examples $N_{\text{new}} \ll N$. This helps prevent their performance degradation which is crucial to maintain the consistent model performance. The key steps to retraining the tree include
\begin{enumerate}
    \item  Load the new $N_{\text{new}}$ examples in the same KP-tree data structure introduced in Sec~\ref{sec:KP-tree}.
    \item Recompute the feature weights for $N + N_{\text{new}}$ training dataset. 
    \item Perform quantum-supervised clustering to grow the tree.
    \item Compute the leaf labels. 
\end{enumerate}

Using the same techniques used to build the tree at the initial time, we showcase in the following theorem that our algorithm for tree retraining scales only poly-logarithmically with the total number of training examples. 

\begin{theorem}
    Given quantum access to previous training examples $X \in \mathbb{R}^{N \times d}$ and new training examples  $X_{\text{new}} \in \mathbb{R}^{N_{\text{new}} \times d}$ new examples such that $N_{\text{new}} \ll N$, there is a quantum decision tree algorithm to retrain the tree with the new examples in time
    \begin{equation}
        T_{\text{retrain}} \approx T_{\text{load-new}} +  T_{\text{corr}} +  T_{\text{algo}},
    \end{equation}
   where  $T_{\text{load-new}} \approx \mathcal{O}(N_{\text{new}}d\log^2(N_{\text{new}}d))$ is the time taken to load the new examples and the corresponding labels in the KP-tree.  $T_{\text{corr}} , T_{\text{algo}}$ are computed in Eq~\ref{eq:corr}, \ref{eq:algo} and both depend poly-logarithmically in the total number of examples $N_{\text{tot}} = N + N_{\text{new}} \approx N$. 
\end{theorem}
\begin{proof}
We structure the proof of the above theorem in the following steps.
\begin{enumerate}
    \item \emph{Loading new examples}: Our Des-q algorithm for retraining would require quantum access to not just the examples $x_i, i \in [N_{\text{tot}}]$, but also the ability to to create the feature vectors $x^{(j)}, j \in [d]$ and the label vector $Y$ over the combined dataset of size $N_{\text{tot}}$. To load the examples $x_i$ in the KP-tree data structure, we create $N_{\text{new}}$ binary trees data structure $B_i^{new}, i \in [N_{\text{new}}]$ as highlighted in Sec~\ref{sec:KP-tree}. Creation of these trees take time $\mathcal{O}(N_{\text{new}}\log^2(N_{\text{new}}d))$. Once these trees are constructed, one can query the examples of the combined dataset $x_i, i \in [N_{\text{tot}}]$ in time  $\mathcal{O}(\emph{poly}\log(N_{\text{tot}}d))$. \\

    It follows the loading of the feature values $x^{(j)}$ of the new dataset $X_{\text{new}}$ into the existing KP-tree data structure consisting of $d$ binary trees $B_j, j \in [d]$. These trees were initially constructed to store the $d$ feature values of $N$ examples in the initial tree construction as mentioned in Lemma~\ref{lemma:supcol}. One can do this by \emph{horizontally} adding the feature values of the new examples into the existing $B_j$ i.e., by expanding each of the $d$ trees to add more leaf nodes which are connected all the way up to the root node. Each binary tree $B_j$ initially had $N$ leaf nodes, which are now appended with $N_{\text{new}}$ new leaves such that the root node of $B_j$ contains the norm value of the $j$-th feature vector of the combined dataset consisting of $N_{\text{tot}}$ examples i.e., $\|x^{(j)}\| = \sqrt{\sum_{i=1}^{N_{\text{tot}}}x_{ij}^2}$. Modifying all the $d$ binary trees takes time $\mathcal{O}(N_{\text{new}}d\log^2(N_{\text{new}}d))$. Once these trees are modified, one can query the feature states $x^{(j)}$ of the combined dataset in time $\mathcal{O}(\emph{poly}\log(N_{\text{tot}}d))$. \\

    Similarly, we need to modify the KP-tree structure $B$ which stores the label values as introduced in Lemma~\ref{lemma:suplabel}. The tree is modified by adding $N_{\text{new}}$ new leaf nodes such that the root of the tree stores norm value $\|Y\| = \sum_{i=1}^{N_{\text{tot}}}y_{i}^2$. Modifying the tree takes time  $\mathcal{O}(N_{\text{new}}\log^2(N_{\text{new}}))$ and the label state $\ket{Y}$ for the combined dataset can be retrieved in time $\mathcal{O}(\emph{poly}\log(N_{\text{tot}}))$. \\

    Thus the total time to load the examples, features and the label for the dataset $N_{\text{new}}$ into the KP-tree data structure is
    \begin{equation}
    \begin{split}
        T_{\text{load-new}} &= \mathcal{O}(N_{\text{new}}d\log^2(N_{\text{new}}d)\\ &+ \mathcal{O}(N_{\text{new}}\log^2(N_{\text{new}} ))\\
        &= \mathcal{O}(N_{\text{new}}d\log^2(N_{\text{new}}d)). \\
    \end{split}
    \end{equation}

    \item \emph{Feature Weights + Supervised Clustering + Leaf Label}: Once the new data has been loaded into the KP-tree data structure, it can be retrieved in time poly-logarithmic in the total number of examples $N_{\text{tot}}$. Subsequently, one can apply the techniques introduced in Sec~\ref{sec:bivariatecorr}, Sec~\ref{sec:clustering}, Sec~\ref{sec:Tree_growth}, and Sec~\ref{sec:majority label} respectively to build the decision tree and compute the leaf labels. The time to generate the feature weights for the combined dataset is
    \begin{equation}
    \begin{split}
     T_{\text{corr-new}} &= \mathcal{O}\left(\text{poly} \log(N_{\text{tot}}d) \frac{d\eta}{\epsilon}\right)\\
     &=  \mathcal{O}\left(\text{poly} \log(Nd) \frac{d\eta}{\epsilon}\right)\\ &= T_{\text{corr}}.
    \end{split}
    \end{equation}
    And the time to do the supervised clustering and leaf label assignment is\\

\begin{widetext}
   \begin{equation}
\small
\begin{split}
        T_{\text{algo-new}} 
        &= \mathcal{O}\left(\emph{poly} \log(N_{tot}d) \frac{ D k^{2D}  \log (k) \log(1/\Delta)\eta_1}{\epsilon_1}\left(\frac{K d k^{\frac{1}{2}}\log (k) \log(1/\Delta)\eta_1}{\epsilon_1\epsilon_2}\right)\right)\\
        &=  T_{\text{algo}}. \\
\end{split}
\end{equation} 
\end{widetext}

\end{enumerate}

Thus the total time taken to retrain the decision tree is
\begin{equation}
\begin{split}
     T_{\text{retrain}} &= T_{\text{load-new}} +  T_{\text{corr-new}} +  T_{\text{algo-new}} \\
     &= T_{\text{load-new}} +  T_{\text{corr}} +  T_{\text{algo}}.
\end{split}
\end{equation}

From this, we see that our decision tree for retraining has a linear dependence on $N_{\text{new}}$ due to the $T_{\text{load-new}} $ factor, and poly-logarithmic dependence on $N_{\text{tot}} \approx N$ which comes from the term $T_{\text{corr-up}} +  T_{\text{algo-up}}$. Under the small batch updates criterion $N_{\text{new}} \ll N$, we see that our algorithm for tree retraining has scales poly-logarithmically in $N$, thus ensuring that the tree update is extremely fast. 
\end{proof}


\section{Numerical results} \label{sec:numerics}

This section aims to demonstrate that the proposed quantum decision tree construction method, Des-q, achieves competitive performance compared to state-of-the-art axis-parallel split-based methods commonly employing impurity measures or information gain for classification tasks and variance reduction for regression tasks. In our numerical simulations, we utilize the classical version of Des-q denoted as Des-c, which leverages the $k$-means clustering algorithm. The performance in terms of the accuracy of predictions of $k$-means closely aligns with the robust $\delta$-k-means algorithm, as demonstrated by Kerenidis \etal\cite{kerenidis2019q}, under appropriate $\delta$ selection. Further, $\delta$-k-means is a good approximation of the performance of $q$-means algorithm. Therefore, the performance in accuracy of predictions of Des-c closely aligns with the performance of Des-q executed on quantum hardware.

As discussed in Sec~\ref{sec:bivariatecorr}, the core component of the Des-q algorithm involves quantum estimation of the point-biserial correlation for binary classification and the Pearson correlation for regression, with numerical labels. For our numerical simulations, we calculate these values classically. For more technical details about the implementation, we refer to Appendix \ref{appendix:details_numerics}. 

Note that we focus on benchmarking the performance in the quality of predictions and we do not include comparisons of runtime because Des-c has no speedup over classical standard methods for decision tree construction and retraining as it utilizes the classical $k$-means algorithm. Therefore, in contrast to the quantum algorithm, Des-c scales polynomially in $N$ and $d$. The speedup in Des-q comes from leveraging $q$-means for construct and retrain the tree.

\textbf{Limitations}. The implementation of Des-q requires the use of quantum-accesible-data structures: KP tree for $X$ and list type data structure for $Y$. We refer the reader to Jaques \etal \cite{jaques2023qram} and Allcock \etal \cite{allcock2023constant} for a discussion on implementations. The speedup of Des-q is in the asymptotic regime. Therefore, it is needed that these structures handle large datasets, which can be challenging. However, some erasing techniques may be used or compression methods (e.g., with Principal Component Analysis) over old data to allow for space for more recent data. In terms of circuit requirement, since we perform quantum clustering in a sequential manner to build the tree, it is anticipated that the resulting circuits will be quite deep, necessitating either extended coherence times or the use of error correction techniques. In terms of number of qubits, this will depend on the data size $d$. Although this number may be big, the proposed algorithm targets datasets where $N \gg d$.

We hope that the proposed quantum method may trigger interesting future works that provide a quantitative resource analysis of hardware requirements and potential hardware demonstrations, as the hardware keeps on developing.

\subsection{Main results}

We present our results using four datasets comprising numerical features: PIMA \cite{pima-indians-diabetes}, Spambase \cite{misc_spambase_94}, Blood \cite{misc_blood_transfusion_service_center_176}, and Boston housing \cite{harrison1978hedonic}. Among these, the first three datasets are binary classification tasks, while Boston housing has numerical label values making it suitable for the task of regression. Their characteristics are shown in Table \ref{table:datasets} where \emph{Instances} denotes the number of labeled examples, and \emph{Features} denotes the number of features ($d$) for each example. For all the considered datasets, we split the data into train and test using a ratio of $0.3$ and created ten train-test of equally sized subsets of data, called folds, for binary classification dataset and five for the regression dataset. These folds maintain the proportion of labels. Subsequently, each fold is used to train the decision tree model and the performance is evaluated in both train and test sets. We perform standard data normalization techniques on each fold, ensuring that each feature has a mean of 0 and a standard deviation of 1.

\begin{table}[h!]
  \centering
    \begin{tabular}{|p{0.8in}|p{0.6in}|p{0.6in}|p{0.6in}|} \hline 
      \cline{1-4} 
    Task & Dataset & Instances & Features \\ \cline{1-4} 
    \multirow{3}{*}{}
     & PIMA & 768 & 8  \\ \cline{2-4} 
    Classification & Spambase & 4601  & 57  \\ \cline{2-4} 
     & Blood & 748  & 4   \\ \cline{1-4} 
    Regression & Boston housing & 506  & 8   \\ \cline{1-4} 
    \hline    
    \end{tabular}
\caption{Characteristics of the datasets used for the performance benchmark.  \label{table:datasets}}
\end{table}

We compare the performance of Des-c for multi-way decision tree construction against the state-of-the-art decision tree construction method. In particular, we employ the DecisionTree class for classification and the DecisionTreeRegressor class for regression, with both classes being implemented in scikit-learn \cite{scikit-learn}. We set the criterion, the impurity measure, to be the entropy for classification and mean square error for regression. The only hyperparameter of this method that we modify for the experiments is tree depth. For Des-c we evaluate its performance for different number of clusters $k$, ranging from two to seven, depending on the dataset. We include in the data availability section below the link to a repository with the post-processed data utilized for the experiments, the results obtained and the Jupyter notebooks to reproduce the results in the article.

The main performance results of Des-c against the baseline decision tree method are highlighted in Table \ref{table:main_results_numerics}. The metric used to benchmark the performance, shown in column \emph{Performance}, is the accuracy for the classification task and the root mean square error (RMSE) for regression, which is defined as: 
\begin{equation}
    RMSE=\sqrt{\frac{1}{N}\sum_{i=1}^N (Y_i - \hat{Y_i})^2},
\end{equation}
where $N$ is the number of samples, $Y_i$ is the label of the $i$-th sample and $\hat{Y_i}$ is the predictor obtained for that same sample.

\begin{table*}[t]
  \centering
    \begin{tabular}{|p{0.9in}|p{1in}|p{0.8in}|p{0.8in}|p{0.4in}|p{0.4in}|p{0.5in}|} \hline
    Task & Dataset & Model & Performance & Tree size &Tree depth &Clusters  \\ \cline{1-7} 
    
    \multirow{3}*{}  & \multirow{2}*{PIMA} &Baseline &  $74.64\%\pm 2.81$  & 7 & 2 & 2 \\ \cline{3-7} 
     &  & Des-c &  $70.34\%\pm 4.53$    & 52.2 & 2 & 7\\ \clineB{2-7}{2.5} 

     & \multirow{2}*{Spambase} & Baseline &    $81.89\%\pm 2.32$  & 7 & 2 & 2\\ \cline{3-7} 
   Classification & & Des-c & $80.71\%\pm 9.38$ & 27  &  2 & 5 \\ \clineB{2-7}{2.5}
        
    
    & \multirow{2}{*}{Blood} & Baseline &  $77.07\%\pm 2.10$  & 7 & 2 & 2 \\ \cline{3-7} 
    & & Des-c & $77.26\%\pm 2.05$  & 12.8  & 2  & 3\\ \clineB{2-7}{2.5} 
    \hline

    \multirow{2}*{Regression} & \multirow{2}*{Boston Housing} & Baseline & $0.053 \pm0.007$ & 7 & 2 & 2  \\ \cline{3-7} 
    & & Des-c & $0.053 \pm0.007$   & 20 & 2&4 \\ \cline{2-7}   
    \hline 
    
    \end{tabular}
\caption{Main results of the performance in the test for two tasks: binary classification and regression and four datasets. The performance metric utilized for classification is the accuracy (in $\%$) and the one for regression is the RMSE. The values correspond to the average and the standard deviation across the folds considered. We compare the classical implementation of the quantum proposed algorithm, referred as Des-c, to the baseline, a binary decision tree built based on the reduction of entropy for classification and mean-square error for regression. For classification, Des-c incorporates the point-biserial correlation as a feature weight whereas for regression, it utilizes the Pearson correlation. Note that beyond depth $2$ models start to overfit. The results reported correspond to the minimal number of clusters for which Des-c performs on par with the baseline. The tree size is defined as the average number of tree nodes across the folds.\label{table:main_results_numerics}}
\end{table*}

The results show that the performance of the proposed method, Des-c, matches the baseline across all analyzed datasets. This comparison is made at the same tree depth, affirming that Des-c competes effectively with the baseline. We also conducted evaluations with deeper trees, extending up to $5$ for classification tasks and $10$ for regression tasks. It becomes evident that for depths beyond $2$, all models exhibit a decline in test accuracy, signifying the onset of overfitting. Furthermore, it is noteworthy that while the baseline relies on binary splits, Des-q employs multi-way splits, and the optimal results were achieved with a number of clusters (denoted as \emph{Clusters} in the table) greater than two. Importantly, an increase in the number of clusters ($k$) does not introduce significant complexity to the algorithm (as indicated by Eq~\ref{eq:algo}), as $k \ll N$, where $N$ corresponds to the number of training samples. In summary, the proposed method demonstrates competitive performance when compared to state-of-the-art approaches.

In the following subsections, we provide a more detailed examination of the results for both the classification task (Sec~\ref{subsec:numerics_classification}) and regression (Sec~ \ref{subsec:numerics_regression}). Not only do we compare Des-c to the baseline but also to the same method but without incorporating feature weight in order to highlight the benefits that the incorporation of these weights in the distance metric (Eq \ref{ec_distance_algo}), part of the supervised clustering, brings about. In each subsection, we assess the performance of tree inference with test data. Furthermore, we discuss the tree construction performance, specifically how well it reduces the impurity measure (entropy for classification and variance for regression) and its accuracy with the training data. This analysis is crucial to determine whether the tree construction aligns with expectations, which involve reducing the impurity measure, despite the fact that the proposed method does not directly optimize this measure.

\subsection{Classification} \label{subsec:numerics_classification}

For this task, Des-c utilizes the point-biserial correlation as the feature weight. We compare this approach against using the same method without any feature weights, which corresponds to setting $w_l = 1$ for all $l \in [d]$ in the distance calculations within Des-c (as shown in Eq \ref{eq:weighted-inner-product}). We evaluate the performance of the proposed algorithm against a state-of-the-art method, which optimizes the splits to reduce entropy. Although Des-c does not directly optimize splits based on entropy, we expect to observe a reduction in entropy as the tree grows. We define the entropy of a tree trained up to depth $D$ as the weighted sum of the entropy of its individual leaves, following this formula: $E_D = \sum_{i=1}^n f_i e_i$, where $f_i$ represents the fraction of samples in the $i$-th node, $e_i$ denotes the entropy at that node, and $n$ is the total number of leaves in the tree.

\subsubsection{Tree construction: performance with training data} 
The conclusions we will discuss regarding tree construction have been observed for the three datasets considered. To facilitate visualization and discussion, we will focus on the results for the PIMA dataset, but it is important to note that these conclusions apply across all datasets. We begin our analysis with a depth one ($D=1$), which corresponds to the root node and the first level of children nodes. In Fig. \ref{fig:clustering_classification}, we compare the entropy ($E_{D=1}$) obtained with Des-c, incorporating feature weights (point-biserial correlation), to the entropy obtained without weights (\emph{no weight}) as a function of the number of clusters. Additionally, we compare these results to the entropy obtained by the binary decision tree (baseline method) at depth one.

We observe an overall reduction in entropy as the number of clusters increases. As expected, the inclusion of label information for supervised clustering results in a more pronounced reduction in entropy as the tree depth increases, as illustrated in Figure \ref{fig:training_classification}(a). In practice, while not incorporating feature weights results in an entropy that is not significantly different from the baseline, incorporating feature weights reduces the entropy even more than the baseline. This improvement in entropy improves the performance in training accuracy where Des-c achieves competitive and even better performance compared to the baseline for some higher values of the number of clusters ($k \in [4,5,6,7]$) and for higher tree depths ($d \in [3,4,5]$), as shown in Figure \ref{fig:training_classification}(b).

\begin{figure}[h!]%
    \centering
    \includegraphics[width=8 cm]{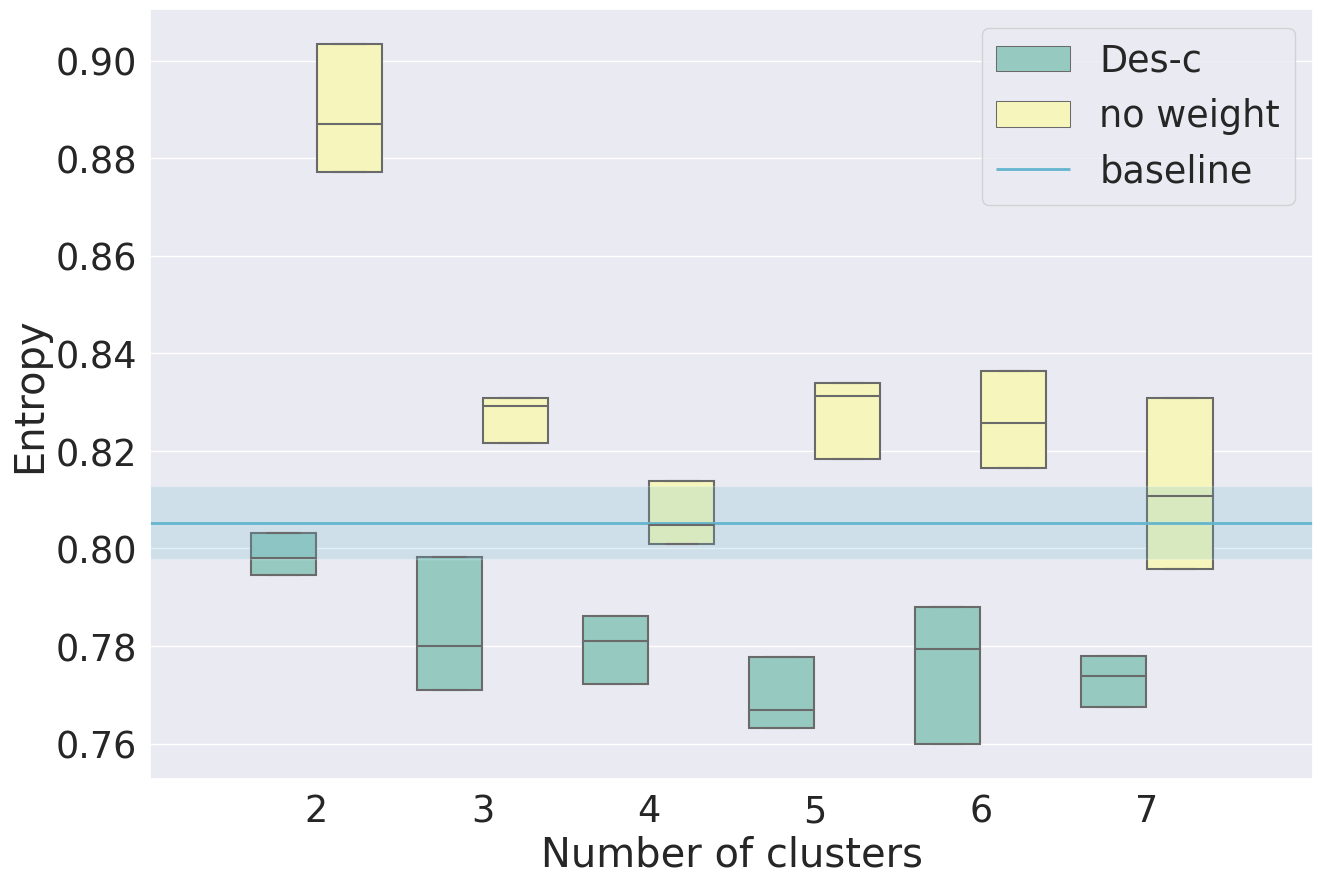} 
    \caption{Entropy as a function of the number of clusters. Des-c is compared with the same method but without weight (\emph{no weight}). The boxes represent the statistics over the ten folds considered. We include the baseline, which is the entropy calculated by the baseline method. The shaded area corresponds to the standard deviation of the median entropy obtained by the baseline method (the median value is shown with the solid line).} 
    \label{fig:clustering_classification}%
\end{figure}

We assess the performance of trees constructed using Des-c across various depths and with different cluster sizes ($k$), comparing the results with the method lacking feature weighting (\emph{no weight}) using the same cluster count, as well as the baseline binary-split method. Figures \ref{fig:training_classification} (a) and (b) showcase the entropy and accuracy, respectively, for visualization purposes, plotting the mean values while omitting the standard deviation as error bars. In Table \ref{table:main_results_numerics}, which presents the primary outcomes reflecting the best performance, we include the standard deviation.

\begin{figure*}[]%
    \centering
    \subfloat[\centering]{{\includegraphics[width= 7.8 cm]{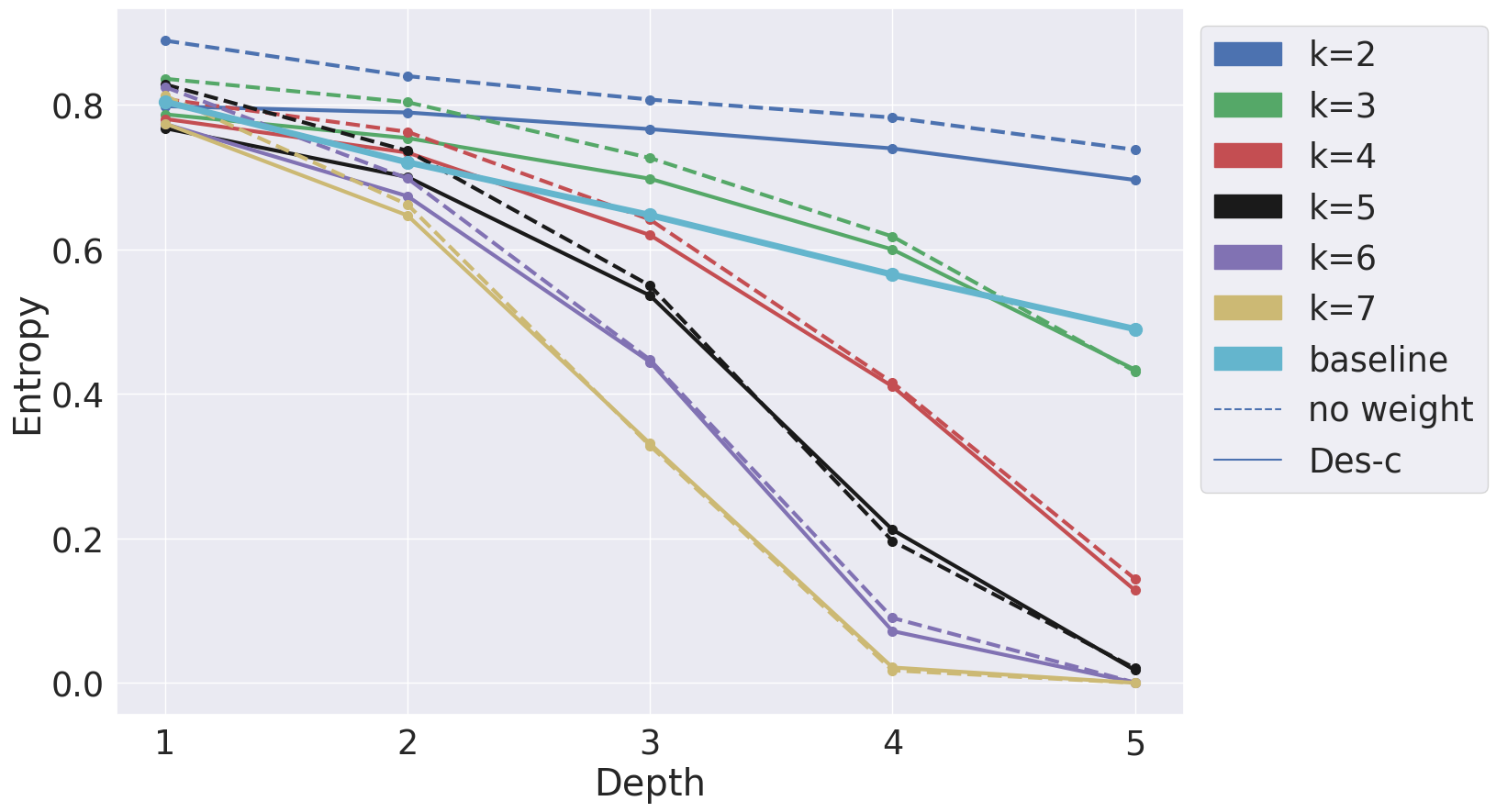} }}%
    \qquad
    \subfloat[\centering]{{\includegraphics[width=7.8 cm]{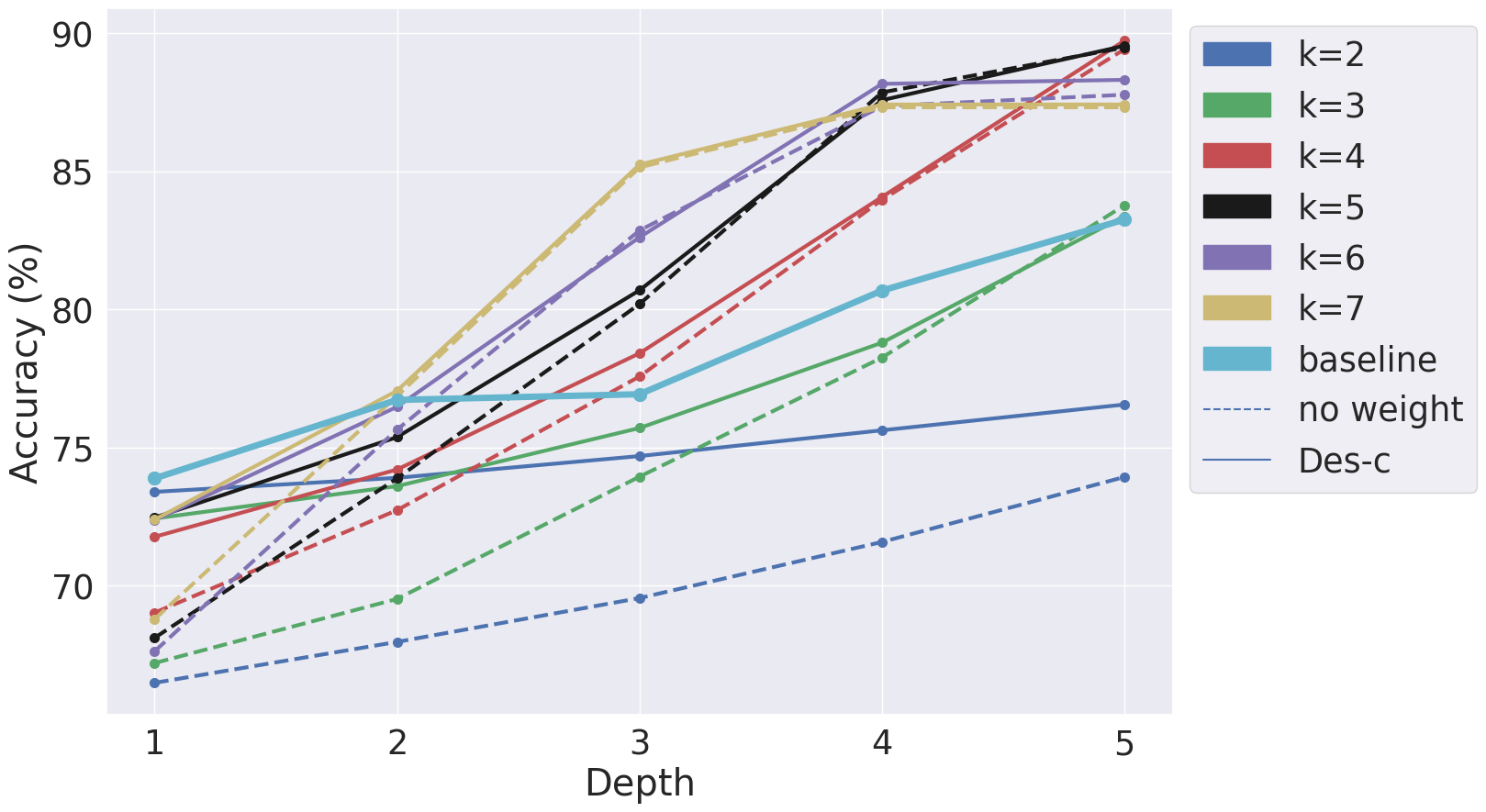} }}%
    \caption{Performance in the training of the decision trees for the PIMA dataset: (a) shows the entropy at each depth as a function of the tree depth and (b) shows the accuracy (in $\%$) as a function of the tree depth. It is compared Des-c (solid line) to the same method but with no weight  (\emph{no weight}) (dotted line) for different numbers of clusters ($k$), shown in colors. We compared against the baseline (cyan solid line). The values shown correspond to the mean across the folds. We avoid plotting the error bars (standard deviation) for visualization purposes and error values are in line with the values in Table \ref{table:main_results_classification}. } %
    \label{fig:training_classification}%
\end{figure*}

As depicted in Fig.\ref{fig:training_classification} (a), entropy consistently improves (i.e., decreases) as the depth increases across all methods. A higher cluster count ($k$) also contributes to improved entropy in our methods, and for high values of $k$, we even observe better entropy at a fixed depth compared to the baseline, suggesting the potential advantages of Des-c. Furthermore, we observe comparable or better performance when incorporating feature weights. This demonstrates that our proposed method effectively reduces entropy as nodes are split and the tree grows, aligning with our expectations. In terms of accuracy (Figure \ref{fig:training_classification} (b)), Des-c consistently achieves higher accuracy than the method without feature weights for a fixed number of clusters ($k$). All three methods exhibit improved accuracy as the tree depth increases. Notably, for $D \geq 2$, within the overfitting regime, Des-c surpasses the baseline in terms of accuracy. Overall, our findings indicate that, in terms of training performance, the proposed Des-c method competes effectively with the baseline.

\subsubsection{Tree inference: performance with test data} 
In Table \ref{table:main_results_classification}, we present the results for the three datasets at depths below the over-fitting regime ($D \leq 2$), considering different numbers of clusters when evaluating Des-c. The table displays results corresponding to the minimum $k$ value for which Des-c achieves performance comparable to the baseline. Additionally, we provide a comparison with the baseline, which employs binary splits for all depths. For completeness, we report the accuracy obtained with these specific trees on the training data.
Our findings indicate that, across the three datasets, the test accuracy achieved with Des-c closely aligns with the accuracy achieved with the baseline for the depths considered. To approach or match the baseline accuracy, it is necessary to increase the number of clusters beyond binary splits when using Des-c. This explains the higher numbers for tree size, i.e., total number of nodes, utilized by the proposed method.

\begin{table*}[t]
  \centering
    \begin{tabular}{|p{0.6in}|p{0.7in}|p{1.2in}|p{1.35in}|p{0.4in}|p{0.35in}|p{0.5in}|} \hline 
      \cline{3-5} 
    Dataset & Model & Test Acurracy (\%)  & Training Accuracy (\%) &Tree size &Tree depth &Clusters per depth  \\ \cline{1-7} 
    
    \multirow{3}{*}{PIMA} 
    & Baseline &   $73.51\pm 4.36$  &  $73.88\pm 0.51$ & 3 & 1 & 2\\ \cline{2-7} 
    &  Des-c  & $69.9\pm 6.41$ & $73.39\pm 2.45$   & 3  & 1 & 2\\  \clineB{2-7}{2.5} 
    
    &Baseline &  $74.64\pm 2.81$ & $76.72\pm 1.1$ & 7 & 2 & 2 \\ \cline{2-7} 
    &  Des-c  &  $70.34\pm 4.53$ & $77.05\pm 0.81$   & 52.2 & 2 & 7\\ \clineB{2-7}{2.5} 
    \hline 
    
    \multirow{3}{*}{Spambase} 
    & Baseline &   $74.97\pm 12.1$  &  $79.44\pm 0.13$ & 3 & 1 & 2\\ \cline{2-7} 
    &  Des-c  & $75.47\pm 11.78$ & $75.01\pm 11.3$ & 6  & 1 & 5\\ \clineB{2-7}{2.5} 
    
    &Baseline &  $81.89\pm 2.32$ &  $82.59\pm 0.13$ & 7 & 2 & 2 \\ \cline{2-7} 
    &  Des-c  &  $80.71\pm 9.38$ & $79.61\pm 9.02$   & 27 & 2 & 5\\ \clineB{2-7}{2.5} 
    \hline

    \multirow{3}{*}{Blood} 
    & Baseline  & $77.63\pm 0.82$  &   $77.63\pm 0.09$   & 3 & 1 & 2\\ \cline{2-7} 
    &  Des-c  & $76.87\pm 1.5$ &  $77.74\pm 0.18$   & 4  & 1 & 3\\ \clineB{2-7}{2.5} 
    
    &Baseline &  $77.07\pm 2.10$ &  $78.05\pm 0.7$ & 7 & 2 & 2 \\ \cline{2-7} 
    &  Des-c  & $77.26\pm 2.05$ & $77.88\pm 0.19$   & 12.8  &  2 & 3\\ \cline{2-7} 
    \hline
    \end{tabular}
\caption{Benchmark of binary classification for the PIMA, Spambase and Blood datasets. We compare the proposed method Des-c to the baseline for two tree depths: $1$ and $2$, below the over-fitting regime. It is shown the mean values and their standard deviations for the test and training accuracy ($\%$). The tree sizes correspond to the average of the sizes among the trees constructed in each fold. \label{table:main_results_classification}}
\end{table*}


\subsection{Regression} \label{subsec:numerics_regression}

For the regression task, our goal is to minimize variance as the tree nodes are split. The variance of a tree trained at a specific depth $D$ is calculated as follows: $V_D = \sum_{i=1}^n f_i Var(\hat{Y_i})$, where, $f_i$ is the fraction of samples in the $i$-th node, $\hat{Y_i}$ is the predictor of the sample, $Var$ refers to the variance and $n$ is the total number of leaves in the tree. 

\subsubsection{Tree construction: performance with training data} 
Similar to our approach for classification, we start by analyzing the performance at a depth of one. In Figure \ref{fig:regression_clustering_pearson}, we compare the variance achieved with Des-c and the same method without weight (labeled as ``no weight") as a function of the number of clusters ($k$), and we compare these results against the baseline. We observe that as we increase the number of clusters, the variance decreases and approaches the baseline. Specifically, for $k=5$ and $k=6$, the variance obtained with Des-c overlaps with the baseline. This observation indicates that Des-c becomes increasingly competitive with the baseline as the number of clusters grows.

\begin{figure}[h!]
    \centering
    \includegraphics[width=8 cm]{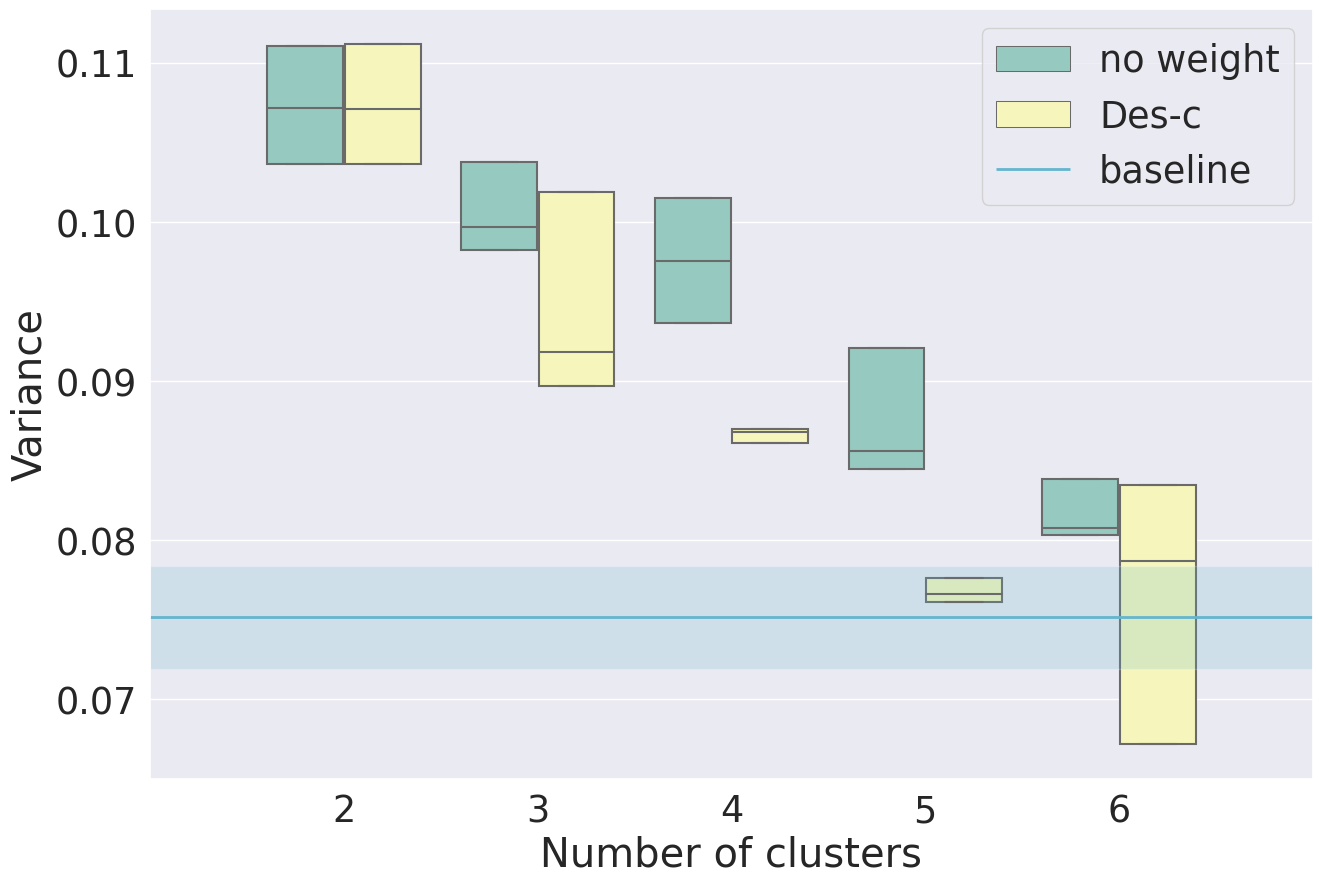}%
    \caption{Variance as a function of the number of clusters. It is compared Des-c and the same method but without weight ( \emph{no weight}). The boxes correspond to the statistics over the ten folds considered. We include the baseline, which is the variance calculated by the baseline method. The shaded area corresponds to the standard deviation of the variance corresponding to the baseline method (the median value is shown with the solid line). }%
    \label{fig:regression_clustering_pearson}%
\end{figure}

In Fig. \ref{fig:regression_training}, we compare the performance of the decision trees as a function of the tree depth in the training dataset. Similar to the regression scenario, the performance of the models benefits from an increase in depth or the number of clusters. As $k$ increases, we see in Fig. \ref{fig:regression_training} (a) that the variance drops even more rapidly with Des-c as the depth increases than the method with no weight, showing the improvement of tree construction by incorporating the feature weights. Regarding the RMSE, as shown in Figure \ref{fig:regression_training} (b), all three methods experience a reduction in RMSE as the tree depth increases. The RMSE values obtained with Des-c are competitive with those of the baseline. Furthermore, it is evident that the proposed method, Des-c, achieves a greater reduction in RMSE compared to the case without weight. Consequently, in terms of training performance, we can conclude that the proposed method, Des-c, competes effectively with the baseline.

\begin{figure*}%
    \centering
    \subfloat[\centering]{{\includegraphics[width=7.8cm]{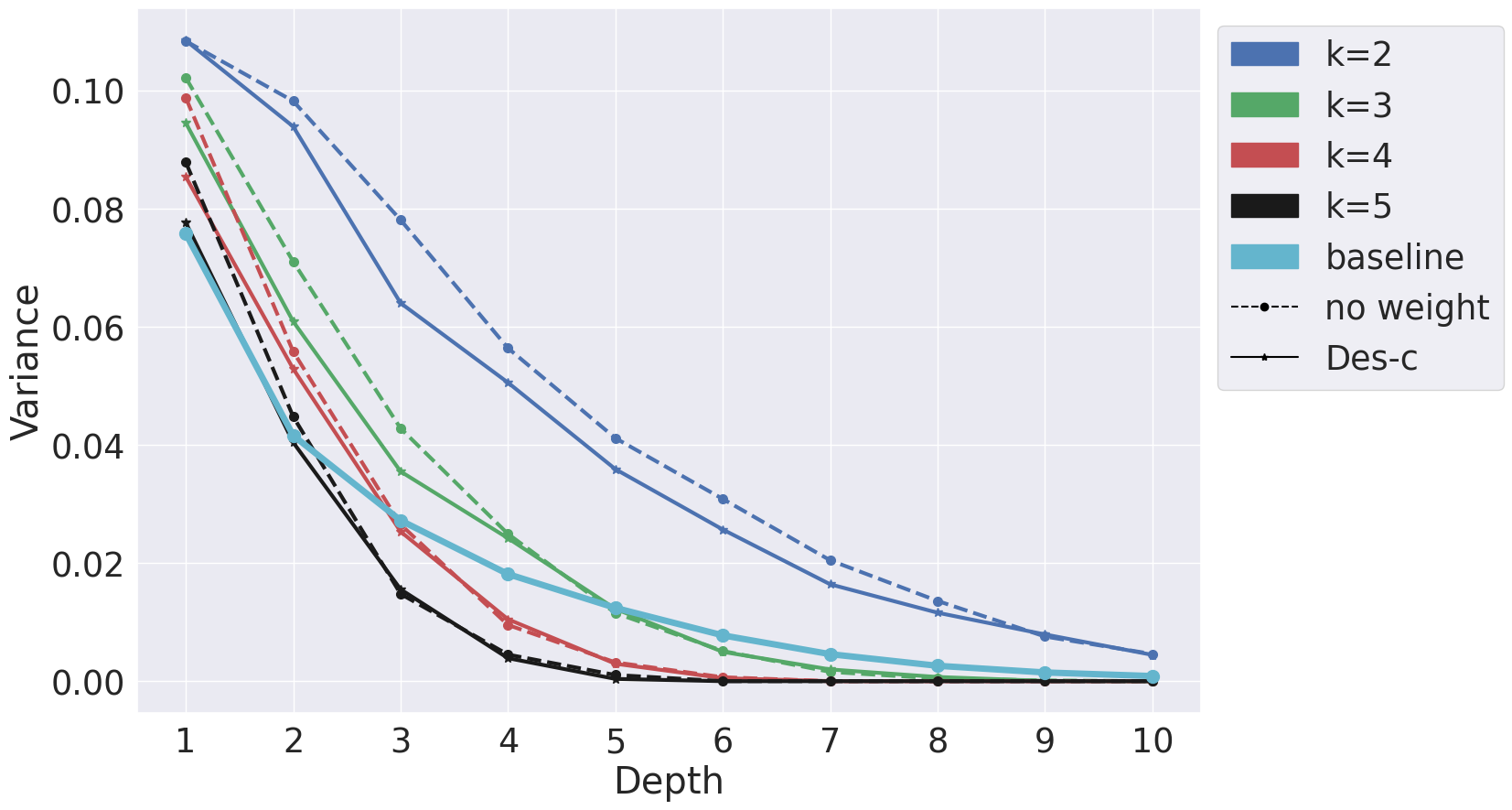} }}%
    \qquad
    \subfloat[\centering]{{\includegraphics[width=7.8cm]{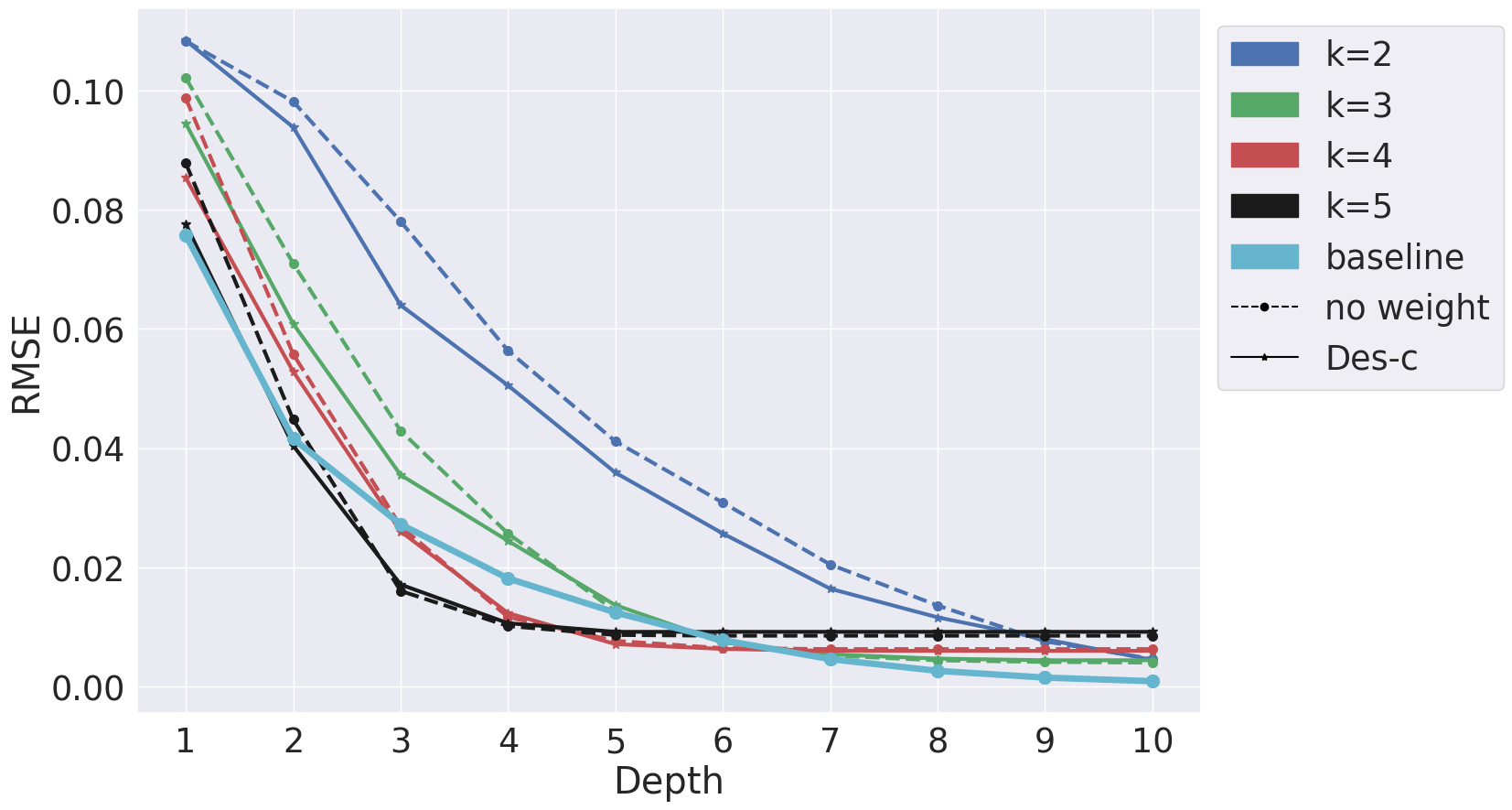} }}%
    \caption{Performance in the training of the decision trees for the regression of the Boston housing dataset: (a) shows the variance at given depth as a function of the tree depth and (b) shows the RMSE of the predictors as a function of the tree depth. The performance of Des-c (solid line) is compared to the same method but with no weight  (\emph{no weight} ) (dotted line) for different numbers of clusters ($k$), shown in colors. We compared against the baseline (cylon solid line). The values shown correspond to the mean across the folds. The standard deviation is not plotted as error bars to help visualize the trend. In the tables, we report the values with their standard deviation. 
    }%
    \label{fig:regression_training}%
\end{figure*}

\subsubsection{Tree inference: performance with test data} 
To address a regression task, once we have constructed the tree using the training data, it becomes essential to assign a predictor to each leaf node. This predictor corresponds to the prediction for all the samples allocated to that specific leaf node. The approach involves calculating the average value of the labels associated with the training samples assigned to the leaf node, where ${Y_i}, i \in n_i$, and $n_i$ represents the training samples assigned to the $i$-th leaf. Table \ref{table:main_results_regression} presents the results for two different tree depths (depths $1$ and $2$). We assess the trees built with Des-c using varying numbers of clusters and compare them to the baseline, which employs binary splits at all depths. Our report includes the results obtained with "Des-c" using the minimum number of clusters ($k$) required to achieve RMSE comparable to the baseline. We also provide the variance observed in the test results, which represents the weighted sum of variances among the test samples assigned to the leaf nodes of the tree.

Our findings indicate that, for the two depths considered, Des-c performs comparably with the baseline, as there is an overlap between RMSE and variance for both methods. However, as previously noted for classification, the proposed method demonstrates improved performance and competitiveness with the baseline as the number of clusters is increased.

\begin{table*}[t!]
  \centering
    \begin{tabular}{|p{0.9in}|p{0.9in}|p{1.1in}|p{0.6in}|p{0.8in}|p{0.8in}|} \hline 
      \cline{1-6} 
     Method & RMSE & Variance & Tree depth   & Tree size & Clusters \\ \cline{1-6} 
    
     Baseline &  $0.084 \pm 0.015$ & $0.081 \pm 0.017$ & 1 & 3 & 2 \\ \cline{1-6} 
    Des-c   & $0.091 \pm0.026$ & $0.084 \pm0.025$  & 1 & 5 &4 \\ \cline{1-6}  
     Baseline & $0.054\pm0.007$  & $0.049\pm0.011 $ &2 & 7 & 2\\ \cline{1-6} 
    Des-c & $0.057 \pm0.051$ & $0.037 \pm 0.057$ & 2 & 31 & 5\\   
    \hline 
    
    \end{tabular}
\caption{Performance in test dataset for the regression of the Boston housing dataset. RMSE and the variance are reported. We compare the  for the same number of cluster for two tree depths: $1$ and $2$. They are also compared against the binary-split technique (the baseline) for the same tree sizes and (number of nodes). It is reported the average and the standard deviation across the five folds of both RMSE and variance and the tree size correspond to the average. \label{table:main_results_regression}}
\end{table*}

\section{Conclusion} \label{sec:conclusion}

State-of-the-art decision tree methods rely on axis-parallel splits, using impurity measures for classification and variance reduction for regression. These methods scale polynomially with both the number of training examples $N$ and features $d$. Existing quantum algorithms achieve quadratic speedups in $d$ but not in $N$, which is usually more critical for practical speedups. Additionally, they don't address the need for periodic updates to maintain model performance with new data.

We introduce Des-q, a novel quantum algorithm for constructing and retraining decision trees. Once deployed, it can retrain the tree in poly-logarithmic time relative to the number of training examples $N$. Des-q is designed for binary classification and regression tasks with numerical data.

Des-q leverages the quantum version of the k-means algorithm devised by Kerenidis \etal \cite{kerenidis2019q}, known as q-means, by developing a quantum-supervised clustering algorithm that is utilized to sequentially split the nodes of the tree, from root up to reaching a desired depth. This quantum-supervised clustering algorithm incorporates feature weights in the distance calculation between the training examples and the centroids. To implement feature weighting, we have introduced a quantum algorithm for estimating the Pearson correlation in regression tasks, where the labels are continuous, and the point-biserial correlation in classification tasks, which involve binary labels. This method scales as $\mathcal{O}(1/\epsilon)$ to estimate the feature weights within $\epsilon$ precision. 

The most significant advantage of our proposed algorithm, Des-q, lies in tree retraining, in particular when dealing with small batches of new data, where $N_{\text{new}} \ll N$. After initially loading the first $N$ examples into the KP-tree data structure with polynomial time complexity in $N$, adding a new small batch of data only requires time polynomial in $N_{\text{new}}$. This efficient process enables us to build Des-q for tree retraining with poly-logarithmic time complexity in $N + N_{\text{new}} \approx N$. 

In terms of the model performance, we have investigated the performance of the classical analog of Des-q on three datasets for binary classification and one for regression. The results demonstrate that our proposed method achieves test performance (accuracy for classification and root-mean-square error for regression) that is highly competitive with state-of-the-art axis-parallel split-based methods, which utilize entropy for constructing classification trees and variance reduction for regression trees.

We also remark on the possibility of improving the existing classical algorithms for decision tree construction and retraining by leveraging quantum-inspired classical algorithms, also referred to as \emph{de-quantized} algorithms \cite{tang2019quantum}. This could be relevant in light of the recent de-quantization result of the unsupervised clustering algorithm q-means by Doriguello \etal \cite{doriguello2023you}. We consider this as a promising direction for future research.

\section*{Data and code availability} The results presented in the paper can be reproduced utilizing the data and code available in \url{https://zenodo.org/records/8428525}.

\section*{Author contribution}

R. Yalovetzky, N. Kumar, and C. Li devised the project. N. Kumar and R. Yalovetzky wrote the algorithm for Des-q. N. Kumar, C. Li and R. Yalovetzky did the error analysis for the components of Des-q.  R. Yalovetzky devised the plan of carrying out the numerics. R. Yalovetzky, C. Li, and P. Minnsen carried out the numerics and benchmarked Des-q against state-of-art baseline decision trees. M. Pistoia led the overall project. All authors
contributed to technical discussions and the writing of
the manuscript.

\section*{Acknowledgements}
The authors thank Shouvanik Chakrabarti, Ruslan Shaydulin, Yue Sun, Dylan Herman, Arthur Rattew and the other colleagues at the Global Technology Applied Research Center of JPMorgan Chase for support and helpful discussions. In addition, the authors thank Kate Stern-Jones and Bryan C Gasche from CIB, JPMorgan Chase for valuable feedback. 

\bibliographystyle{plainurl}
\bibliography{bibliography}

\onecolumn\newpage
\appendix

\section{Ingredients} \label{sec:ingre}  

This section covers the necessary ingredients required to build Des-q.

\subsection{Amplitude amplification and estimation} \label{sec:AE}

\begin{theorem}[Amplitude amplification \cite{brassard2002quantum}]
    Given the ability to implement a quantum unitary map $U$ and $U^{\dagger}$ such that $U\ket{0} = \sin \theta \ket{x,1} + \cos \theta \ket{G, 0}$, where $\ket{G}$ is the garbage state, then we can create the state $\ket{x}$ in time $\mathcal{O}(\frac{T(U)}{\sin \theta})$, where $T(U)$ is the time required to implement $U$ and $U^{\dagger}$.
\end{theorem}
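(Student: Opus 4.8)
The plan is to reduce the statement to a Grover-type rotation inside the two-dimensional invariant subspace generated by $U\ket{0}$. First I would identify the \emph{good} subspace as the span of states carrying flag qubit $\ket{1}$, and observe that $U\ket{0} = \sin\theta\,\ket{x,1} + \cos\theta\,\ket{G,0}$ lies entirely in the plane $\mathcal{H}_\theta := \mathrm{span}\{\ket{x,1},\ket{G,0}\}$, making angle $\theta$ with the \emph{bad} axis $\ket{G,0}$. The goal is to rotate $U\ket{0}$ within $\mathcal{H}_\theta$ until it aligns with $\ket{x,1}$, at which point measuring and discarding the flag qubit leaves $\ket{x}$.

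Next I would define the amplification operator $Q = -\,U S_0 U^{\dagger} S_\chi$, where $S_\chi = I - 2\Pi_{\mathrm{good}}$ reflects about the bad subspace (flipping the sign of every state whose flag is $\ket{1}$) and $S_0 = I - 2\ket{0}\bra{0}$ reflects about $\ket{0}$. The key lemma is that $Q$ preserves $\mathcal{H}_\theta$ and acts on it as a rotation by angle $2\theta$. This follows from the standard fact that the composition of two reflections is a rotation by twice the angle between their axes: $S_\chi$ reflects about $\ket{G,0}$, while $U S_0 U^{\dagger}$ reflects about $U\ket{0}$, and these two axes are separated by exactly $\theta$.

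Given this rotation structure, after $m$ applications of $Q$ the state $Q^m U\ket{0}$ makes angle $(2m+1)\theta$ with the bad axis. I would choose $m = \lfloor \pi/(4\theta) \rfloor$ so that $(2m+1)\theta$ is within $\theta$ of $\pi/2$; the residual amplitude on the bad state is then $\cos((2m+1)\theta) = \mathcal{O}(\sin\theta)$, so the overlap with $\ket{x,1}$ is bounded below by a constant. Since each $Q$ invokes $U$ and $U^{\dagger}$ once (the reflections $S_0,S_\chi$ being cheap), the total cost is $(m+1)\,T(U) = \mathcal{O}(T(U)/\theta)$, and because $\sin\theta \le \theta$ on $(0,\pi/2]$ we have $1/\theta \le 1/\sin\theta$, which yields the claimed $\mathcal{O}(T(U)/\sin\theta)$.

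The main obstacle I anticipate is that $\theta$ need not divide $\pi/2$ evenly, so the rotation overshoots or undershoots $\ket{x,1}$ and the procedure produces only a state close to $\ket{x}$, not $\ket{x}$ exactly. To justify the phrase \emph{create the state} $\ket{x}$ I would either measure the flag register and succeed with the constant probability guaranteed above (repeating on failure), or invoke the exact-amplitude-amplification variant in which the two reflections are replaced by phase rotations of tunable angle, selecting those angles from knowledge of $\theta$ so that the final rotation lands precisely on $\ket{x,1}$. The former is simplest and is entirely adequate whenever $\ket{x}$ is consumed by a subsequent measurement-based routine, as is the case for the SWAP-test and distance-estimation subroutines invoked later in the paper.
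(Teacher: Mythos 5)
The paper does not prove this statement at all: it is imported verbatim from Brassard, H{\o}yer, Mosca and Tapp \cite{brassard2002quantum} as an ingredient, so there is no in-paper proof to compare against. Your argument is precisely the canonical proof from that reference --- the two reflections $U S_0 U^{\dagger}$ and $S_\chi$ compose to a rotation by $2\theta$ in the invariant plane $\mathrm{span}\{\ket{x,1},\ket{G,0}\}$, and $\lfloor \pi/(4\theta)\rfloor$ iterations plus $\sin\theta \le \theta$ give the $\mathcal{O}(T(U)/\sin\theta)$ bound --- and it is correct, including your handling of the overshoot issue via repetition or exact amplitude amplification (the only unstated edge case, $\theta$ near $\pi/2$, is trivial since there the unamplified state already succeeds with constant probability).
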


\begin{theorem}[Amplitude estimation \cite{brassard2002quantum}]
        Given the ability to implement a quantum unitary map $U$ and $U^{\dagger}$ such that $U\ket{0} = \sin \theta \ket{x,1} + \cos \theta \ket{G, 0}$, where $\ket{G}$ is the garbage state, then $\sin \theta$ can be estimated up to additive error $\epsilon > 0$ in time $\mathcal{O}(\frac{T(U)}{\epsilon})$, where $T(U)$ is the time required to implement $U$ and $U^{\dagger}$.
\end{theorem}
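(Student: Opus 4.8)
The plan is to reduce the estimation of $\sin\theta$ to quantum phase estimation applied to a Grover-type amplification operator, exploiting the fact that this operator acts as a plane rotation by angle $2\theta$ on a two-dimensional invariant subspace.

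First I would build the amplification operator from two reflections: $S_\chi = I - 2\ket{x,1}\bra{x,1}$, which flips the phase of the flagged ``good'' component, and $S_0 = I - 2\ket{0}\bra{0}$, the reflection about the initial state, and set $Q = -U S_0 U^\dagger S_\chi$. The key structural observation is that the plane $\mathcal{H}_\theta = \mathrm{span}\{\ket{x,1},\ket{G,0}\}$ containing $U\ket{0} = \sin\theta\ket{x,1} + \cos\theta\ket{G,0}$ is invariant under $Q$, because both reflections map $\mathcal{H}_\theta$ into itself. Restricted to this plane, $S_\chi$ is a reflection about $\ket{G,0}$ while $US_0U^\dagger$ is a reflection about $U\ket{0}$; the composition of two reflections is a rotation, and since the angle between the two mirror directions is $\theta$, the rotation angle is $2\theta$.

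Next I would diagonalize $Q|_{\mathcal{H}_\theta}$. Its eigenvalues are $e^{\pm 2i\theta}$ with eigenvectors $\ket{\psi_\pm} = \tfrac{1}{\sqrt{2}}(\ket{x,1} \pm i\ket{G,0})$, and a short computation rewrites the input state as the equal-weight superposition $U\ket{0} = -\tfrac{i}{\sqrt{2}}\bigl(e^{i\theta}\ket{\psi_+} - e^{-i\theta}\ket{\psi_-}\bigr)$. Running phase estimation on $Q$ with $U\ket{0}$ as the initial state therefore collapses, with probability $1/2$ each, to an estimate of either $2\theta$ or $-2\theta$ in the phase register, and either outcome determines $\theta$ modulo $\pi$ and hence $\sin\theta$. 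With $t$ ancilla qubits the phase estimation performs $M = 2^t$ controlled applications of $Q$ and returns an estimate $\tilde\theta$ with $|\tilde\theta - \theta| = \mathcal{O}(1/M)$ at constant success probability; since each application of $Q$ costs $\mathcal{O}(T(U))$, the total runtime is $\mathcal{O}(M\,T(U))$.

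Finally, to pass from phase accuracy to accuracy in $\sin\theta$, I would invoke the mean-value theorem: $\theta \mapsto \sin\theta$ is $1$-Lipschitz, so $|\sin\tilde\theta - \sin\theta| \le |\tilde\theta - \theta| = \mathcal{O}(1/M)$. Choosing $M = \mathcal{O}(1/\epsilon)$ then gives additive error $\epsilon$ in the estimate of $\sin\theta$ with runtime $\mathcal{O}(T(U)/\epsilon)$, as claimed. I expect the main obstacle to be the invariant-subspace analysis of the first step — verifying that $Q$ preserves $\mathcal{H}_\theta$ and acts there as a rotation by \emph{exactly} $2\theta$, and pinning down the sign and phase conventions so that the eigenvalues come out as $e^{\pm 2i\theta}$ and the input state decomposes symmetrically; the phase-estimation error bound and the Lipschitz conversion are then routine.
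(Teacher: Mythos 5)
Your proposal is correct and follows essentially the same route the paper indicates: the paper (citing Brassard et al.) sketches exactly this mechanism — phase estimation with geometrically increasing powers of the Grover-type operator $\mathcal{Q} = U\mathcal{S}_0 U^\dagger \mathcal{S}_x$, followed by inverse QFT — and your write-up simply fills in the standard details (invariant two-dimensional subspace, eigenvalues $e^{\pm 2i\theta}$, symmetric decomposition of $U\ket{0}$, and the Lipschitz conversion from phase error to error in $\sin\theta$). Your handling of the $\pm 2\theta$ ambiguity is also consistent with the paper's stated success probability of $8/\pi^2$ and $\mathcal{O}(1/\epsilon)$ iterations.
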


Suppose we have a unitary $U$ that can implement the mapping
\begin{equation}
U\ket{0}_n\ket{0} = \sqrt{p} \ket{x}\ket{1} + \sqrt{1 - p} \ket{G}\ket{0},  
\label{eq:amplitude_estimation}
\end{equation}
where $\sqrt{p} = \sin \theta$. Then for any $\epsilon > 0$, the amplitude estimation algorithm can output $\Tilde{\theta}$ such that $|\theta - \Tilde{\theta}| \leq \epsilon$ and consequently, the respective probability $\Tilde{p} = \sin \Tilde{\theta}$ is such that
\begin{equation}
    |\Tilde{p} - p| \leq 2\pi \sqrt{p(1-p)}\epsilon + (\pi\epsilon)^2 < 2\pi \epsilon + (\pi\epsilon)^2 \sim \mathcal{O}({\epsilon})
    \label{eq:ae}
\end{equation}
with probability at least $8/\pi^2$ in $P = \mathcal{O}(1/\epsilon)$ iterations of $U$ and $U^{\dagger}$. Under the hood, the algorithm relies on the use of quantum phase estimation routine \cite{nielsen2010quantum} where geometrically increasing powers of the operator $\mathcal{Q} = U\mathcal{S}_0 U^\dagger \mathcal{S}_{x}$ are applied with the operators being $S_0 = \mathbb{I} - 2\ket{0}_{n+1}\bra{0}_{n+1}$ and $\mathcal{S}_x = \mathbb{I} - 2\ket{x,1}\bra{x,1}$. This is then followed by the inverse quantum Fourier transform (QFT) to obtain the estimate $\Tilde{p}$. We note that subsequent proposals of QFT-free amplitude estimation procedures have also been developed \cite{aaronson2020quantum, grinko2021iterative, giurgica2022low}.                                                                                          
\subsection{Boosting amplitude estimation probability} \label{sec:ME}

Amplitude estimation allows us to estimate the value of $\Tilde{p}$ within a precision given in Eq~\ref{eq:ae} with probability $\alpha \geq 8/\pi^2$. We leverage the Lemma 8 from   Wiebe \etal \cite{wiebe2014quantum} to boost this success probability to any arbitrary value close to 1. 

\begin{theorem}[Majority Evaluation \cite{wiebe2014quantum}]
    Let $\mathcal{U}$ be a unitary operation that maps
    \begin{equation}
        \mathcal{U} : \ket{0}^{\otimes n} \rightarrow \sqrt{a} \ket{x,1} + \sqrt{1- a} \ket{G,0} 
    \end{equation}
    for some $1/2 < a \leq 1$ in time $T$. Then there exists a deterministic quantum algorithm such that for any $\Delta > 0$, the algorithm produces a quantum state $\ket{\Psi}$ which obeys $\|\ket{\Psi} - \ket{0}^{\otimes nk}\ket{x,1}\| \leq \sqrt{2\Delta}$ for some integer $k$ in time
    \begin{equation}
          \mathcal{O}(T\ln(1/\Delta)).
    \end{equation}
\label{theo:boosting}
\end{theorem}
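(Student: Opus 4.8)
The plan is to boost the success amplitude by the standard coherent majority-voting trick: run $\mathcal{U}$ on $k$ independent registers, coherently copy out the bit-wise majority of the outputs, and then uncompute the $k$ invocations so that the work registers return (approximately) to $\ket{0}$. Concretely, I would first apply $V := \mathcal{U}^{\otimes k}$ to $k$ fresh $n$-qubit registers, producing $\ket{\phi}^{\otimes k}$ with $\ket{\phi} = \sqrt{a}\ket{x,1} + \sqrt{1-a}\ket{G,0}$. I would then adjoin an output register and apply a reversible ``majority'' operation $M$ that writes into it the bit-wise majority of the $k$ content subregisters together with the majority of the $k$ flag bits; finally I would apply $V^{\dagger} = (\mathcal{U}^{\dagger})^{\otimes k}$ on the work registers to disentangle and reset them. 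The output is $\ket{\Psi} = V^{\dagger} M V \ket{0^{\otimes nk}}\ket{0}_{\mathrm{out}}$, and its cost is two applications of the $k$-fold $\mathcal{U}$ (accounting for the factor $2$ and the $k$) plus the majority circuit over $n$-qubit strings (the factor $n$), matching the claimed $\mathcal{O}(2Tn\lceil \ldots \rceil)$.

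The key structural observation I would exploit is that this compute--copy--uncompute layout reduces the quantum fidelity we must control to a purely classical probability. Expanding $V\ket{0^{\otimes nk}} = \sum_w \beta_w \ket{w}$ in the computational basis, $M$ acts as $\ket{w}\ket{0}_{\mathrm{out}} \mapsto \ket{w}\ket{m(w)}_{\mathrm{out}}$ with $m(w)$ the bit-wise majority, and using $\langle 0^{\otimes nk}|V^{\dagger}|w\rangle = \overline{\beta_w}$ one finds
\[
\langle 0^{\otimes nk}, x, 1 \,|\, \Psi\rangle \;=\; \sum_{w}|\beta_w|^2\,[\,m(w)=(x,1)\,] \;=\; \Pr\big[m(w)=(x,1)\big],
\]
the probability being taken over measuring all $k$ copies in the computational basis. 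Thus the overlap with the target equals the probability that the classical bit-wise majority of $k$ i.i.d.\ samples of $\ket{\phi}$ equals $(x,1)$.

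It remains to lower-bound this probability. Since the flag-$1$ branch of $\ket{\phi}$ is exactly $\ket{x,1}$, in any measured branch every register with flag $1$ holds the string $x$ exactly; hence whenever more than half of the $k$ flags equal $1$, strictly more than $k/2$ registers carry $x$, so the bit-wise majority of the contents is $x$ and the majority flag is $1$, i.e.\ $m(w)=(x,1)$. A standard binomial/Chernoff tail bound then shows that $\Pr[\#\{\text{flag}=1\} > k/2] \ge 1-\Delta$ once $k$ is chosen as in the statement, giving $\langle \mathrm{target}|\Psi\rangle \ge 1-\Delta$; combined with $\|\ket{\Psi}-\ket{\mathrm{target}}\|^2 = 2 - 2\,\mathrm{Re}\langle\mathrm{target}|\Psi\rangle$ this yields $\|\ket{\Psi}-\ket{0^{\otimes nk}}\ket{x,1}\| \le \sqrt{2\Delta}$, as required.

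I expect the main obstacle to be bookkeeping rather than conceptual: one must verify that the bit-wise majority over contents cannot be corrupted by the garbage registers (it cannot, because on the good event a strict majority of registers coincide on the entire string $x$), fix a parity convention on $k$ to rule out ties in the majority, and align the precise form of the tail bound with the denominator $|a|^2 - \tfrac12$ appearing in the stated iteration count, whose exact constant depends on which Chernoff-type inequality is invoked. The point worth emphasizing is the compute--copy--uncompute identity, which converts an apparently delicate claim about quantum fidelity into the elementary Chernoff estimate above.
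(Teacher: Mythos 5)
Your proposal is correct and follows essentially the same route as the paper's proof: apply $\mathcal{U}^{\otimes k}$ to $k$ fresh registers, coherently compute a majority into an ancilla, uncompute with $\mathcal{U}^{\dagger\otimes k}$, and bound the bad-majority amplitude via a Chernoff--Hoeffding tail, yielding the same choice of $k$ and the same $\sqrt{2\Delta}$ error bound. The differences are only presentational --- you use a bit-wise majority (a well-defined reversible circuit) and extract the error through the overlap identity $\langle 0^{\otimes nk},x,1|\Psi\rangle = \Pr[m(w)=(x,1)]$, whereas the paper splits $\mathcal{U}^{\otimes k}\ket{0^{\otimes nk}}$ into its $N_y \geq k/2$ and $N_y < k/2$ components and bounds the residual in $l_2$ norm directly; if anything, your treatment is slightly more careful about the majority map being a legitimate unitary, about ties, and about the $(|a|-\tfrac{1}{2})^2$ versus $(|a|^2-\tfrac{1}{2})$ discrepancy between the paper's theorem statement and its own proof.
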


We note that the amplitude estimation final state (pre-measurement) has the form (refer to Eq.\ref{eq:amplitude_estimation})
\begin{equation}
    \sqrt{\alpha} \ket{\Tilde{p}}\ket{G,1} + \sqrt{1- \alpha} \ket{G',0},
\end{equation}
where $\ket{G}$ and $\ket{G'}$ are garbage states. The aim is to amplify the state $\ket{\Tilde{p}}$. We utilize Theorem \ref{theo:boosting} by taking $\ket{x,1} = \ket{\Tilde{p}}\ket{G,1}$ and $\ket{G,0} = \ket{G',0}$. This is utilized to calculate the distances in Theorem \ref{thm:cd}, as it is also done in the q-means algorithms \cite{kerenidis2019q}.

\begin{proof}

\begin{equation}
   \mathcal{M} : \ket{y_1}\cdots \ket{y_k}\ket{0} \rightarrow \ket{y_1}\cdots \ket{y_k}\ket{\Tilde{y}},
\end{equation}

\begin{equation}
    ( \sqrt{a} \ket{y} + \sqrt{1- a} \ket{y^{\bot}})^{\otimes k} := A \ket{\Psi} + \sqrt{1 - |A|^2}\ket{\Phi},
\end{equation}

\begin{equation}
    \mathcal{M} \left(( \sqrt{a} \ket{y} + \sqrt{1- a} \ket{y^{\bot}})^{\otimes k}\ket{0}^{\otimes n} \right) = A \ket{\Psi}\ket{y} + \sqrt{1 - |A|^2}\ket{\Phi}\ket{y^{\bot}}.
\end{equation}

\begin{equation}
\begin{split}
          \mathcal{U}^{\dagger \otimes k}\left(A \ket{\Psi}\ket{y} + \sqrt{1 - |A|^2}\ket{\Phi}\ket{y^{\bot}}\right)  &=  \mathcal{U}^{\dagger \otimes k}\left(A \ket{\Psi}\ket{y} + \sqrt{1 - |A|^2}\ket{\Phi}\ket{y}\right)\\ 
      &+ \mathcal{U}^{\dagger \otimes k}\left(\sqrt{1 - |A|^2}(\ket{\Phi}\ket{y^{\bot}} - \ket{\Phi}\ket{y})\right) \\ 
     &= \ket{0^{\otimes nk}}\ket{y} +  \mathcal{U}^{\dagger \otimes k}\left(\sqrt{1 - |A|^2}(\ket{\Phi}\ket{y^{\bot}} - \ket{\Phi}\ket{y})\right).
\end{split}
\end{equation}

\begin{equation}
    \|\mathcal{U}^{\dagger \otimes k}\left(A \ket{\Psi}\ket{y} + \sqrt{1 - |A|^2}\ket{\Phi}\ket{y^{\bot}}\right) - \ket{0^{\otimes nk}}\ket{y}\| \leq \sqrt{2(1 - |A|^2)}.
\end{equation}

\begin{equation}
    \mathbb{P}(y^{\bot}) = \mathbb{P}(N_y < k/2) \leq \exp\left(-2k(|a| - \frac{1}{2})^2\right) = \Delta.
\end{equation}
\begin{equation}
    k \geq \frac{\ln(\frac{1}{\Delta})}{2(|a| - \frac{1}{2})^2}.
\end{equation}

\begin{equation}
    \|\mathcal{U}^{\dagger \otimes k}\left(A \ket{\Psi}\ket{y} + \sqrt{1 - |A|^2}\ket{\Phi}\ket{y^{\bot}}\right) - \ket{0}^{\otimes nk} \ket{y}\| \leq \sqrt{2\Delta}.
\end{equation}

\begin{equation}
    \mathcal{O} (knT) = \mathcal{O}\left(2Tn\ceil*{\frac{\ln(1/\Delta)}{2(|a| - \frac{1}{2})^2}}\right) = \mathcal{O}(T\ln(1/\Delta)),
\end{equation}

\end{proof}

\subsection{Useful quantum subroutines} \label{sec:subroutines}

Similar to classical boolean subroutines, quantum circuits offer a reversible version of classical subroutines in time linear in the number of qubits. We highlight a few subroutines required in the construction of Des-q.

\noindent \textbf{1. Equality and inequality}:
For two integers $i$ and $j$, there is a unitary to check the equality with the mapping $$\ket{i}\ket{j}\ket{0} \rightarrow \ket{i}\ket{j}\ket{[i=j]}$$. If the integers are represented with $n$ bits i.e., $i = i_1\cdots i_n$ (similarly for $j$), then one can perform the SWAP-test operation with the ancilla qubit being 0 if the two integers are equal, and 1 otherwise \cite{buhrman2001quantum}. 
One can also use the SWAP-test to perform the inequality test for two real numbers $x_1$ and $x_2$ by doing the following map $$\ket{x_1}\ket{x_2}\ket{0} \rightarrow \ket{x_1}\ket{x_2}\ket{[x_1 \leq x_2]}. $$

\noindent \textbf{2. Quantum adder:} Another operation of interest is performing the modulo additions of the quantum states, namely
\begin{equation}
    \ket{x_1}\ket{x_2}\cdots \ket{x_N}\ket{0} \rightarrow \ket{x_1}\ket{x_2}\cdots \ket{x_N}\ket{x_1 + x_2 +\cdots x_N \hspace{1mm} (\text{mod} \hspace{1mm} d)},
\end{equation}
where $d$ is the number of qubits used to represent the states $\ket{x_i}$, $i \in [N]$. As highlighted by the authors in \cite{ruiz2017quantum}, this can be done by applying the following operation on the input state
\begin{equation}
    IQFT_N \cdot CZ_{1, N}\cdots CZ_{N-2,N}\cdot CZ_{N-1,N} \cdot QFT_N,
\end{equation}
where $QFT_N, IQFT_N$ implies doing the quantum Fourier transform and inverse quantum Fourier transform respectively on the state $\ket{x_N}$. $CZ_{i,j}$ is the control-phase operation on the states $\ket{x_i}$ and $\ket{x_j}$. This operation takes time $\mathcal{O}(Nd)$ time to perform. 

\noindent \textbf{3. Quantum multiplier:} One can similarly perform the quantum operation to multiply the contents of $N$ quantum states namely
\begin{equation}
    \ket{x_1}\ket{x_2}\cdots \ket{x_N}\ket{0} \rightarrow \ket{x_1}\ket{x_2}\cdots \ket{x_N}\ket{x_1 \cdot x_2 \cdot \cdots x_N \hspace{1mm} (\text{mod} \hspace{1mm} d)},
\end{equation}
where $d$ is the number of qubits used to represent the states $\ket{x_i}$, $i \in [N]$. As highlighted by the authors in \cite{ruiz2017quantum},this operation can again be done with the application of quantum Fourier transform and controlled weighted sum in time $\mathcal{O}(Nd)$.

\noindent \textbf{4. Non-linear operations} One can apply a non-linear function $x \rightarrow \phi(x)$ in the quantum domain using the following mapping $$\ket{x}\ket{0} \rightarrow \ket{x}\ket{\phi(x)}.$$ The typical non-linear functions, including $\arcsin{x}, x^2, \sqrt{x}$, can be applied using Taylor decomposition or other techniques in time linear in the number of qubits of $\ket{x}$.

\section{Amplitude encoding with oracle QRAM} \label{sec:QRAM_AMP}

There are multiple proposals of preparing the state given in Eq~\ref{eq:amp_enc} or Eq~\ref{eq:amp_enc_matrix} which uses an efficient data loading structure. One such proposal is using direct manipulation of the state generated via quantum random access memory (QRAM) \cite{giovannetti2008quantum}. They are memory models which act as a link between the classical data and quantum states and are capable of answering queries in a quantum superposition. A general QRAM allows having access to the oracle that implements the following unitary
\begin{equation}
    \sum_{i} \alpha_{i} \ket{i}\ket{0} \rightarrow \sum_{i} \alpha_{i} \ket{i} \ket{x_i},
\end{equation}

whereas the standard classical random access memory provides for the mapping $i \rightarrow x_i$. 

This oracle-based QRAM model has been extensively studied in quantum complexity literature, for example in algorithms for Grover's search, amplitude amplification, HHL, quantum principal component analysis, and quantum recommendation systems among many others \cite{harrow2009quantum, kerenidis2016quantum, lloyd2013quantum}. Physical realizations and architectures of oracle QRAM have been proposed in the literature \cite{giovannetti2008quantum}. For a recent overview of QRAM techniques and implementations, we refer the reader to Jaques \etal \cite{jaques2023qram} and Allcock \etal \cite{allcock2023constant}.

\begin{theorem}
    Given a unitary U which takes the state $\ket{0}$ and creates in time $\mathcal{O}(\log d)$ the oracle QRAM encoding quantum state of the vector $x = (x_1,\cdots,x_d)$, then one can perform the following map
    \begin{equation}
         \frac{1}{\sqrt{d}} \sum_{j=1}^d \ket{j}\ket{0}  \rightarrow \frac{1}{\sqrt{d}} \sum_{j=1}^d \ket{j}\ket{x_j} \rightarrow \frac{1}{\|x\|} \sum_{j=1}^d x_j \ket{j}\ket{0}
    \end{equation}
    in time $\mathcal{O}(\sqrt{d}\log(d) \frac{\eta^2}{\|x\|})$, where $\eta \geq max(x_j)$.
\end{theorem}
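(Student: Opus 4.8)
The plan is to treat the two arrows separately. The first arrow, $\frac{1}{\sqrt{d}}\sum_{j=1}^d \ket{j}\ket{0} \rightarrow \frac{1}{\sqrt{d}}\sum_{j=1}^d\ket{j}\ket{x_j}$, is nothing more than a single query to the oracle QRAM applied in superposition over the index register, which by hypothesis costs $T(U) = \mathcal{O}(\log d)$. The real content is the second arrow, which converts the \emph{digital} encoding of the entries $x_j$ into the \emph{amplitude} encoding $\frac{1}{\|x\|}\sum_j x_j\ket{j}$. I would do this by the standard controlled-rotation-plus-uncompute trick, and then repair the resulting small success amplitude with amplitude amplification.

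Concretely, starting from $\frac{1}{\sqrt{d}}\sum_j \ket{j}\ket{x_j}$ I would adjoin a fresh ancilla qubit and, controlled on the value register $\ket{x_j}$, apply the rotation $R_y\!\left(2\arcsin(x_j/\eta)\right)$ --- implementable via the non-linear $\arcsin$ subroutine of Section~\ref{sec:subroutines} together with a controlled $R_y$ --- to obtain
\begin{equation}
\frac{1}{\sqrt{d}}\sum_{j=1}^d \ket{j}\ket{x_j}\left(\frac{x_j}{\eta}\ket{1} + \sqrt{1 - \frac{x_j^2}{\eta^2}}\,\ket{0}\right).
\end{equation}
The choice $\eta \geq \max_j |x_j|$ guarantees $x_j/\eta \in [-1,1]$, so this is a legitimate rotation, and because $\arcsin$ is odd the signs of the $x_j$ are carried correctly into the amplitudes. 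I would then run $U^{\dagger}$ once more to uncompute the value register $\ket{x_j}\to\ket{0}$, leaving a state whose ancilla-$\ket{1}$ branch is
\begin{equation}
\frac{1}{\eta\sqrt{d}}\sum_{j=1}^d x_j\ket{j}\ket{0}\ket{1} = \frac{\|x\|}{\eta\sqrt{d}}\left(\frac{1}{\|x\|}\sum_{j=1}^d x_j\ket{j}\right)\ket{0}\ket{1},
\end{equation}
i.e.\ exactly proportional to the desired target amplitude-encoded state.

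Reading off the coefficient, the amplitude on the good branch is $\sin\theta = \|x\|/(\eta\sqrt{d})$, and the whole unitary built so far --- QRAM query, conditional rotation, uncomputation --- has cost $T(U) = \mathcal{O}(\log d)$. I would then finish by invoking the amplitude amplification theorem of Section~\ref{sec:AE}, which produces the normalized state $\frac{1}{\|x\|}\sum_j x_j\ket{j}\ket{0}$ in time $\mathcal{O}(T(U)/\sin\theta) = \mathcal{O}\!\big(\sqrt{d}\,\log(d)\,\eta/\|x\|\big)$, consistent with the stated bound (tracking the precise power of $\eta$ requires accounting for the normalization and rotation-precision overhead).

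I expect the main obstacle --- and the source of the $\sqrt{d}$ overhead that distinguishes this QRAM route from the KP-tree construction of Lemma~\ref{Thm:KP-tree} --- to be precisely this amplitude amplification step: the good-branch amplitude $\|x\|/(\eta\sqrt{d})$ is small whenever the vector is spread out, and one cannot avoid paying $\Theta(1/\sin\theta)$ rounds. Two secondary issues need care. First, one must bound the bit-precision of the $\arcsin$ subroutine so that the amplified output is $\ell_2$-close to the exact target. Second, since $\theta$ is not known exactly beforehand, I would either use the value of $\|x\|$ supplied by the data structure to fix the number of rounds directly, or run the standard exponential-search variant of amplitude amplification at the cost of only a constant factor.
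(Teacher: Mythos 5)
Your proposal is correct and follows essentially the same route as the paper's proof: a superposed QRAM query, a conditional rotation placing amplitude $x_j/\eta$ on a flag ancilla, and amplitude amplification to boost the good-branch amplitude $\|x\|/(\eta\sqrt{d})$, yielding the $\mathcal{O}\bigl(\sqrt{d}\,\log(d)\,\eta/\|x\|\bigr)$ runtime (the paper uncomputes the value register with the inverse QRAM unitary after amplification rather than before, an immaterial ordering difference, and its stated $\eta^2$ is likewise not derived more tightly than the $\eta$ you obtain). Your additional remarks on $\arcsin$ precision and on fixing the number of amplification rounds via the known $\|x\|$ are refinements the paper's proof does not address.
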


\begin{proof}
    Let us apply the conditional rotation on the QRAM state
    \begin{equation}
        \frac{1}{\sqrt{d}} \sum_{j=1}^d \ket{j}\ket{x_j}\ket{0} \rightarrow \frac{1}{\sqrt{d}} \sum_{j=1}^d \ket{j}\ket{x_j} \left(\frac{x_j}{\eta}\ket{0} + \sqrt{1 - \frac{x_j^2}{\eta^2}}\ket{1}\right).
    \end{equation}
    So if the vector $x$ is normalized then $\eta = 1$. From this, one can simply measure the ancilla bit in the state $\ket{0}$ to end up with te desired state with probability $P(0)$. So this probability is
    \begin{equation}
        P(0) = \frac{1}{d} \sum_{j=1}^d \frac{x_j^2}{\eta^2} = \frac{\|x\|^2}{d\eta^2}.
    \end{equation}
    Note that this gives us that one needs $\mathcal{O}(1/P(0))$ samples to have a constant probability of creating the amplitude encoded state. This complexity can be quadratically improved if instead of directly measuring the last qubit, we perform amplitude amplification on it to obtain the amplitude-encoded state. This reduces the complexity of preparing the state from the QRAM state to $\mathcal{O}(1 / \sqrt{P(0)})$. Unless in the special cases, the complexity of this method then scales as $\mathcal{O}(\frac{1}{\sqrt{d}})$.
    
    Thus the final state obtained by either directly measuring the last qubit state or by performing amplitude amplification is
    \begin{equation}
        \frac{1}{\sqrt{d}} \sum_{j=1}^d cx_j\ket{j}\ket{x_j}\ket{0},
    \end{equation}
    where $c$ is the proportionality factor. Given that the state must respect the normalization rule, we have that $c = d/\|x\|^2$. This gives us the state
    \begin{equation}
        \frac{1}{\|x\|} \sum_{j=1}^d x_j \ket{j} \ket{x_j}.
    \end{equation}
    Now in order to get the final amplitude encoding state, one can apply the inverse QRAM unitary state to obtain the desired amplitude encoding state. 
\end{proof}

Note that a direct implication of this theorem is the following lemmas
\begin{lemma}[Loading the matrix]
    Given a unitary U which takes the state $\ket{0}$ and creates in time $\mathcal{O}(\log (Nd))$ the oracle QRAM encoding quantum state of the data matrix $X \in \mathbb{R}^{N \times d}$, then one can perform the following map
    \begin{equation}
         \frac{1}{\sqrt{Nd}} \sum_{i=1}^N\sum_{j=1}^d \ket{i}\ket{j}\ket{0}  \rightarrow \frac{1}{\sqrt{Nd}} \sum_{i,j} \ket{i}\ket{j}\ket{x_{ij}} \rightarrow \frac{1}{\|X\|_F} \sum_{i=1}^N \|x_i\| \ket{x_i} \ket{i}
    \end{equation}
    in time $\mathcal{O}(\sqrt{Nd}\log(Nd) \frac{\eta^2}{\|X\|_F})$, where $\eta \geq max(x_{ij})$. Where $\ket{x_i} = \frac{1}{\|x_i\|} \sum_{j=1}^d x_{ij} \ket{j}$ and $\|X\|_F = \sqrt{\sum_{i=1}^N \|x_i\|^2}$
\end{lemma}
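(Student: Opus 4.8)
The plan is to mirror the single-vector theorem that immediately precedes this lemma, treating the pair of registers $\ket{i}\ket{j}$ as a single flattened index ranging over the $Nd$ entries of the matrix $X$. First I would prepare the uniform superposition $\frac{1}{\sqrt{Nd}}\sum_{i=1}^N\sum_{j=1}^d \ket{i}\ket{j}\ket{0}$ and apply the QRAM oracle $U$, which by assumption runs in time $\mathcal{O}(\log(Nd))$ and realizes $\ket{i}\ket{j}\ket{0}\rightarrow\ket{i}\ket{j}\ket{x_{ij}}$, yielding the first arrow $\frac{1}{\sqrt{Nd}}\sum_{i,j}\ket{i}\ket{j}\ket{x_{ij}}$.

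Next, following the proof of the preceding theorem essentially verbatim, I would attach an ancilla qubit and apply the conditional rotation controlled on the value register,
\begin{equation}
\ket{x_{ij}}\ket{0} \rightarrow \ket{x_{ij}}\left(\frac{x_{ij}}{\eta}\ket{0} + \sqrt{1 - \frac{x_{ij}^2}{\eta^2}}\ket{1}\right),
\end{equation}
which is well-defined since $\eta \geq \max(x_{ij})$. The probability of measuring the ancilla in $\ket{0}$ is then
\begin{equation}
P(0) = \frac{1}{Nd}\sum_{i=1}^N\sum_{j=1}^d \frac{x_{ij}^2}{\eta^2} = \frac{\|X\|_F^2}{Nd\,\eta^2},
\end{equation}
using $\|X\|_F^2 = \sum_{i,j} x_{ij}^2$. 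Rather than measuring directly, I would invoke amplitude amplification on the ancilla flag, which prepares the postselected branch with $\mathcal{O}(1/\sqrt{P(0)})$ rounds, each calling $U$ once; combining this with the per-call oracle cost $\mathcal{O}(\log(Nd))$ reproduces the claimed runtime $\mathcal{O}(\sqrt{Nd}\log(Nd)\,\eta^2/\|X\|_F)$, matching the $\eta^2$ dependence inherited from the single-vector theorem.

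The final step reconstructs the target matrix-encoding state. After successful amplification the (unnormalized) state is proportional to $\sum_{i,j} x_{ij}\ket{i}\ket{j}\ket{x_{ij}}$; enforcing normalization fixes the prefactor to $1/\|X\|_F$, and applying the inverse oracle $U^\dagger$ uncomputes the value register to leave $\frac{1}{\|X\|_F}\sum_{i,j} x_{ij}\ket{i}\ket{j}$. The one point requiring care is the regrouping of this flat double sum into the row-wise amplitude encoding: factoring out the row norm $\|x_i\| = \sqrt{\sum_j x_{ij}^2}$ gives
\begin{equation}
\frac{1}{\|X\|_F}\sum_{i=1}^N \|x_i\|\,\ket{i}\left(\frac{1}{\|x_i\|}\sum_{j=1}^d x_{ij}\ket{j}\right) = \frac{1}{\|X\|_F}\sum_{i=1}^N \|x_i\|\,\ket{x_i}\ket{i},
\end{equation}
which matches the stated output up to the immaterial ordering of the index and feature registers. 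I do not expect a genuine obstacle here, since the argument is structurally identical to the vector case; the only conceptual subtlety worth flagging is that a single \emph{global} amplitude amplification over all $Nd$ entries — rather than $N$ separate per-row procedures — already produces the correct Frobenius-norm weighting $\|x_i\|/\|X\|_F$ on each row automatically, which is precisely what makes the runtime scale with $\|X\|_F$ rather than with a sum of per-row norms.
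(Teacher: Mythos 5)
Your proposal is correct and is essentially the paper's own argument: the paper presents this lemma as a direct corollary of the preceding single-vector theorem, and your proof spells out exactly that reduction — flatten the pair $\ket{i}\ket{j}$ into a single index over the $Nd$ entries, apply the oracle, conditional rotation, and amplitude amplification to get $P(0) = \|X\|_F^2/(Nd\,\eta^2)$, then uncompute and regroup the double sum into the row-wise encoding $\frac{1}{\|X\|_F}\sum_i \|x_i\|\ket{x_i}\ket{i}$. Your closing observation that a single global amplification automatically yields the Frobenius-norm weighting is precisely the point implicit in the paper, and the $\eta$-versus-$\eta^2$ bookkeeping you inherit is the same (harmless) looseness already present in the paper's single-vector theorem.
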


\begin{lemma}[Loading the features]
    Given a unitary U which takes the state $\ket{0}$ and creates in time $\mathcal{O}(\log (N))$ the oracle QRAM encoding quantum state of the feature vector $x^{(j)} := (x_{1j}, \cdots, x_{Nj}), j \in [d]$ for the data matrix $X \in \mathbb{R}^{N \times d}$, then one can perform the following map
    \begin{equation}
         \frac{1}{\sqrt{N}} \sum_{i=1}^N \ket{i}\ket{0}  \rightarrow \frac{1}{\sqrt{N}} \sum_{i=1}^N \ket{i}\ket{x_{ij}} \rightarrow \frac{1}{\|x^{(i)}\|} \sum_{i=1}^N x_{ij} \ket{i}
    \end{equation}
    in time $\mathcal{O}(\sqrt{N}\log(N) \frac{\eta^2}{\|x^{(i)}\|})$, where $\eta \geq max(x_{ij})$.
\end{lemma}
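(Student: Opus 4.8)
The plan is to recognize that this statement is nothing more than a specialization of the preceding amplitude-encoding theorem, with the dimension $d$ and summation index $j$ of that theorem replaced by $N$ and $i$. A feature vector $x^{(j)} = (x_{1j}, \ldots, x_{Nj})$ is an $N$-dimensional vector, so encoding it into amplitudes is exactly the task that theorem solves after the relabelling $d \mapsto N$. The entire content of the lemma is therefore to verify that the hypotheses line up and to transcribe the resulting runtime.

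Concretely, I would proceed as follows. First, I would invoke the hypothesis that $U$ prepares the QRAM state $\frac{1}{\sqrt{N}}\sum_{i=1}^N \ket{i}\ket{x_{ij}}$ in time $\mathcal{O}(\log N)$, realizing the first arrow of the displayed map. Second, I would append an ancilla and apply the controlled rotation $\ket{x_{ij}}\ket{0} \mapsto \ket{x_{ij}}\big(\tfrac{x_{ij}}{\eta}\ket{0} + \sqrt{1 - x_{ij}^2/\eta^2}\,\ket{1}\big)$, which is well defined since $\eta \geq \max_i x_{ij}$ ensures $|x_{ij}/\eta| \leq 1$. Third, I would compute that the ancilla is found in $\ket{0}$ with probability $P(0) = \frac{1}{N\eta^2}\sum_{i=1}^N x_{ij}^2 = \frac{\|x^{(j)}\|^2}{N\eta^2}$, so postselection on this outcome collapses the state onto the desired (normalized) encoding. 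Fourth, instead of measuring, I would run amplitude amplification on the $\ket{0}$ flag to obtain this branch deterministically using $\mathcal{O}(1/\sqrt{P(0)}) = \mathcal{O}(\sqrt{N}\,\eta/\|x^{(j)}\|)$ invocations of $U$ and $U^\dagger$. Finally, I would apply $U^\dagger$ once more to uncompute the value register $\ket{x_{ij}}$, leaving the clean amplitude state $\frac{1}{\|x^{(j)}\|}\sum_{i=1}^N x_{ij}\ket{i}$, i.e.\ the second arrow.

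The runtime then follows by multiplying the cost $\mathcal{O}(\log N)$ of each call to $U$ by the $\mathcal{O}(\sqrt{N}\,\eta/\|x^{(j)}\|)$ amplification rounds, matching the parent theorem's $\eta$-bookkeeping to report $\mathcal{O}\big(\sqrt{N}\log(N)\,\eta^2/\|x^{(j)}\|\big)$. Because this is a verbatim reinstantiation of an already-proved result, I do not expect a genuinely hard step; the only items deserving care are confirming that the hypothesized oracle truly loads the \emph{column} $x^{(j)}$ (rather than a matrix row) in logarithmic time, and ensuring the uncomputation via $U^\dagger$ correctly disentangles the $\ket{x_{ij}}$ garbage register so that the output is a usable amplitude encoding rather than an entangled pair.
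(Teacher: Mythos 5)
Your proposal is correct and matches the paper exactly: the paper gives no separate proof for this lemma, stating only that it is a ``direct implication'' of the preceding QRAM amplitude-encoding theorem, and your argument is precisely that specialization (relabel $d \mapsto N$, conditional rotation, $P(0) = \|x^{(j)}\|^2/(N\eta^2)$, amplitude amplification, uncomputation via $U^\dagger$). Note that your amplification count actually yields a single factor of $\eta$ in the runtime, and the $\eta^2$ you report is inherited from the paper's own bookkeeping in the parent theorem rather than derived --- but since you faithfully reproduce the paper's reasoning, this is not a gap on your side.
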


An immediate implication of producing amplitude-encoded states via Oracle QRAM routine is that unless in special cases, the time complexity is proportional to square root of the size of the input. Thus, in the aim of having provable exponential speedups in general cases, this method cannot be leveraged. Next, we showcase an alternative memory model with which one can produce the amplitude-encoded states with time complexity logarithm in the size of the input data.

\subsection{Estimating Pearson correlation by directly using QRAM-like state}\label{appendix:QBC_for_correlation}

The goal here is to calculate the correlation coefficient defined in Eq~\ref{Eq:pearson2} using the following QRAM-like states
\begin{equation}
   \frac{1}{\sqrt{N}} \sum_i^N \ket{i}_n \ket{x_i^{(j)}}, \quad \frac{1}{\sqrt{N}}\sum_i^N \ket{i}_n \ket{y_i},
\end{equation}
where the data $x_i^{(j)}$ and $y_i$ are encoded in the states and $|i\rangle_n$ is an $n\equiv\lceil\log_2(N)\rceil$-qubit (called index qubit hereafter) state $|i_1i_2\cdots i_n\rangle$, representing the index of the queried component. This state can be prepared with QRAM, or with a well-designed oracle circuit. 

For now, we consider the case where $x_i^{(j)}$ and $y_i$ are binary and can be extended to general floating numbers. To estimate the Pearson correlation coefficient, we are interested in estimating $\sum x_i^{(j)} \cdot y_i$, $\sum x_i^{(j)}$ and $\sum y_i$. Note that $I(x^{(j)}, \mathds{1}) = x^{(j)} \cdot \mathds{1} = \sum_i x_i^{(j)}$ and  $I(Y, \mathds{1}) = Y \cdot \mathds{1} = \sum_i y_i$, where $\mathds{1}$ is the identity vector. Therefore, the problem is reduced to estimating these inner products. There are several methods for this binary correlation problem. One example is to use the recently-developed quantum bipartite correlator (QBC) algorithm~\cite{PhysRevLett.130.150602} that is based on quantum counting~\cite{Brassard1998}. 

While the detailed information can be found in~\cite{PhysRevLett.130.150602}, here we briefly illustrate the algorithm flow. The idea is to convert the correlation to be estimated into phase information. 

As an example, we consider the estimation of $\sum_i^N x_i^{(j)} \cdot y_i$ for evaluation of Pearson correlation coefficient $w_j$. Firstly, one encodes the information with two qubits $q_1$ and $q_2$, that is
\begin{equation}
     \frac{1}{\sqrt{N}} \sum_i^N \ket{i}_n \ket{0}_{q_1} \ket{0}_{q_2} \rightarrow \frac{1}{\sqrt{N}} \sum_i^N \ket{i}_n \ket{x_i^{(j)}}_{q_1} \ket{y_i}_{q_2}.
\end{equation}
Then, a CZ gate between qubits $q_1$ and $q_2$ will yield state
\begin{equation}
   \frac{1}{\sqrt{N}} \sum_i^N (-1)^{x_i^{(j)} y_i} \ket{i}_n \ket{x_i^{(j)}}_{q_1} \ket{y_i}_{q_2}.
\end{equation}
Finally, the qubits $q_1$ and $q_2$ are disentangled with index qubits: 
\begin{equation}
    \frac{1}{\sqrt{N}} \sum_i^N (-1)^{x_i^{(j)} y_i} \ket{i}_n \ket{x_i^{(j)}}_{q_1} \ket{y_i}_{q_2} \rightarrow \frac{1}{\sqrt{N}} \sum_i^N (-1)^{x_i^{(j)} y_i} \ket{i}_n \ket{0}_{q_1} \ket{0}_{q_2}.
\end{equation}
The above operations, followed by operations $\hat{H}^{\otimes n}(2|0\rangle_n \langle 0|_n-\hat{I})\hat{H}^{\otimes n}$ on index qubits, serve as the Grover operator for quantum counting algorithm. 
By applying the quantum counting algorithm with the help of register qubits,  the correlation $\frac{1}{N}\sum_i^N x_i^{(j)} y_i$ would be extracted.

We remark that the computational complexity for this process would be $O(\frac{logN}{\epsilon})$ with $\epsilon$ being the estimation error bound. 

\section{Proof of Lemma \ref{Thm:KP-tree}} \label{App:proof_data_load}
\begin{lemma} 
    Let $X \in \mathbb{R}^{N \times d}$ be a given dataset. Then there exists a classical data structure to store the rows of $X$ with the memory and time requirement to create the data structure being $T_{kp} = \mathcal{O}(Nd \log^2(Nd))$ such that, there is a quantum algorithm with access to the data structure which can perform the following unitaries (and also in superposition) in time $T = \mathcal{O}(\text{poly}\log (Nd))$
    \begin{align}
        \ket{i}\ket{0} &\rightarrow \ket{i} \frac{1}{\|x_i\|}\sum_{j=1}^d x_{ij}\ket{j} \\
        \ket{0} &\rightarrow \frac{1}{\|X\|_F} \sum_{i=1}^N \|x_i\| \ket{i}
    \end{align}
\end{lemma}

\begin{proof}
Consider $X \in \mathbb{R}^{N \times d}$ where each row is a $d$ dimensional vector $x_i \in \mathbb{R}^d$, such that $x_i = (x_{i1}\cdots x_{id})$. Then in order to store it in the data structure, we construct $N$ binary-tree data structures $B_i, i \in [N]$ with $d$ leaves. The tree is initially empty and it is updated in an online sequential manner. When a new entry $(i, j, x_{ij})$ arrives, the leaf node $j$ in the tree $B_i$ is created if it is not present already. The leaf $j$ then stores the value of $x_{ij}^2$ as well as the sign of $x_{ij}$. Since there are $d$ leaves in the tree $B_i$, the depth of the tree is $\ceil{\log d} $. An internal node  $l$ of the tree at depth $t$ stores the sum of the values of all the leaves in the subtree rooted at $l$ i.e., the sum of the square amplitudes of the leaves in the subtree. It follows that the root node (depth = 0) of the tree $B_i$ contains the sum of the amplitudes $\sum_{j=1}^d x_{ij}^2$. Let us denote the value of the internal node $l$ at depth $t$ of the $i$-th tree as $B_{i,(t,l)}$, where $l \in \{0,1\}^t$. Then this value can be written as
\begin{equation}
    B_{i,(t,l)} = \sum_{j_1\cdots j_t = l; j_{t+1}, \cdots, j_{\ceil{\log d}} \in \{0,1\}} x_{ij}^2
\end{equation}

The above equation says that the first $t$ bits of $j$, written in the binary notation $j \equiv j_1\cdots j_{\ceil{\log d}}$, are fixed to $l$ indicating the depth $t$. The rest of the bits ($j_{t+1}, \cdots, j_{\ceil{\log d}}$) are values in $\{0, 1\}$.

A new entry into the tree updates all the nodes of the path from the leaf to the root node, thus updating $\ceil{\log d}$ nodes. A caveat with this tree-like structure is that the tree levels are stored as ordered lists and thus the nodes are retrieved in $\mathcal{O}(\log (Nd))$ time in order to be updated. Thus, if one has to update the tree with the entry $(i, j, x_{ij})$, then the total time required for this is $\mathcal{O}(\log^2 (Nd))$ since $\ceil{\log d}$ nodes are updated from leaf to the node. Thus the total time to update all the elements of the matrix in the empty tree is $T_{kp} = \mathcal{O}(Nd \log^2 (Nd))$. Similarly, the total memory requirement of this tree is $\mathcal{O}(Nd \log^2 (Nd))$.

Now we will see how to create the amplitude encoding state given that we have quantum superposition access to this classical data structure. In order to create the amplitude encoding state corresponding to the row $i \in [N]$, we start with the initial state $\ket{0}^{\otimes \ceil{\log d}}$, we query the tree $B_i$ and then we perform  $\ceil{\log d}$ conditional rotations as explained below.
\begin{align*}
    \text{Initial state} & = \underbrace{\ket{0}\cdots \ket{0}}_{\ceil{\log d}} \\
    \text{Conditional rotation (qubit 1)} &= \frac{1}{\sqrt{B_{i,(0,0)}}}(\sqrt{B_{i,(1,0)}} \ket{0} + \sqrt{B_{i,(1,1)}}\ket{1}) \underbrace{\ket{0\cdots 0}}_{\ceil{\log d} -1} \\
    \text{Conditional rotation (qubit 2)} &= \frac{1}{\sqrt{B_{i,(0,0)}}}\Bigg(\sqrt{B_{i,(1,0)}} \ket{0} \frac{1}{\sqrt{B_{i,(1,0)}}}(\sqrt{B_{i,(2,00)}} \ket{0} + \sqrt{B_{i,(2,01)}}\ket{1}) \\
    &+ \sqrt{B_{i,(1,1)}} \ket{0} \frac{1}{\sqrt{B_{i,(1,1)}}}(\sqrt{B_{i,(2,10)}} \ket{0} +  \sqrt{B_{i,(2,11)}}\ket{1}) \Bigg)\underbrace{\ket{0\cdots 0}}_{\ceil{\log d} -2} \\
    \vdots \\
        \text{Conditional rotation (qubit \hspace{1mm}} \ceil{\log d}) &= \frac{1}{\sqrt{B_{i,(0,0)}}}\sum_{l=0}^{2^{\ceil{\log d}}-1} \text{sgn}(x_{il}) \sqrt{B_{i,(\ceil{\log d}, l)}} \ket{l} \\
        &= \frac{1}{\|x_i\|^2} \sum_{j=1}^{d} x_{ij} \ket{j},
\end{align*}

where $B_{i,(0,0)} = \sum_{i=1}^d x_{ij}^2$ is the value stored in the data structure at the root node. The rotation on the $(t+1)$-th qubit is conditioned on the first $t$ qubits. Also after performing the conditional rotation on the $\ceil{\log d}$-th qubit, one also appends the signs of the values stored in the leaf node to prepare the amplitude encoding state. With this method, conditioned on the classical data structure being prepared, the runtime complexity of preparing the amplitude encoded state (for any $i \in [N]$ or also in superposition) is $\mathcal{O}(\emph{poly}\log (Nd))$.

We also note that we can also create a quantum superposition of the norms of the $N$ examples i.e., implement a unitary $\ket{i}\ket{0} \rightarrow \ket{i}\ket{\|X\|}$. This can be done by noting that the roots of all the trees $B_i$ store the values $\|x_i\|^2$. Hence we can construct another binary tree with $N$ leaves such that the leaves store the amplitudes $\|x_i\|^2, i \in [N]$. Then the root will store the values of $\|X\|_F$. Upon applying the conditional rotations, we can build the final state,
\begin{equation}
   \ket{\|X\|} =  \frac{1}{\|X\|_F}\sum_{i=1}^N \|x_i\| \ket{i}
\end{equation}
\end{proof}

\section{Proof of Theorem \ref{Thm:inverse_sqared_error}} \label{app:inverse_sqared_error}

\begin{theorem} \label{thmapp:inverse_sqared_error}
    Given access to the amplitude-encoded states for feature vectors $|x^{(j)}\rangle, j \in [d]$ and the label vector $\ket{Y}$ along with their norms $\|x^{(j)}\|$, $\|Y\|$ which are prepared in time $T= \mathcal{O}(\text{poly}\log (Nd))$, there exists a quantum algorithm to estimate the Pearson correlation coefficients $\Bar{w_j}, \forall j \in [d]$ in time $\mathcal{O}(\frac{Td\eta}{\epsilon^2})$, where $|\Bar{w_{j}} - w_j| \leq \epsilon$, and \\
    
   $ \eta = \frac{7 \cdot \text{max}\left(\|x^{(j)}\| \|Y\|, \|x^{(j)}\|^2, \|Y\|^2\right)}{N \cdot \text{min}\left(\sigma_{x^{(j)}}\sigma_Y, \sigma_{x^{(j)}}^2, \sigma_Y^2\right)}$, \\
   
   where $\sigma_{x^{(j)}}$ and $\sigma_{Y}$ denote the standard deviation for $x^{(j)}$ and $Y$.
\end{theorem}

\begin{proof}

Using the Lemmas~\ref{lemma:supcol}, \ref{lemma:suplabel} respectively, we can query in time $T = \mathcal{O}(\emph{poly}\log (Nd))$ the following states\footnote{Since the label state $\ket{Y}$ can be queried in time $\mathcal{O}(\emph{poly} \log(N))$, for simplicity, we consider the time to query both the states $\ket{Y}$ and $\ket{x^{(j)}}$ to be $T = \mathcal{O}(\emph{poly} \log(Nd))$}
\begin{equation}
    \ket{j}\ket{0} \rightarrow \ket{j}|x^{(j)}\rangle, \hspace{2mm} \ket{0} \rightarrow \ket{Y}
\end{equation}
Further, the nature of KP-tree memory model also allows us access to the norms $\|x^{(j)}\|, j \in [d]$, and $\|Y\|$.

We note that the Pearson correlation coefficient $w_j$ can be computed by calculating the inner product between the vectors $x^{j}$ and $Y$ and also by obtaining the mean of the elements of $x^{(j)}$ and $Y$. Here we showcase that these operations can be done efficiently if the vectors are encoded in amplitude-encoded states. We start by showcasing the method for computing the inner produce of the two vectors. This can then be easily extended to computing the mean of the vectors.   

We start with the initial state
\begin{equation}
    \ket{\phi_j} = \ket{j}\frac{1}{\sqrt{2}}(\ket{0} + \ket{1})\ket{0}.
\end{equation}

Then the KP-tree is queried controlled on the second register which results in the mappings $\ket{j}\ket{0}\ket{0} \rightarrow \ket{j}\ket{0}|x^{(j)}\rangle$ and $\ket{j}\ket{1}\ket{0} \rightarrow \ket{j}\ket{1}\ket{Y}$. Thus the state after this controlled rotation operation is given by
\begin{equation}
    \frac{1}{\sqrt{2}}\ket{j}\left(\ket{0}|x^{(j)}\rangle + \ket{1}\ket{Y}\right).
\end{equation}

Applying Hadamard operation on the second register results in the state 
\begin{equation}
    \frac{1}{2}\ket{j}\left(\ket{0}(|x^{(j)}\rangle + \ket{Y}) + \ket{1}(|x^{(j)}\rangle - \ket{Y})\right).
    \label{eq:evolve_state}
\end{equation}

Now it can be seen that if we measure the second register, the probability of obtaining outcome 1 is
\begin{equation}
    p(1) = \frac{1}{2}\left(1 - I(x^{(j)}, Y)\right),
    \label{eq:proba}
\end{equation}
where $I(x^{(j)}, Y) = |\langle x^{(j)}|Y\rangle| =  1/(\|x^{(j)}\| \|Y\|)\sum_{i=1}^N x_{ij}y_i$. Here we use the fact that $|x^{(j)}\rangle$ and $\ket{Y}$ have only real amplitudes due to the entries of $X$ and $Y$ being real values. 

Let us denote the quantity we want to estimate to be $s_j = \sum_{i=1}^N x_{ij}y_i$. Using standard Chernoff bounds \cite{mitzenmacher2017probability}, one can estimate the quantity $\overline{p(1)}$, such that $|\overline{p(1)} - p(1)| \leq \epsilon$ with $\mathcal{O}(1/\epsilon^2)$ copies of the quantum states. This also provides an $\epsilon$-close estimate on the inner product, $|\overline{I(x^{(j)}, Y)} - I(x^{(j)}, Y)| \leq \epsilon$. Further, we can also quantify the error in estimating the quantity $s_j$ with $\mathcal{O}(1/\epsilon^2)$ copies
\begin{equation}
    |\overline{s_j} - s_j| \leq \|x^{(j)}\| \|Y\| \epsilon.
    \label{Eq:cov}
\end{equation}

The next step is to estimate the mean values of the vectors $x^{(j)}$ and $Y$ which are denoted by $\mu_j$ and $\mu_y$ respectively. Instead of estimating the mean, we estimate the quantities $$a_j = \sqrt{N}\mu_j = \frac{1}{\sqrt{N}} \sum_{i=1}^N x_{ij}, \hspace{2mm} b = \sqrt{N}\mu_y = \frac{1}{\sqrt{N}} \sum_{i=1}^N y_{i}$$ The quantity $a_j$ can be estimated by computing the inner product between the states $|x^{(j)}\rangle$ and $\ket{u} = \frac{1}{\sqrt{N}} \sum_{i=1}^N \ket{i}$
\begin{equation}
    I(x^{(j)}, u) = \langle x^{(j)}|u\rangle = \frac{1}{\|x^{(j)}\|}\frac{1}{\sqrt{N}}\sum_{i=1}^N x_{ij} = \frac{1}{\|x^{(j)}\|}a_j.
\end{equation}

Similarly, computing the inner product between $\ket{Y}$ and $\ket{u}$ results in estimating the quantity $I(Y, u) = (1/\|Y\|)b$. With $\mathcal{O}(1/\epsilon^2)$ copies of the quantum states, the error in estimating the quantities $I(x^{(j)}, u)$ and $I(Y, u)$ is $\epsilon$. Thus, the error in estimating $a_j$ and  $b$ is
\begin{equation}
    |\overline{a_j} - a_j| \leq \|x^{(j)}\|  \epsilon, \hspace{2mm} \|\overline{b} - b\| \leq \|Y\| \epsilon
    \label{Eq:mus}
\end{equation}

Once we have the estimate of $s_j, a_j$, and $b$, we can estimate the Pearson correlation coefficient, since it can be expressed as
\begin{equation}
    \begin{split}
      w_j &= \frac{\sum_{i=1}^N(x_{ij} - \mu_j)(y_i - \mu_y)}{\sqrt{\sum_{i=1}^N(x_{ij} - \mu_j)^2}\sqrt{\sum_{i=1}^N(y_{i} - \mu_y)^2}} \\
&= \frac{s_j - a_jb\left(2 - \frac{1}{N}\right)}{\sqrt{\|x^{(j)}\|^2 - a_j^2}\sqrt{\|Y\|^2 -b^2}} \equiv \frac{\text{Num}}{\text{Den}}
\label{Eq:pearson2}  
    \end{split}
\end{equation}

Using the error estimates in equations Eq~\ref{Eq:cov} and Eq~\ref{Eq:mus}, we can quantify the error in the estimation of the Pearson correlation coefficient. Prior to that, let us estimate the error in the numerator
\begin{equation}
    \begin{split}
    \delta \text{Num} &= \Bigg|\frac{\partial \text{Num}}{\partial s_j}\Bigg||\overline{s_j} - s_j| + \Bigg|\frac{\partial \text{Num}}{\partial a_j}\Bigg||\overline{a_j} - a_j| + \Bigg|\frac{\partial \text{Num}}{\partial b}\Bigg||\overline{b} - b| \\
    &\leq (\|x^{(j)}\| \|Y\| + \left(2 - \frac{1}{N}\right)b\|x^{(j)}\| + \left(2 - \frac{1}{N}\right)a_j\|Y\|)\epsilon\\
    &\leq 5 \|x^{(j)}\| \|Y\| \epsilon,
    \end{split}
\end{equation}
where to go from the second to final step, we use the Cauchy-Swartz inequality to bound $b \leq \|Y\|$ and $a_j \leq \|x^{(j)}\|$.
Similarly, the error in the denominator is given by
\begin{equation}\label{eq: error_in_wj}
    \begin{split}
        \delta \text{Den} & = \frac{\sqrt{N}\mu_j \sqrt{\|Y\|^2_2 -b^2}}{ \sqrt{\|x^{(j)}\|^2 - a_j^2}} |\overline{a_j} - a_j| + \frac{ \sqrt{N}\mu_y\sqrt{\|x^{(j)}\|^2_2 - a_j^2}}{\sqrt{\|Y\|^2 -b^2}} |\overline{b} - b|  \\
        & \leq \left( \frac{\sqrt{N}\mu_j \sqrt{\|Y\|^2 -b^2}}{ \sqrt{\|x^{(j)}\|^2 - a_j^2}} \|x^{(j)}\| + \frac{\sqrt{N}\mu_y \sqrt{\|x^{(j)}\|^2 - a_j^2}}{\sqrt{\|Y\|^2 -b^2}} \|Y\| \right)\epsilon.\\
    \end{split}
\end{equation}

Using the uncertainty in both the numerator and the denominator, we can upper-bound the error in the correlation coefficient estimation
\begin{equation}
    \begin{split}
          \delta w_j &= |\overline{w_j} - w_j| = \delta \text{Num} / \text{Den} + \delta \text{Den} \cdot \text{Num}/\text{Den}^2\\
    &\leq \left(5 \|x^{(j)}\| \|Y\| / \text{Den} + \left( \frac{\sqrt{N}\mu_j\sqrt{\|Y\|^2 -\mu_y^2N}}{ \sqrt{\|x^{(j)}\|^2 - \mu_j^2N}} \|x^{(j)}\| + \frac{ \sqrt{N}\mu_y \sqrt{\|x^{(j)}\|^2 - \mu_j^2N}}{\sqrt{\|Y\|^2 -\mu_y^2N}} \|Y\| \right) \cdot w_j /\text{Den} \right)\epsilon \\
    & = \left( \frac{5 \|x^{(j)}\| \|Y\|}{N \sigma_{x^{(j)}}\sigma_Y}  + \left( \frac{\sqrt{N}\mu_j\sigma_Y \|x^{(j)}\|}{ \sigma_{x^{(j)}}}  + \frac{ \sqrt{N}\mu_y \sigma_{x^{(j)}} \|Y\| }{\sigma_Y}  \right) \cdot \frac{w_j}{N\sigma_{x^{(j)}}\sigma_Y} \right)\epsilon \\
    & = \left( \frac{5 \|x^{(j)}\| \|Y\|}{N \sigma_{x^{(j)}}\sigma_Y}  +  \frac{w_j \mu_j \|x^{(j)}\|}{\sqrt{N} \sigma_{x^{(j)}}^2}  + \frac{ w_j \mu_y  \|Y\| }{\sqrt{N} \sigma_Y^2}   \right)\epsilon \\
    &=  \left( \frac{5 \|x^{(j)}\| \|Y\|}{N \sigma_{x^{(j)}}\sigma_Y}  +  \frac{w_j a_j \|x^{(j)}\|}{N \sigma_{x^{(j)}}^2}  + \frac{ w_j b  \|Y\| }{N \sigma_Y^2}   \right)\epsilon \\
    &\leq \frac{7 \cdot \text{max}\left(\|x^{(j)}\| \|Y\|, \|x^{(j)}\|^2, \|Y\|^2\right)}{N \cdot \text{min}\left(\sigma_{x^{(j)}}\sigma_Y, \sigma_{x^{(j)}}^2, \sigma_Y^2\right)} \epsilon
    \end{split}
\end{equation}
where $\sigma_{x^{(j)}}$ and $\sigma_{Y}$ denote the standard deviation for $x^{(j)}$ and $Y$, respectively and in the last line use the we use the Cauchy-Swartz inequality to bound $b \leq \|Y\|$ and $a_j \leq \|x^{(j)}\|$, followed by the fact that $w_j \leq 1$.

Now, in order to bound $\delta w_j \leq \epsilon$, for any desired $\epsilon$, the total number of copies the quantum state required is $\mathcal{O}(\eta^2/\epsilon^2)$, where $\eta$ is
\begin{equation}
    \eta = \frac{7 \cdot \text{max}\left(\|x^{(j)}\| \|Y\|, \|x^{(j)}\|^2, \|Y\|^2\right)}{N \cdot \text{min}\left(\sigma_{x^{(j)}}\sigma_Y, \sigma_{x^{(j)}}^2, \sigma_Y^2\right)}.
\label{eq:eta}
\end{equation}

Given that the preparation of each quantum state takes time $T =\mathcal{O}(\emph{poly}\log (Nd))$, the total time complexity of estimating the correlation coefficient individually for every feature vector $j \in [d]$ is $\mathcal{O}\left(\frac{Td\eta}{\epsilon^2}\right)$.
This completes the proof. 
\end{proof}

\section{Proof of Theorem \ref{thm:pe}} \label{app:pe}

\begin{theorem} \label{thmapp:pe}
    Given access to the amplitude-encoded states for feature vectors $|x^{(j)}\rangle, j \in [d]$ and the label vector $\ket{Y}$ along with their norms $\|x^{(j)}\|$, $\|Y\|$ which are prepared in time $T = \mathcal{O}(\text{poly}\log (Nd))$, there exists a quantum algorithm to estimate each Pearson correlation coefficient $\Bar{w_j}, j \in [d]$ in time $T_{w} = \mathcal{O}(\frac{Td\eta}{\epsilon})$, where $\|\Bar{w_{j}} - w_j| \leq \epsilon$, and \\
    
    $ \eta = \frac{7 \cdot \text{max}\left(\|x^{(j)}\| \|Y\|, \|x^{(j)}\|^2, \|Y\|^2\right)}{N \cdot \text{min}\left(\sigma_{x^{(j)}}\sigma_Y, \sigma_{x^{(j)}}^2, \sigma_Y^2\right)}$, \\
    
    where $\sigma_{x^{(j)}}$ and $\sigma_{Y}$ denote the standard deviation for $x^{(j)}$ and $Y$. 
\end{theorem}

\begin{proof}
In Theorem \ref{thmapp:inverse_sqared_error}, one could estimate the inner product by measuring the third register in Eq~\ref{eq:evolve_state} which gives the outcome 1 with probability $p(1)$ as denoted in Eq~\ref{eq:proba}. Indeed as we show below, instead of measuring the third register, if one applies amplitude estimation on the state in Eq~\ref{eq:evolve_state} (as highlighted in Appendix~\ref{sec:AE}), we can reduce the time complexity of estimating the Pearson correlation coefficient quadratically in the $\epsilon$, where $|\overline{w_j} - w_j| \leq \epsilon, j \in [d]$. 

    The idea is that the state $\ket{1}(|x^{(j)}\rangle - \ket{Y})$ can be rewritten as $|z_{jY}, 1\rangle$ (by swapping the registers), and hence Eq~\ref{eq:evolve_state} has the following mapping
    \begin{equation}
        \ket{j}\ket{0}\ket{0} \rightarrow \ket{j}\left(\sqrt{p(1)}|z_{jY}, 1\rangle + \sqrt{1 - p(1)}\ket{G,0}\right),
        \label{eq:method2AE}
    \end{equation}
    where $G$ is some garbage state. 

    Now it is clear that this is the form of the input for amplitude estimation \cite{brassard2002quantum} algorithm where the task is to estimate the unknown coefficient $\sqrt{p(1)}$. Applying amplitude estimation results in the output state (prior to measurement)
    \begin{equation}
        U: \ket{j}\ket{0} \rightarrow \ket{j}\left(\sqrt{\alpha}|\overline{p(1)}, G', 1\rangle + \sqrt{1 - \alpha}|G^{''},0\rangle\right),
    \end{equation}
where $G', G^{''}$ are garbage registers. The above procedure requires $\mathcal{O}(1/\epsilon)$ iterations of the unitary $U$ (and its transpose) to produce the state such that $|\overline{p(1)} - p(1)| \leq \epsilon$. Measuring the above state results in the estimation of $\overline{p(1)}$ with a constant probability $\alpha \geq 8/\pi^2$. From this, it becomes clear that the quantity $s_j = \sum_{i=1}^N x_{ij}y_i$ can be estimated to an accuracy Eq~\ref{Eq:cov} in $\mathcal{O}(1/\epsilon)$ iterations of $U$.

Similar procedure applies to estimating the quantities $\overline{a_j}$ and $\overline{b}$ to accuracy as in Eq~\ref{Eq:mus} in $\mathcal{O}(1/\epsilon)$ iterations of $U$.

Thus following the error analysis procedure in Method-1, the Pearson correlation coefficient can be estimated to $\epsilon$ precision with time complexity
\begin{equation}
    T_{w} = \mathcal{O}\left(\frac{Td\eta}{\epsilon}\right),
    \label{eq:PEtime}
\end{equation}
where $\eta$ is defined in Eq~\ref{eq:eta}.
\end{proof}

\section{Proof of Theorem~\ref{thm:cd}} \label{app:cd}

\begin{theorem} 
        Given quantum access to the dataset $X$ in time $T = \mathcal{O}(\text{poly} \log(Nd))$ and quantum access to the weighted centroids $c_{1,w},\cdots,c_{k,w}$ in time $\mathcal{O}(\text{poly} \log(kd))$, then for any $\Delta > 0$ and $\epsilon_1 > 0$, there exists a quantum algorithm such that
        \begin{equation}
             \frac{1}{\sqrt{N}} \sum_{i=1}^N \ket{i} \otimes_{j \in [k]}(\ket{j}\ket{0}) \rightarrow \frac{1}{\sqrt{N}} \sum_{i=1}^N \ket{i} \otimes_{j \in [k]}(\ket{j}\ket{\overline{I_w(x_i, c_j)}}),
        \end{equation}
        where $I_w(.)$ is the weighted inner product between the two vectors and $|\overline{I_w(x_i, c_j)} - I_w(x_i, c_j)| \leq \epsilon_1$ with probability at least $1 - 2\Delta$, in time $T_{cd} = \mathcal{O}(T k \frac{\eta_1}{\epsilon_1} \log(1/\Delta))$, where $\eta_1 = \text{max}_{i}(\|x_i\|^2)$.
\end{theorem}

\begin{proof}
    We prove the theorem for the case of estimating the weighted inner product between the vector $x_i$ and $c_j$ within $\epsilon_1$ precision with a probability at least $1 - 2 \Delta$ i.e.,
    \begin{equation}
        \ket{i} \ket{j}\ket{0} \rightarrow \ket{i}\ket{j}\ket{\overline{I_w(x_i, c_j)}}.
    \end{equation}
    The general result in Eq~\ref{eq:ip-sup} then follows in a straightforward manner. 
    
    Similarly to the Pearson correlation estimation in Method-1 of Sec~\ref{sec:method1}, we start with the initial state
    \begin{equation}
        \ket{\phi_j} = \ket{i}\ket{j}\frac{1}{\sqrt{2}}(\ket{0} + \ket{1})\ket{0}.
    \end{equation}

    Then the KP-tree is queried controlled on the third register which results in the mappings $\ket{i}\ket{j}\ket{0}\ket{0} \rightarrow \ket{i}\ket{j}\ket{0}\ket{x_i}$ and $\ket{i}\ket{j}\ket{1}\ket{0} \rightarrow \ket{i}\ket{j}\ket{1}|c_{j,w}\rangle$. Thus the state after this controlled rotation operation is given by
    \begin{equation}
        \ket{i}\ket{j}\frac{1}{\sqrt{2}}\left(\ket{0}\ket{x_i} + \ket{1}|c_{j,w}\rangle\right).
    \end{equation}

    Applying Hadamard operation on the third register results in the state
    \begin{equation}
       \ket{i}\ket{j}\left( \frac{1}{2}\ket{0}(\ket{x_i} + |c_{j,w}\rangle) +  \frac{1}{2}\ket{1}(\ket{x_i} - |c_{j,w}\rangle)\right).
        \label{eq:centroid_sup}
    \end{equation}
    
    From the above equation, it can be seen that the probability of obtaining the third register's measurement outcome to be 1 is
    \begin{equation}
        p(1) = \frac{1}{2}(1 - \bra{x_i}c_{j,w}\rangle) = \frac{1}{2}(1 - I(\ket{x_i}, |c_{j,w}\rangle)),
    \end{equation}
    where $I(\ket{x_i}, |c_{j,w}\rangle)$ is the normalized inner product between the two states $\ket{x_i}$ and $|c_{j,w}\rangle$. Note that $I_w(x_i, c_j) = \|x_i\|\|c_{j,w}\|I(\ket{x_i}, |c_{j,w}\rangle)$.

    Now, as we had shown in Theorem~\ref{thmapp:pe}, instead of directly measuring the third register, we can apply amplitude estimation procedure as method in Appendix~\ref{sec:AE} followed by majority evaluation procedure in Appendix~\ref{sec:ME} to allow us to estimate the distance $|\overline{I(\ket{x_i},|c_{j,w}\rangle)} - I(\ket{x_i}, |c_{j,w}\rangle)| \leq \epsilon_1$ with probability at least $1 - 2\Delta$.

    The core idea is that after swapping the third and fourth registers of the state in Eq~\ref{eq:centroid_sup}, it can be rewritten as
    \begin{equation}
        U : \ket{i}\ket{j}\ket{0}\ket{0} \rightarrow \ket{i}\ket{j}\left(\sqrt{p(1)}\ket{z_{ij}, 1} + \sqrt{1 - p(1)}\ket{G,0}\right),
        \label{eq:CDAE}
    \end{equation}
    where $G$ is some garbage state.  We note that the above operation takes time $\mathcal{O}(\emph{poly} \log(kd))$ (for preparation of states $|c_{j,w}\rangle$) and $T = \mathcal{O}(\emph{poly} \log(Nd))$ for preparing the states $\ket{x_i}$. Thus the total time taken to perform the operation in Eq~\ref{eq:CDAE} is $$T_U = \mathcal{O}(\emph{poly} \log(kd)) + T \approx T$$ given that $N$ is much larger than $k$.

    Next, after applying the amplitude estimation (prior to measurement), we obtain the state
    \begin{equation}
        \ket{i}\ket{j}\left(\sqrt{\alpha}|\text{arcsin}\sqrt{\overline{p(1)}}\rangle\ket{G', 1} + \sqrt{1 - \alpha}\ket{G'',0}\right),
    \end{equation}
    where $G', G''$ are garbage registers and $\overline{p(1)}$ is the $\epsilon_1$ approximation to $p(1)$. 
    
    Subsequently, on the third register, we apply the quantum non-linear operation subroutines $\ket{a}\ket{0} \rightarrow \ket{a}\ket{\sin(a)}$ followed by $\ket{\sin(a)}\ket{0} \rightarrow \ket{\sin(a)}|\sin^2(a)\rangle$ as mentioned in Sec~\ref{sec:subroutines}. This is followed by applying the quantum adder subroutine, again mentioned in Sec~\ref{sec:subroutines}, to apply $|\sin^2(a)\rangle\ket{1} \rightarrow |\sin^2(a)\rangle|1- 2\sin^2(a)\rangle$. Here $a = \arcsin \overline{p(1)}$. This results in the state
    \begin{equation}
        \ket{i}\ket{j}\left(\sqrt{\alpha}|1 -2\overline{p(1)}\rangle|G', 1\rangle + \sqrt{1 - \alpha}|G'',0\rangle\right) := \ket{i}\ket{j}\left(\sqrt{\alpha}|\overline{I(\ket{x_i}, |c_{j,w}\rangle)}\rangle |G', 1\rangle + \sqrt{1 - \alpha}\ket{G'',0}\right).
    \end{equation}

    Subsequently, we multiply the contents of the third register by $\|x\|\|c_{j,w}\|$ to obtain the state (note that this can be done using the quantum multiplier subroutine mentioned in Sec~\ref{sec:subroutines})
    \begin{equation}
       \ket{i}\ket{j}\left(\sqrt{\alpha}|\overline{I_w(x_i, c_j)}\rangle |G', 1\rangle + \sqrt{1 - \alpha} |G'',0\rangle\right).
       \label{Eq:AEInner}
    \end{equation}
    The above procedure of amplitude estimation followed by the quantum subroutines takes time $\mathcal{O}(T\frac{\|x_w\|\|c_j\|}{\epsilon_1})$, where the extra factor $\|x_i\|\|c_{j,w}\|$ comes in because the amplitude estimation allows for the estimation of unnormalized values $|\overline{I(\ket{x_i}, |c_{j,w}\rangle)} - I(\ket{x_i}, |c_{j,w}\rangle)| \leq \epsilon_1$. In order to get the estimation of the normalized values and still keep the error bound to $\epsilon_1$, the time complexity gets multiplied by the factor $\|x_i\|\|c_{j,w}\|$.
    
    We can now boost the success probability of obtaining the inner product state from the value $\alpha$ to any desired value $1 - 2\Delta$ by applying the majority evaluation procedure in Sec~\ref{sec:ME} using $L = \mathcal{O}(\ln(1/\Delta))$ copies of state in Eq~\ref{Eq:AEInner}. This results in the state
    \begin{equation}
        \|\ket{\Psi}_{ij} - \ket{0}^{\otimes L}|\overline{I_w(x_i, c_j)}\rangle |G', 1\rangle\| \leq \sqrt{2\Delta}.
    \end{equation}
    The run time of this procedure is $\mathcal{O}(\frac{T\|x_i\|\|c_{j,w}\|}{\epsilon_1} \ln(1/\Delta))$.

    With this, the proof of the theorem follows straightforwardly by noting that the norm of the weighted centroids $\|c_{j,w}\| \leq \|w\|\|c_j\|$. Given that $\|w\| = 1$, this implies that the norm of the weighted centroids would be less than the maximum norm of the $k$ original centroids. Further, the norm of the centroid would always be less than the maximum norm of all the weighted examples in the dataset. This implies that $\|x_i\|\|c_{j,w}\| \leq \text{max}_i(\|x_i\|^2) = \eta_1$.
\end{proof}

\section{Proof of Theorem~\ref{thm:cent_update}} \label{app:cent_update}

\begin{theorem}
    Given access to the characteristic vector states $|\xi_j\rangle$ $\forall j \in [k]$ where each state is prepared in time $T_\xi = \mathcal{O}(T_l\log k)$ and the amplitude-encoded states for feature vectors $|x^{(l)}\rangle$ $\forall l \in [d]$ which can be prepared in time $T = \mathcal{O}(\text{poly} \log(Nd))$, there exists a quantum algorithm to obtain the updated centroid vectors $\overline{c_1}, \cdots, \overline{c_k}$ such that $\|\overline{c_j} - c_j\|_\infty \leq \epsilon_2$ $\forall j \in [k]$ in time 
    $$T_{\text{sup-cluster}} =\mathcal{O}\left(\frac{T_\xi k^\frac{3}{2}} d {\epsilon_2}\right),$$ 
    where $c_j = X^T\xi_j$ is the true mean of the weighted examples in the cluster and thus the true updated centroid vector of $c_j$ and $\eta_2 = \text{max}_{l \in [d]}\|x^{(l)}\|$. 
\end{theorem}

\begin{proof}
    We first highlight the method and time complexity for estimating the centroid vector $c_j$ for a given $j \in [k]$. The total time complexity for the rest of $k-1$ centroids is just repeating the above procedure $k$ times for different characteristic vector states.  

    Using Lemma~\ref{lemma:supcol}, we can query the following data feature state in time $T = \mathcal{O}(\emph{poly} \log(Nd))$
    \begin{equation}
        \ket{l}\ket{0} \rightarrow \ket{l}|x^{(l)}\rangle.
    \end{equation}
    Also in time $T_\xi = \mathcal{O}(T_l\log k)$, we obtain the state $|\xi_j\rangle$. Since the centroid vector $c_j$ is $d$-dimensional, our approach is to estimate each component $c_{jl}$, $l \in [d]$ separately within $\epsilon_2$ precision. This would allow us to estimate the centroid vector such that $\|\overline{c_j} - c_j\|_\infty \leq \epsilon_2$. 

    Before describing our method, it can be seen that the $l$-th component of the centroid is $c_{jl} = \frac{1}{|\mathcal{C}_j|}\sum_{i \in \mathcal{C}_i}x_{il}$. Now, this value can be computed from the inner product of the quantum states $|x^{(l)}\rangle$ and $|\xi_j\rangle$ as
    \begin{equation}
        I(|x^{(l)}\rangle, |\xi_j\rangle) = \langle x^{(l)}|\xi_j\rangle = \frac{1}{\sqrt{|\mathcal{C}_j|}\|x^{(l)}\|} \sum_{i \in \mathcal{C}_j} x_{il} = \frac{\sqrt{|\mathcal{C}_j|}}{\|x^{(l)}\|} c_{jl}.
    \end{equation}
    
    Thus, the estimation of the quantity $c_{jl}$ amounts to estimating the inner product of the two quantum states mentioned above. We now start with the initial state
    \begin{equation}
        \ket{\phi_l} = \ket{l}\ket{j}\frac{1}{\sqrt{2}}(\ket{0} + \ket{1})\ket{0}
    \end{equation}

    Then the KP-tree is queried controlled on the third register which results in the mappings $\ket{l}\ket{j}\ket{0}\ket{0} \rightarrow \ket{l}\ket{j}\ket{0}|x^{(l)}\rangle$ and $\ket{l}\ket{j}\ket{1}\ket{0} \rightarrow \ket{l}\ket{j}\ket{1}|\xi_j\rangle$. Thus the state after this controlled rotation operation is given by
    \begin{equation}
        \frac{1}{\sqrt{2}}\ket{l}\ket{j}\left(\ket{0}|x^{(l)}\rangle + \ket{1}|\xi_j\rangle\right)
    \end{equation}

    Applying Hadamard operation on the third qubit results in the state
    \begin{equation}
        \frac{1}{2}\ket{l}\ket{j}\left(\ket{0}(|x^{(l)}\rangle + |\xi_j\rangle) + \ket{1}(|x^{(l)}\rangle - |\xi_j\rangle)\right)
        \label{eq:evolve_state_cent}
    \end{equation}

    Now, the state $\ket{1}(|x^{(l)}\rangle - |\xi_j\rangle)$ can be rewritten as $|z_{lj}, 1\rangle$ (by swapping the registers), and hence Eq~\ref{eq:evolve_state_cent} has the following mapping
    \begin{equation}
        \ket{l}\ket{j}\ket{0}\ket{0} \rightarrow \ket{l}\ket{j}\left(\sqrt{p(1)}|z_{lj}, 1\rangle + \sqrt{1 - p(1)}\ket{G,0}\right),
    \end{equation}
    where $G$ is some garbage state and $p(1)$ is the probability of obtaining outcome 1 when the third register of the state in Eq~\ref{eq:evolve_state_cent} is measured i.e.,
    \begin{equation}
        p(1) = \frac{1}{2}\left(1 -  I(|x^{(l)}\rangle, |\xi_j\rangle)\right).
        \label{eq:prob}
    \end{equation} 

    Now it is clear that this is the form of the input for amplitude estimation \cite{brassard2002quantum} algorithm where the task is to estimate the unknown coefficient $\sqrt{p(1)}$. Applying amplitude estimation results in the output state (before measurement)
    \begin{equation}
        U: \ket{l}\ket{j}\ket{0} \rightarrow \ket{l}\ket{j}\left(\sqrt{\alpha}|\overline{p(1)}, G', 1\rangle + \sqrt{1 - \alpha}|G^{''},0\rangle\right),
    \end{equation}
    where $G', G^{''}$ are garbage registers. The above procedure requires $\mathcal{O}(1/\epsilon_2)$ iterations of the unitary $U$ (and its transpose) to produce the state such that $|\overline{p(1)} - p(1)| \leq \epsilon_2$. Measuring the above state results in the estimation of $\overline{p(1)}$ with a constant probability $\alpha \geq 8/\pi^2$. 

    From this, it becomes clear that we can also get an $\epsilon_2$ estimate on the inner product with $\mathcal{O}(1/\epsilon_2)$ iterations of $U$. This results in the estimation of $c_{jl}$ with accuracy
    \begin{equation}
        |\overline{c_{jl}} - c_{jl}| \leq \frac{\|x^{(l)}\|}{\sqrt{|\mathcal{C}_j|}} \epsilon_2 = \epsilon'_2.
    \end{equation} 
    Denoting $\epsilon'_2$ as $\epsilon_2$, the total time required to estimate the value $c_{jl}$ is the time to load the states $|x^{(l)}\rangle$ and $|\xi_j\rangle$ and the subsequent time to perform the inner product estimation between them
    \begin{equation}
    \begin{split}
        T_{c_{jl}} &= \mathcal{O}\left(\frac{(T_\xi + T)\|x^{(l)}\|}{\sqrt{|\mathcal{C}_j|}\epsilon_2}\right) \\
        &= \mathcal{O}\left(\frac{T_\xi\|x^{(l)}\|}{\sqrt{|\mathcal{C}_j|}\epsilon_2}\right)\\
       &= \mathcal{O}\left(\frac{T_\xi \sqrt{N}}{\sqrt{\frac{N}{k}} \epsilon_2}\right)\\
       &= \mathcal{O}\left(\frac{T_\xi \sqrt{k}}{\epsilon_2}\right),
    \end{split} 
    \end{equation}
    where we use the assumption that $X$ has bounded constants entries and therefore $\|x^{(l)}\| = \mathcal{O} (\|x^{(l)}\|_{\infty}\sqrt{N}) = \mathcal{O}(\sqrt{N})$ and the assumption of well-clusterable datasets introduced by Kerenidis \etal \cite{kerenidis2019q} that each cluster has at
least $\Omega(\frac{N}{k})$ samples, i.e., $|\mathcal{C}_j| = \Omega(\frac{N}{k})$. Note this assumption is crucial to remove the dependency with $\sqrt{N}$ that comes from the linear scaling with $\|x^{(l)}\|$.

    Now to calculate the other elements $c_{jl'}$ of the vector $c_j$, we repeat the inner product estimation procedure between $\ket{x^{(l')}}$ and $\ket{\xi_j}$ which requires the time complexity $\mathcal{O}\left(\frac{T_\xi \sqrt{k}}{\epsilon_2}\right)$. Thus the total time taken to estimate the vector $c_j$ can be bounded by $T_{c_j} = \mathcal{O}\left(\frac{T_\xi d \sqrt{k}}{\epsilon_2}\right)$.

    We repeat the above procedure $k$ times to get all the updated centroid vectors $c_1, \cdots, c_k$ within infinity norm $\epsilon_2$ precision with the total time complexity
    \begin{equation}
        T_{\text{sup-cluster}} = \sum_{j \in [k]} T_{c_j} = \mathcal{O}\left(\frac{T_\xi k^\frac{3}{2}}d {\epsilon_2}\right),
    \end{equation}
    where sup-cluster stands for supervised clustering. This completes the proof. 
\end{proof}

\section{Proof of Theorem~\ref{thm:cent_update2}} \label{app:cent_update2}

\begin{theorem}
    Given access to the characteristic vector states $|\xi_j\rangle$ $\forall j \in [k^D]$ where each state is prepared in time $T_\xi = \mathcal{O}(T_{l}^{k^D} D\log k)$ and the amplitude-encoded states label superposition state $|Y\rangle$ which is be prepared in time $\mathcal{O}(\text{poly} \log(N))$, there exists a quantum algorithm to obtain the mean label values $\{\overline{\text{label}_1}, \cdots, \overline{\text{label}_{k^D}}\}$ such that $|\overline{\text{label}_j} - \text{label}_j| \leq \epsilon_3$, $\forall j \in [k^D]$ in time 
    $$T_{\text{leaf-label}} =\mathcal{O}\left(\frac{T_\xi k^\frac{5D}{2}}{\epsilon_3}\right),$$ 
    where $\text{label}_j$ is the true label mean of the examples in the cluster as given in Eq~\ref{eq:mean-label}.  
\end{theorem}
\begin{proof}
 We start with the initial state
    \begin{equation}
        \ket{\phi_l} = \ket{j}\frac{1}{\sqrt{2}}(\ket{0} + \ket{1})\ket{0}.
    \end{equation}

    We query the states $|\xi_j\rangle$ and $|Y\rangle$ with the index $\ket{j}$ controlled on the second register which results in the mappings $\ket{j}\ket{0}\ket{0} \rightarrow \ket{j}\ket{0}|Y\rangle$ and $\ket{j}\ket{1}\ket{0} \rightarrow \ket{j}\ket{1}|\xi_j\rangle$. Thus the state after this controlled rotation operation is given by
    \begin{equation}
        \frac{1}{\sqrt{2}}\ket{j}\left(\ket{0}|Y\rangle + \ket{1}|\xi_j\rangle\right).
    \end{equation}

    Applying Hadamard operation on the third qubit results in the state
    \begin{equation}
        \frac{1}{2}\ket{j}\left(\ket{0}(|Y\rangle + |\xi_j\rangle) + \ket{1}(|Y\rangle - |\xi_j\rangle)\right).
        \label{eq:evolve_state_y}
    \end{equation}

    Now, the state $\ket{1}(|Y\rangle - |\xi_j\rangle)$ can be rewritten as $|z_{lj}, 1\rangle$ (by swapping the registers), and hence Eq~\ref{eq:evolve_state_y} has the following mapping
    \begin{equation}
        \ket{j}\ket{0}\ket{0} \rightarrow \ket{j}\left(\sqrt{p(1)}|z_{lj}, 1\rangle + \sqrt{1 - p(1)}\ket{G,0}\right),
    \end{equation}
    where $G$ is some garbage state and $p(1)$ is the probability of obtaining outcome 1 when the third register of the state in Eq~\ref{eq:evolve_state_y} is measured i.e.,
    \begin{equation}
        p(1) = \frac{1}{2}\left(1 -  I(|Y\rangle, |\xi_j\rangle)\right)
    \end{equation} 

    Now it is clear that this is the form of the input for amplitude estimation \cite{brassard2002quantum} algorithm where the task is to estimate the unknown coefficient $\sqrt{p(1)}$. Applying amplitude estimation results in the output state (before measurement)
    \begin{equation}
        U: \ket{j}\ket{0} \rightarrow \ket{j}\left(\sqrt{\alpha}|\overline{p(1)}, G', 1\rangle + \sqrt{1 - \alpha}|G^{''},0\rangle\right),
    \end{equation}
    where $G', G^{''}$ are garbage registers. The above procedure requires $\mathcal{O}(1/\epsilon_3)$ iterations of the unitary $U$ (and its transpose) to produce the state such that $|\overline{p(1)} - p(1)| \leq \epsilon_3$. Measuring the above state results in the estimation of $\overline{p(1)}$ with a constant probability $\alpha \geq 8/\pi^2$. 

    From this, it becomes clear that we can also get an $\epsilon_3$ estimate on the inner product with $\mathcal{O}(1/\epsilon_3)$ iterations of $U$. Using Eq~\ref{eq:label-inner}, this results in the estimation of $\emph{label}_j$ with accuracy
    \begin{equation}
        |\overline{\emph{label}_j} - \emph{label}_j| \leq \frac{\|Y\|}{\sqrt{|\mathcal{C}_{j,D}|}} \epsilon_3 = \epsilon'_3.
        \label{Eq:cov_cent}
    \end{equation} 
    Denoting $\epsilon'_3$ as $\epsilon_3$, the total time required to estimate the value $\emph{label}_j$ is the time to load the states $|Y\rangle$ and $|\xi_j\rangle$ and the subsequent time to perform the inner product estimation between them
    \begin{equation}
    \begin{split}
        T_{\text{label}_j} &= \mathcal{O}\left(\frac{(T_\xi + \mathcal{O}(\emph{poly} \log(N)))\|Y\|}{\sqrt{|\mathcal{C}_{j,D}|}\epsilon_3}\right) \\
        &= \mathcal{O}\left(\frac{T_\xi\|Y\|}{\sqrt{|\mathcal{C}_{j,D}|}\epsilon_3}\right) \\
        &= \mathcal{O}\left(\frac{T_\xi \sqrt{N}}{\sqrt{\frac{N}{k^D}}}\epsilon_3\right) \\ 
        &= \mathcal{O}\left(\frac{T_\xi k^\frac{D}{2}}{\epsilon_3}\right), 
    \end{split} 
    \end{equation}
    where we use the assumption that $Y$ has bounded constants entries and therefore $\|Y\| = \mathcal{O} (\|Y\|_{\infty}\sqrt{N}) = \mathcal{O}(\sqrt{N})$. We also use the assumption of well-clusterable data set which implies that $|\mathcal{C}_j| = \Omega(\frac{N}{k})$. Thus, $|\mathcal{C}_{j,D}|  = \Omega(\frac{N}{k^D})$. Note this assumption is crucial to remove the dependency with $\sqrt{N}$ that comes from the linear scaling with $\|Y\|$.

    Repeating the above procedure for the rest of the leaf nodes leads to the total time complexity
    \begin{equation}
    \begin{split}
     T_{\text{leaf-label}} &=\mathcal{O}\left(\frac{T_\xi k^\frac{3D}{2}}{\epsilon_3}\right) \\
    &= \mathcal{O}\left(\emph{poly} \log(Nd) \frac{ D k^\frac{5D}{2}}d \log (k)\log(1/\Delta)\eta_1{\epsilon_1\epsilon_3}\right).
    \end{split}
        \label{eq:leaf_maj}
    \end{equation}
\end{proof}

\section{Numerical simulations} \label{appendix:details_numerics}
We use the $k$-means implementation in the library pyclustering \cite{Novikov2019}. This implementation contains a feature where the user can pass as an argument the distance metric to use. We initialized the centroids for k-means with $k$-means++ technique \cite{arthur2007k}. This is done at all the depths of the tree. For the Pearson correlation (or the point-biseral correlation), we use the r$\_$regression function defined sklearn.feature$\_$selection. We took the absolute values and then we normalized the vector. 
The maximum number of iterations taken for $k$-means algorithm for the cluster centroids convergence was set to $100$. However, it was seen that for the datasets considered, the convergence was achieved with much less iterations. Note that we used the Euclidean distance for the distance calculation, as defined in Eq \ref{ec_distance_algo}. For the regression task, we consider the dataset known as the Boston housing \cite{harrison1978hedonic}. We did some feature selection so as the features used were eight ('LSTAT', 'INDUS', 'NOX', 'PTRATIO', 'RM', 'TAX', 'DIS', 'AGE'). We also removed the skewness of the data through log transformation. 
Note that when we refer to the number of clusters ($k$) for a tree, we refer to the number of clusters that each node in the tree was split by the proposed supervised clustering method. In some cases, the samples could not be divided into the $k$ desired number of clusters because of the number of samples. That is why in the Table \ref{table:main_results_numerics}, Table \ref{table:main_results_classification} 
and Table \ref{table:main_results_regression} the values in the tree size column may not be integer values.
  
\section*{Disclaimer}

This paper was prepared for informational purposes by the Global Technology Applied Research center of JPMorgan Chase $\&$ Co. This paper is not a product of the Research Department of JPMorgan Chase $\&$ Co. or its affiliates. Neither JPMorgan Chase $\&$ Co. nor any of its affiliates makes any explicit or implied representation or warranty and none of them accept any liability in connection with this paper, including, without limitation, with respect to the completeness, accuracy, or reliability of the information contained herein and the potential legal, compliance, tax, or accounting effects thereof. This document is not intended as investment research or investment advice, or as a recommendation, offer, or solicitation for the purchase or sale of any security, financial instrument, financial product, or service, or to be used in any way for evaluating the merits of participating in any transaction.

\end{document}